\newcommand*\samethanks[1][\value{footnote}]{\footnotemark[#1]}
\g@addto@macro\bfseries{\boldmath}
\g@addto@macro\mdseries{\unboldmath}
\g@addto@macro\normalfont{\unboldmath}
\g@addto@macro\rmfamily{\unboldmath}
\g@addto@macro\upshape{\unboldmath}
\renewcommand*{\multicitedelim}{\addcomma\space}
\newcommand{\myhref}[1]{%
  \iffieldundef{doi}
    {\iffieldundef{url}
       {#1}
       {\href{\strfield{url}}{#1}}}
    {\href{http://dx.doi.org/\strfield{doi}}{#1}}%
}
    \newlength{\temp@x}%
    \newlength{\temp@y}%
    \newlength{\temp@w}%
    \newlength{\temp@h}%
    \def\my@coords#1#2#3#4{%
      \setlength{\temp@x}{#1}%
      \setlength{\temp@y}{#2}%
      \setlength{\temp@w}{#3}%
      \setlength{\temp@h}{#4}%
      \adjustlengths{}%
      \my@pdfliteral{\strip@pt\temp@x\space\strip@pt\temp@y\space\strip@pt\temp@w\space\strip@pt\temp@h\space re}}%
      \def\my@pdfliteral#1{\pdfliteral page{#1}}
      \def\adjustlengths{}%
      \def\my@pdfliteral #1{}
      \def\adjustlengths{\setlength{\temp@h}{-\temp@h}\addtolength{\temp@y}{1in}\addtolength{\temp@x}{-1in}}%
    \def\Hy@colorlink#1{%
      \begingroup
        \ifHy@ocgcolorlinks
          \def\Hy@ocgcolor{#1}%
          \my@pdfliteral{q}%
          \my@pdfliteral{7 Tr}
        \else
          \HyColor@UseColor#1%
        \fi
    }%
    \def\Hy@endcolorlink{%
      \ifHy@ocgcolorlinks%
        \my@pdfliteral{/OC/OCPrint BDC}%
        \my@coords{0pt}{0pt}{\pdfpagewidth}{\pdfpageheight}%
        \my@pdfliteral{F}
        %
        \my@pdfliteral{EMC/OC/OCView BDC}%
        \begingroup%
          \expandafter\HyColor@UseColor\Hy@ocgcolor%
          \my@coords{0pt}{0pt}{\pdfpagewidth}{\pdfpageheight}%
          \my@pdfliteral{F}
        \endgroup%
        \my@pdfliteral{EMC}%
        \my@pdfliteral{0 Tr}
        \my@pdfliteral{Q}%
      \fi
      \endgroup
    }%
\colorlet{DarkRed}{red!50!black}
\colorlet{DarkGreen}{green!50!black}
\colorlet{DarkBlue}{blue!50!black}
\def\thmt@refnamewithcomma #1#2#3,#4,#5\@nil{%
  \@xa\def\csname\thmt@envname #1utorefname\endcsname{#3}%
  \ifcsname #2refname\endcsname
    \csname #2refname\expandafter\endcsname\expandafter{\thmt@envname}{#3}{#4}%
  \fi
}
\declaretheorem[numberwithin=section,refname={Theorem,Theorems},Refname={Theorem,Theorems}]{theorem}
\declaretheorem[numberlike=theorem,refname={Lemma,Lemmas},Refname={Lemma,Lemmas}]{lemma}
\declaretheorem[numberlike=theorem,refname={Corollary,Corollaries},Refname={Corollary,Corollaries}]{corollary}
\declaretheorem[numberlike=theorem,name={Corollary},refname={Corollary,Corollaries},Refname={Corollary,Corollaries}]{corr}
\declaretheorem[numberlike=theorem,name={Conjecture},refname={Conjecture,Conjectures},Refname={Conjecture,Conjectures}]{conjecture}
\declaretheorem[numberlike=theorem,refname={Proposition,Propositions},Refname={Proposition,Propositions}]{proposition}
\declaretheorem[numberlike=theorem,refname={Claim,Claims},Refname={Claim,Claims}]{claim}
\declaretheorem[numberlike=theorem,refname={Definition,Definitions},Refname={Definition,Definitions}]{definition}
\def\cA{\mathcal{A}}
\def\cB{\mathcal{B}}
\def\cC{\mathcal{C}}
\newcommand{\eps}{\varepsilon}
\renewcommand{\varepsilon}{\epsilon}
\def\poly{\operatorname{poly}}
\global\long\def\oo{\tilde{\tilde{o}}}
\global\long\def\uMv{\textsf{uMv}\xspace}
\global\long\def\guMv{\gamma\mbox{-}\textsf{uMv}\xspace}
\global\long\def\ouMv{\textsf{OuMv}\xspace}
\global\long\def\gouMv{\gamma\mbox{-}\textsf{OuMv}\xspace}
\global\long\def\Mv{\textsf{Mv}\xspace}
\global\long\def\oMv{\textsf{OMv}\xspace}
\global\long\def\goMv{\gamma\mbox{-}\textsf{OMv}\xspace}
\newcommand{\patrascu}{P{\v a}tra{\c s}cu\xspace}
\newcommand{\threesum}{{\sf 3SUM}\xspace}
\title{Unifying and Strengthening Hardness for Dynamic Problems via the Online Matrix-Vector Multiplication Conjecture\thanks{A preliminary version of this paper was presented at the 47th ACM Symposium on Theory of Computing (STOC 2015).}} 
\author{
Monika Henzinger\thanks{Work partially done while visiting the Simons Institute for the Theory of Computing. The research leading to these results has received funding from the European Research Council under the European Union's Seventh Framework Programme (FP7/2007-2013) / ERC Grant Agreement number 340506.} \\{\small University of Vienna, Austria}\\{\small Faculty of Computer Science}
\and
Sebastian Krinninger\samethanks[2] \\{\small University of Vienna, Austria}\\{\small Faculty of Computer Science}
\and
Danupon Nanongkai\thanks{Work partially done while at University of Vienna, Austria.} \\{\small KTH Royal Institute of Technology}
\and
Thatchaphol Saranurak\thanks{Work partially done while at University of Vienna, Austria and Saarland University, Germany.} \\{\small KTH Royal Institute of Technology}
}
\date{}
\begin{document}
\maketitle
\begin{abstract}
	Consider the following {\em Online Boolean Matrix-Vector Multiplication} problem: We are given an $n\times n$ matrix $M$ and will receive $n$ column-vectors of size $n$, denoted by $v_1, \ldots, v_n$, one by one. After seeing each vector $v_i$, we have to output the product $Mv_i$ before we can see the next vector. A naive algorithm can solve this problem using $O(n^3)$ time in total, and its running time can be slightly improved to $O(n^3/\log^2 n)$ \citem[Williams SODA'07]{Williams07}. 
	
	We show that a conjecture that there is no {\em truly subcubic} ($O(n^{3-\epsilon})$) time algorithm for this problem can be used to exhibit the underlying polynomial time hardness shared by many dynamic problems. For a number of problems, such as subgraph connectivity, Pagh's problem, $d$-failure connectivity, decremental single-source shortest paths, and decremental transitive closure,	this conjecture implies tight hardness results. Thus, proving or disproving this conjecture will be very interesting as it will either imply several tight unconditional lower bounds or break through a common barrier that blocks progress with these problems. This conjecture might also be considered as strong evidence against any further improvement for these problems since refuting it will imply a major breakthrough for combinatorial Boolean matrix multiplication and other long-standing problems if the term ``combinatorial algorithms'' is interpreted as ``non-Strassen-like algorithms'' \citem[Ballard et al.\ SPAA'11]{BallardDHS12}.
		
	The conjecture also leads to hardness results for problems that were previously based on diverse problems and conjectures -- such as 3SUM, combinatorial Boolean matrix multiplication, triangle detection, and multiphase -- thus providing a uniform way to prove polynomial hardness results for dynamic algorithms; some of the new proofs are also simpler or even become trivial. The conjecture also leads to stronger and new, non-trivial, hardness results, e.g., for the fully dynamic  densest subgraph and diameter problems. 

\end{abstract}
\newpage

\tableofcontents
\newpage

\section{Introduction}\label{sec:intro}

Consider the following problem called {\em Online Boolean Matrix-Vector Multiplication} (\oMv): Initially, an algorithm is given an integer $n$ and an $n\times n$ Boolean matrix $M$. Then, the following protocol repeats for $n$ rounds: At the $i^{\text{th}}$ round, it is given an $n$-dimensional column vector, denoted by $v_i$, and has to compute $Mv_i$. It has to output the resulting column vector before it can proceed to the next round. We want the algorithm to finish the computation as quickly as possible.

This problem is a generalization of the classic {\em Matrix-Vector Multiplication} problem (\Mv), which is the special case with only one vector. The main question is whether we can {\em preprocess} the matrix in order to make the multiplication with $n$ sequentially given
vectors faster than $n$ matrix-vector multiplications. This study dates back to as far as 1955 (e.g., \cite{Motzkin1955}), but most major theoretical work has focused on structured matrices; see, e.g., \cite{Pan01-book,Williams07} for more information. A naive algorithm can multiply the matrix with each vector in $O(n^2)$ time, and thus requires $O(n^3)$ time in total. It was long known that the matrix can be preprocessed in $O(n^2)$ time in order to compute $Mv_i$ in $O(n^2/\log n)$ time, implying an $O(n^3/\log n)$ time algorithm for \oMv; see, e.g., \cite{Savage74,SantoroU86} and a recent extension in \cite{LibertyZ09}. More recently, Williams \cite{Williams07} showed that the matrix can be preprocessed in $O(n^{2+\epsilon})$ time in order to compute $Mv_i$ in $O(n^2/\epsilon\log^2 n)$ time for any $0<\epsilon<1/2$, implying an $O(n^3/\log^2 n)$ time algorithm for \oMv. This is the current best running time for \oMv. In this light, it is natural to conjecture that this problem does not admit a so-called {\em truly subcubic time algorithm}:

\begin{conjecture}[\oMv Conjecture]
	\label{oMv hard}\label{conj:oMv hard} For any constant $\epsilon>0$, there is no $O(n^{3-\epsilon})$-time algorithm that solves \oMv with an error probability of at most $1/3$. 
\end{conjecture}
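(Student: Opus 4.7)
Since the statement is a conjecture rather than a theorem, there is no formal proof to sketch: establishing it unconditionally would entail a super-polynomial circuit lower bound for a concrete problem in $\mathsf{P}$, which is far beyond current techniques. My ``proof proposal'' is therefore a plan for \emph{motivating} why the conjecture is a reasonable complexity assumption to build the rest of the paper on.

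The first step is to collect the algorithmic evidence. Even with unrestricted preprocessing of $M$, the best known upper bounds for \oMv are only a polylogarithmic factor below cubic: the $O(n^3/\log n)$ bound from Four-Russians-style preprocessing of Savage, Santoro--Urrutia, and Liberty--Zucker, and the more recent $O(n^3/\log^2 n)$ bound of Williams cited in the introduction. Despite the problem being natural and well studied since the 1950s, no technique has broken this polylog barrier, and this is the primary empirical support for truly subcubic hardness.

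The second step is to explain why the standard algebraic speedups do not obviously apply. Offline, the analogous Boolean product $MB$ reduces to fast matrix multiplication and runs in time $O(n^{\omega})$; if one were allowed to batch even a small polynomial number of queries $v_1,\ldots,v_n$ before answering, one could exploit this to beat cubic. The online requirement -- output $Mv_i$ before seeing $v_{i+1}$ -- is precisely what obstructs all known fast-matrix-multiplication-based strategies, and this is the structural reason the conjecture is plausible even allowing algebraic techniques. A complementary argument is that any truly subcubic \oMv algorithm immediately yields a truly subcubic algorithm for Boolean matrix multiplication, simply by feeding the columns of the second matrix as the online vectors; thus under the combinatorial BMM conjecture, \oMv admits no truly subcubic \emph{combinatorial} algorithm, and the \oMv conjecture strengthens this by additionally ruling out Strassen-like techniques in the online setting.

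The main obstacle to turning any of this into an actual theorem is precisely the main obstacle of complexity theory: we have no lower-bound tools capable of ruling out cubic-beating algorithms for a natural problem in $\mathsf{P}$, and the randomization (error probability $\leq 1/3$) only makes things harder. Hence the conjecture must stand as a hypothesis, and the paper's contribution is to show that it implies tight lower bounds for many dynamic problems, so that any resolution -- in either direction -- will be very interesting.
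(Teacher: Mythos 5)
Correct handling: the statement is a conjecture, and the paper likewise offers no proof of it, only supporting evidence (the best known \oMv algorithms shave just polylogarithmic factors off cubic, the combinatorial BMM conjecture implies it under the ``Strassen-like'' interpretation of combinatorial algorithms, and an algebraic lower bound of Bl\"aser). Your proposal recognizes this and assembles essentially the same motivating evidence as the paper's introduction and \Cref{sec:discussion}, so it matches the paper's treatment.
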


In fact, it can be argued that this conjecture is implied by the standard {\em combinatorial Boolean matrix multiplication} (BMM) conjecture which states that there is no truly subcubic ($O(n^{3-\eps})$) time {\em combinatorial} algorithm for multiplying two $n\times n$ Boolean matrices if the term ``combinatorial algorithms'' (which is not a well-defined term) is interpreted in a certain way -- in particular if it is interpreted as ``non-Strassen-like algorithms'', as defined in \cite{BallardDHS12}, which captures all known fast matrix multiplication algorithms; see \Cref{sec:discussion} for further discussion. Thus, breaking \Cref{conj:oMv hard} is arguably at least as hard as making a breakthrough for Boolean matrix multiplication and other long-standing open problems (e.g., \cite{DorHZ00,WilliamsW10,AbboudW14,RoddityT11,HenzingerKN13}). 
This conjecture is also supported by an algebraic lower bound \cite{Blaser14}.

\subsection{\oMv-Hardness for Dynamic Algorithms}
We show that the \oMv conjecture can very well capture the underlying polynomial time hardness shared by a large number of dynamic problems, leading to a unification, simplification, and strengthening of previous results. By dynamic algorithm we mean an algorithm that allows a change to the input. It usually allows three operations: (1) {\em preprocessing}, which is called when the input is first received, (2) {\em update}, which is called for every input update, and (3) {\em query}, which is used to request an answer to the problem. For example, in a typical dynamic graph problem, say $s$-$t$ shortest path, we will start with an empty graph at the preprocessing step. Each update operation consists of an insertion or deletion of one edge. The algorithm has to answer a query by returning the distance between $s$ and $t$ at that time. Corresponding to the three operations, we have {\em preprocessing time}, {\em update time}, and {\em query time}. There are two types of bounds on the update time: {\em worst-case bounds}, which bound the time that {\em each individual} update takes in the worst case, and {\em amortized bounds} which bound the time taken by {\em all} updates and then averaging it over all updates. The bounds on query time can be distinguished in the same way. 
We call a dynamic algorithm {\em fully dynamic} if any of its updates can be undone (e.g., an edge insertion can later be undone by an edge deletion); otherwise, we call it {\em partially dynamic}. We call a partially dynamic graph algorithm {\em decremental} if it allows only edge deletions, and {\em incremental} if it allows only edge insertions. For this type of algorithm, the update time is often analyzed in terms of {\em total update time}, which is the total time needed to handle {\em all} insertions or deletions.

Previous hardness results for dynamic problems are often based on diverse conjectures, such as those for \threesum,  combinatorial Boolean matrix multiplication (BMM), triangle detection, all-pairs shortest paths, and multiphase (we provide their definitions in \Cref{sec:conjectures} for completeness). This sometimes made hardness proofs quite intricate since there are many conjectures to start from, which often yield different hardness results, and in some cases none of these results are tight. 
Our approach results in stronger bounds which are tight for some problems. 
Additionally, we show that a number of previous proofs can be unified as they can now start from only one problem, that is \oMv, and can be done in a much simpler way (compare, e.g., the hardness proof for Pagh's problem in this paper and in \cite{AbboudW14}). Thus proving the hardness of a problem via \oMv should be a simpler task.

\begin{figure}
\centering
\includegraphics[width=0.9\textwidth]{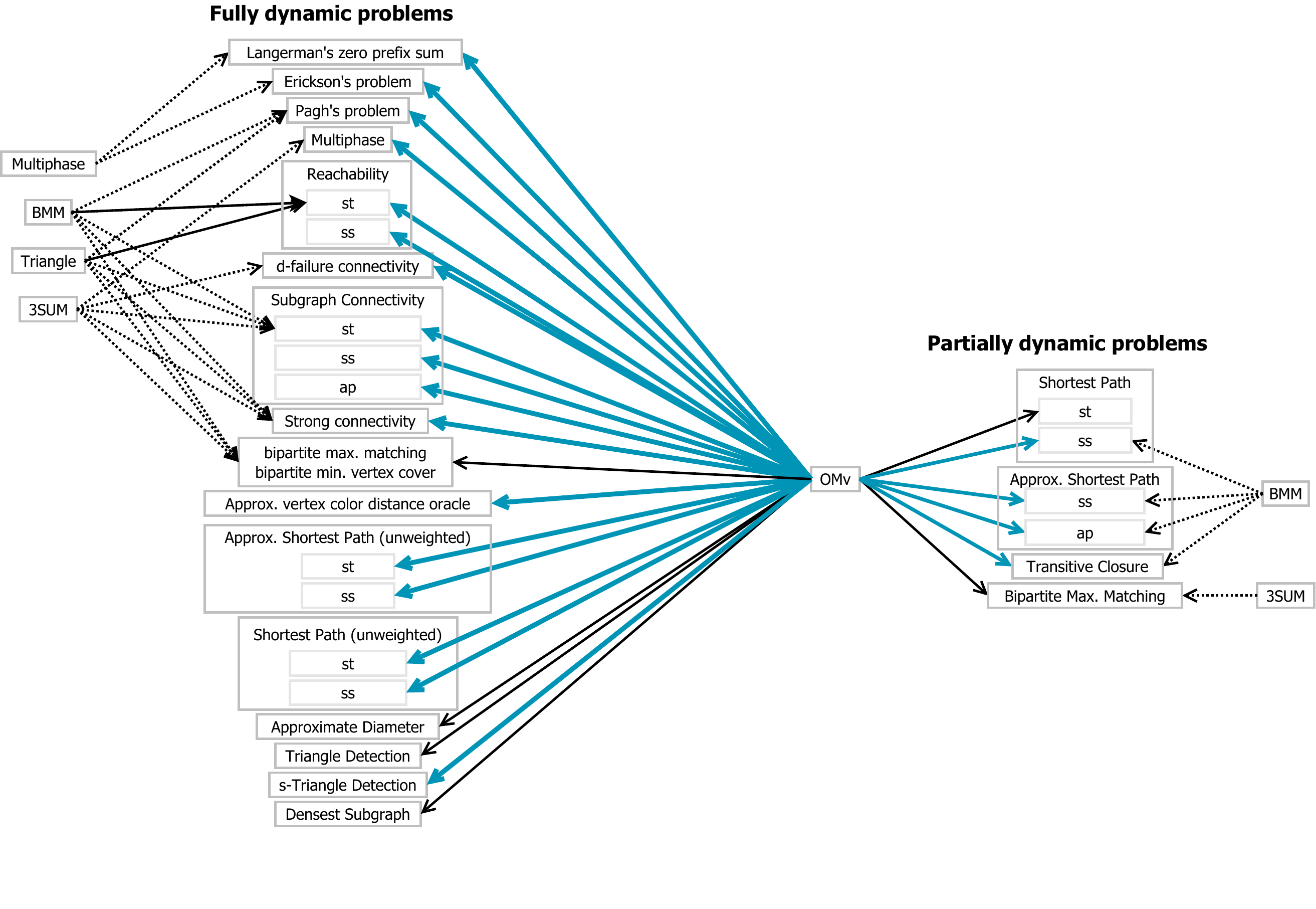} 
\caption{\small
Overview of our and previous hardness results.
An arrow from a conjecture to a problem indicates that there is a hardness result based on the conjecture.
A thick blue arrow indicates that the hardness result is tight, i.e., there is a matching upper bound.
A dotted arrow means that the result is subsumed in our paper. (\Cref{foot:1} discusses results that are not subsumed.) Note that all hardness results based on BMM hold only for combinatorial algorithms.}\label{fig:result summary1}
\end{figure}

We next explain our  main results and the differences to prior work:
As shown in \Cref{fig:result summary1}, we obtain more than 15 new tight\footnote{Our results are tight in one of the following ways: (1) the query time of the existing algorithms cannot be improved without significantly increasing the update time, (2) the update time of the existing algorithms cannot be improved without significantly increasing the query time, (3) the update and query time of the existing algorithms cannot be improved simultaneously, and (4) the approximation guarantee cannot be improved without significantly increasing both query and update time.} hardness results\footnote{\label{foot:1}For the $s$-$t$ reachability problem, our result does not subsume the result based on the Boolean matrix multiplication (BMM) conjecture because the latter result holds only for combinatorial algorithms, and it is in fact larger than an upper bound provided by the non-combinatorial algorithm of Sankowski \cite{Sankowski04} (see \Cref{sec:discussion} for a discussion). 
Also note that the result based on the triangle detection problem which is not subsumed by our result holds only for a more restricted notion of amortization (see \Cref{sec:discussion}). This explains the solid lines in \Cref{fig:result summary1}.}. 
(Details of these results are provided in \Cref{table:tight results,table: tight results partially} for tight results and \Cref{table: summary fully graph,table: summary fully non-graph,table: summary partially}
for improved results. We also provide a summary of the problem definitions in \Cref{table:problem definitions 1,table:problem definitions 2,table:problem definitions 3}.)
(1) Generally speaking, for most previous hardness results in \cite{Patrascu10,AbboudW14,KopelowitzPP14} that rely on various conjectures, except those relying on the Strong Exponential Time Hypothesis (SETH), our \oMv conjecture implies  hardness bounds on the amortized time per operation that are the same or better.
(2) We also obtain new results such as those for vertex color distance oracles (studied in \cite{HermelinLWY11,Chechik12} and used to tackle the minimum Steiner tree problem \cite{LackiOPSZ13}), restricted top trees with edge query problem (used to tackle the minimum cut problem in \cite{FakcharoenpholKNSS14}), and the dynamic densest subgraph problem \cite{BhattacharyaHNT14}.
(3) Some minor improvement can in fact immediately be obtained since our conjecture implies a very strong bound for \patrascu's {\em multiphase} problem \cite{Patrascu10}, giving improved bounds for many problems considered in \cite{Patrascu10}.
We can, however, improve these bounds even more by avoiding a reduction via the multiphase problem. (We discuss this further in \Cref{sec:discussion}.) 
The conjecture leads to an improvement for all problems whose hardness was previously based on \threesum. 
(4) A few other improvements follow from converting previous hardness results that hold only for {\em combinatorial} algorithms into hardness results that hold for {\em any} algorithm. We note that removing the term ``combinatorial'' is an important task as there are algebraic algorithms that can break through some bounds for combinatorial algorithms. (We discuss this more in \Cref{sec:discussion}.)
(5) Interestingly, all our hardness results hold even when we allow an {\em arbitrary polynomial preprocessing time}. This type of results was obtained earlier only in \cite{AbboudW14} for hardness results based on SETH. 
(6) Since the \oMv conjecture can replace all other conjectures except SETH, these two conjectures together are sufficient to show that all hardness results for dynamic problems known so far hold even for arbitrary polynomial preprocessing time.
(7) We also note that all our results hold for a very general type of amortized running time; e.g., they hold even when there is a large (polynomial) number of updates and, for graph problems, even when we start with an empty graph. No previous hardness results, except those obtained via SETH, hold for this case.

We believe that the {\em universality} and {\em simplicity} of the \oMv conjecture will be important not only in proving tight hardness results for well-studied dynamic problems, but also in developing faster algorithms; for example, as mentioned earlier it can be used to show the limits of some specific approaches to attack the minimum Steiner tree and minimum cut problems \cite{LackiOPSZ13,FakcharoenpholKNSS14}. 
Below is a sample of our results. 
A list of all of them and detailed proofs can be found in later sections.

\paragraph{Subgraph Connectivity.} In this problem,  introduced by Frigioni and Italiano~\cite{FrigioniI00},  we are given a graph $G$, and we have to maintain a subset $S$ of nodes where the updates are adding and removing a node of $G$ to and from $S$, which can be viewed as turning nodes on and off. The queries are to determine whether two nodes $s$ and $t$ are in the same connected component in the subgraph induced by $S$.
The best upper bound in terms of $m$ is an algorithm with $\tilde{O}(m^{4/3})$ preprocessing time, $\tilde{O}(m^{2/3})$ amortized update time and $\tilde{O}(m^{1/3})$ worst-case query time \cite{ChanPR11}. There is also an algorithm with $\tilde{O}(m^{6/5})$ preprocessing time, $\tilde{O}(m^{4/5})$ worst-case update time and $\tilde{O}(m^{1/5})$ worst-case query time \cite{Duan10}. An upper bound in terms of $n$ is an algorithm with $\tilde{O}(m)$ preprocessing time, $\tilde{O}(n)$ worst-case update time, and $O(1)$ worst-case query time\footnote{This update time is achieved by using $O(n)$ updates for dynamic connectivity data structure under edge updates by  \cite{KapronKM13}. The query time needs only $O(1)$ time because this data structure internally maintains a spanning forest, so we can label vertices in each component in the spanning forest in time $O(n)$ after each update.}.

For hardness in terms of $m$, Abboud and Vassilevska Williams \cite{AbboudW14} showed that the \threesum conjecture can rule out algorithms with $m^{4/3-\eps}$ preprocessing time, $m^{\alpha-\epsilon}$ amortized update time and $m^{2/3-\alpha-\epsilon}$ amortized query time, for any constants $1/6\leq \alpha\leq 1/3$ and $0<\epsilon<\alpha$. 
In this paper, we show that the \oMv conjecture can rule out algorithms with {\em polynomial} preprocessing time, $m^{\alpha-\epsilon}$ amortized update time and $m^{1-\alpha-\epsilon}$ amortized query time\footnote{We note the following detail: The \threesum-hardness result of Abboud and Vassilevska Williams holds when $m \leq n^{1.5}$ and our hardness result holds when $m\leq \min\{n^{1/\alpha}, n^{1/(1-\alpha)}\}$}, for any $0\leq \alpha\leq 1$. This matches the upper bound of \cite{ChanPR11} when we set $\alpha=2/3$. 

For hardness in terms of $n$, \patrascu \cite{Patrascu10} showed that, assuming the hardness of his {\em multiphase problem}, there is no algorithm with $n^{\delta-\epsilon}$ worst-case update time and query time, for some constant $0<\delta\leq 1$.
By assuming the combinatorial BMM conjecture, Abboud and Vassilevska Williams \cite{AbboudW14} could rule out combinatorial algorithms with $n^{1-\eps}$ amortized update time, and $n^{2-\epsilon}$ query time. 
These two bounds cannot rule out some improvement over \cite{KapronKM13}, e.g., a non-combinatorial algorithm with $n^{1-\eps}$ amortized update time, and $O(1)$ amortized query time. 
In this paper, we show that the \oMv conjecture can rule out {\em any} algorithm with {\em polynomial} preprocessing time, $n^{1-\epsilon}$ amortized update time and $n^{2-\epsilon}$ amortized query time. Thus, there is no algorithm that can improve the upper bound of \cite{KapronKM13} without significantly increasing the query time.

\paragraph{Decremental Shortest Paths.} In the decremental single-source shortest paths problem, we are given an unweighted undirected graph $G$ and a source node $s$. Performing an update means to delete an edge from the graph. A query will ask for the distance from $s$ to some node $v$. 
The best exact algorithm for this problem is due to the classic result of Even and Shiloach \cite{EvenS81} and requires $O(m)$ preprocessing time, $O(mn)$ total update time, and $O(1)$ query time. Very recently, Henzinger, Krinninger, and Nanongkai \cite{HenzingerKNFOCS14} showed a $(1+\epsilon)$-approximation algorithm with $O(m^{1+o(1)})$ preprocessing time, $O(m^{1+o(1)})$ total update time, and $O(1)$ query time.  
Roditty and Zwick \cite{RodittyZESA04} showed that the combinatorial BMM conjecture implies that there is no combinatorial {\em exact} algorithm with 
$mn^{1-\epsilon}$ preprocessing time and $mn^{1-\epsilon}$ total update time if we need $\tilde O(1)$ query time. 
This leaves the open problem whether we can develop a faster exact algorithm for this problem using algebraic techniques (e.g., by adapting Sankowski's techniques \cite{Sankowski04,Sankowski-ESA05,Sankowski-COCOON05}). 
Our \oMv conjecture implies that this is not possible:  there is no exact algorithm with polynomial preprocessing and $mn^{1-\epsilon}$ total update time if we need $\tilde O(1)$ query time.

For the decremental all-pairs shortest paths problem on undirected graphs, $(1+\epsilon)$-approximation algorithms with $\tilde O(mn)$ total update time are also known in both unweighted and weighted cases~\cite{RodittyZ04,Bernstein13}. For combinatorial algorithms, this is tight even in the static setting under the combinatorial BMM conjecture \cite{DorHZ00}. Since fast matrix multiplication can be used to break this bound in the static setting when the graph is dense, the question whether we can do the same in the dynamic setting was raised by Bernstein \cite{Bernstein13}. In this paper, we show that this is impossible under the \oMv conjecture. (Our hardness result holds for any algorithm with approximation ratio less than two.)


\paragraph{Pagh's Problem.} In this problem, we want to maintain a family $X$ of at most $k$ sets $\{X_i\}_{1 \leq i \leq k}$ over $[n]$. An update is by adding the set $X_i \cap X_j$ to $X$. We have to answer a query of the form ``Does element $j$ belong to set $X_i$?''.
A trivial solution to this problem requires $O(kn)$ preprocessing time, $O(n)$ worst-case update time and $O(1)$ worst-case query time.
Previously, Abboud and Williams \cite{AbboudW14} showed that, assuming the combinatorial BMM conjecture, for any $n\leq k\leq n^2$ there is no combinatorial algorithm with $k^{3/2-\epsilon}$ preprocessing time, $k^{1/2-\epsilon}$ amortized update time, and $k^{1/2-\epsilon}$ amortized query time. They also obtained hardness for non-combinatorial algorithms but the bounds are weaker. 
Our \oMv conjecture implies that for any $k=poly(n)$ there is no algorithm with $\poly(k,n)$ preprocessing time, $n^{1-\eps}$ update time, and $k^{1-\epsilon}$ query time, matching the trivial upper bound. Note that our hardness holds against {\em all} algorithms, including non-combinatorial algorithms. Also note that while the previous proof in \cite{AbboudW14} is rather complicated (it needs, e.g., a universal hash function), our proof is almost trivial.

\paragraph{Fully Dynamic Weighted Diameter Approximation.} In this problem, we are give a weighted undirected graph. An update operation adds or deletes a weighted edge. The query asks for the diameter of the graph. 
For the unweighted case, Abboud and Vassilevska Williams \cite{AbboudW14} showed that the Strong Exponential Time Hypothesis (SETH) rules out any $(4/3-\epsilon)$-approximation algorithm with polynomial preprocessing time, $n^{2-\epsilon}$ update and query time. 
Nothing was known for the weighted setting. 
In this paper, we show that for the weighted case, \oMv rules out any $(2-\epsilon)$-approximation algorithm with polynomial preprocessing time, $n^{1/2-\epsilon}$ update time and $n^{1-\epsilon}$ query time. 
This result is among a few that require a rather non-trivial proof.

\subsection{Discussions}\label{sec:discussion}

\paragraph{\oMv vs.\ Combinatorial BMM.} The combinatorial BMM conjecture states that there is no truly subcubic {\em combinatorial} algorithm for multiplying two $n\times n$ Boolean matrices.
There are two important points to discuss here. 
First, it can be easily observed that any reduction from the \oMv problem can be turned into a reduction from the combinatorial BMM problem since, although we get two matrices at once in the BMM problem, we can always pretend that we see one column of the second matrix at a time (this is the \oMv problem). 
This means that bounds obtained via the \oMv conjecture will {\em never} be stronger than bounds obtained via the combinatorial BMM conjecture. However, the latter bounds will hold {\em only for combinatorial algorithms}, leaving the possibility of an improvement via an algebraic algorithm. This possibility cannot be overlooked since there are examples where an algebraic algorithm can break through the combinatorial hardness obtained by assuming the combinatorial BMM conjecture. 
For example, it was shown in \cite{AbboudW14} that the combinatorial BMM conjecture implies that there is no combinatorial algorithm with $n^{3-\epsilon}$ preprocessing time, $n^{2-\epsilon}$ update time, and $n^{2-\epsilon}$ query time for the fully dynamic $s$-$t$ reachability and bipartite perfect matching problems. However, we can break these bounds using Sankowski's algebraic algorithm \cite{Sankowski04,Sankowski07} which requires $n^\omega$ preprocessing time, $n^{1.449}$ worst-case update time, and $O(1)$ worst case query time, where $\omega$ is the exponent of the best known matrix multiplication algorithm (currently, $\omega < 2.3728639$~\cite{Gall14a}).\footnote{Furthermore, the exponent $1.449$ is the result of balancing the terms $n^{1+\epsilon}$ and $ n^{\omega(1, \epsilon, 1)-\epsilon} $, where $\omega(1, \epsilon, 1)$ is the exponent of the best known algorithm~\cite{Gall12} for multiplying an $ n \times n^{\epsilon} $ matrix with an $ n^{\epsilon} \times n $ matrix. The value $1.449$ is obtained by a linear interpolation of the values of $\omega(1, \epsilon, 1)$ reported in~\cite{Gall12}, which upperbounds $\omega(1, \epsilon, 1)$.}

Second, it can be argued that the combinatorial BMM conjecture actually implies the \oMv conjecture, if the term ``combinatorial algorithm'' is interpreted in a certain way. 
Note that while this term has been used very often (e.g., \cite{DorHZ00,WilliamsW10,AbboudW14,RoddityT11,HenzingerKN13}), it is not a well-defined term. Usually it is vaguely used to refer as an algorithm that is different from the ``algebraic'' approach originated by Strassen \cite{Strassen69}; see, e.g., \cite{BlellochVW08,BansalW12,BaschKR95}. 
One formal way to interpret this term is by using the term ``Strassen-like algorithm'', as defined by Ballard~et~al.~\cite{BallardDHS12}. Roughly speaking, a Strassen-like algorithm 
divides {\em both} matrices into constant-size blocks and utilizes an algorithm for multiplying two blocks in order to recursively multiply matrices of arbitrary size (see \cite[Section 5.1]{BallardDHS12} for a detailed definition\footnote{Note that Ballard~et~al.\ also need to include a technical assumption in the Strassen-like algorithms that they consider to prove their results (see \cite[Section 5.1.1]{BallardDHS12}). This assumption is irrelevant to us.}). As pointed out in \cite{BallardDHS12}, this is the structure of all the fast matrix multiplication algorithms that were obtained since Strassen's, including the recent breakthroughs by Stothers~\cite{Stothers2010} and Vassilevska Williams~\cite{VWilliams12}.
Since \oMv reveals one column of the second matrix at a time, it naturally disallows an algorithm to utilize block multiplications, and thus Strassen-like algorithms cannot be used to solve \oMv. 

We note that the \oMv problem actually excludes even some combinatorial BMM algorithms; e.g., the  $O(n^3 (\log\log n)^2 / \log^{2.25}n)$-time algorithm of Bansal and Williams \cite{BansalW12} cannot be used to solve \oMv.
Finally, even if one wants to interpret the term ``combinatorial algorithm'' differently and argue that the combinatorial BMM conjecture does not imply the \oMv conjecture, we believe that breaking the \oMv conjecture will still be a breakthrough since it will yield a fast matrix multiplication algorithm that is substantially different from those using Strassen's approach.

\paragraph{\oMv vs.\ Multiphase.} \patrascu \cite{Patrascu10} introduced a dynamic version of set disjointness called {\em multiphase problem}, which can be rephrased as a variation of the Matrix-Vector multiplication problem as follows (see \cite{Patrascu10} for the original definition). Let $k$, $n$, and $\tau$ be some parameters. First, we are given a $k\times n$ Boolean matrix $M$ and have $O(nk\cdot \tau)$ time to preprocess $M$. Second, we are given an $n$-dimensional vector $v$ and have $O(n\cdot \tau)$ additional computation time. Finally, we are given an integer $1\leq i\leq n$ and must output $(Mv)[i]$ in $O(\tau)$ time. \patrascu conjectured that if there are constants $\gamma>0$ and $\delta>0$ such that $k=n^\gamma$, then any solution to the multiphase problem in the Word RAM model requires $\tau=n^\delta$, and used this conjecture to prove polynomial time hardness for several dynamic problems. How strong these hardness bounds are depends on how hard one believes the multiphase problem to be. 
By a trivial reduction, the \oMv conjecture implies that the multiphase conjecture holds with $\delta=1$ when $\gamma=1$.  
(We found it quite surprising that viewing the multiphase problem as a matrix problem, instead of a set problem as originally stated, can give an intuitive explanation for a possible value of $\delta$.)
This implies the strongest bound possible for the multiphase problem.
Moreover, while hardness based on the multiphase problem can only hold for a {\em worst-case} time bound, it can be shown that under a general condition we can make them hold for an {\em amortized} time bound too if we instead assume the \oMv conjecture (see \Cref{sec:multiphase} for details).
Thus, with the \oMv conjecture it seems that we do not need the multiphase conjecture anymore.
Note that, as argued before, we can also conclude that the combinatorial BMM conjecture implies the multiphase conjecture. To the best of our knowledge this is the first connection between these conjectures.

\paragraph{\oMv vs.\ 3SUM and SETH.} As mentioned earlier, all previous hardness results that were based on the \threesum conjecture can be strengthened through the \oMv conjecture. However, we do not have a general mechanism that can always convert any hardness proof based on \threesum into a proof based on \oMv. Finding such a mechanism would be interesting. 

Techniques for proving hardness for dynamic algorithms based on SETH were very recently introduced in \cite{AbboudW14}. Results from these techniques are the only ones that cannot be obtained through \oMv. SETH together with \oMv seems to be enough to prove all the hardness results known to date. It would be very interesting if the number of conjectures one has to start with can be reduced to one.

\paragraph{Remark On the Notion of Amortization.} We emphasize that there are two different ways to define the notion of amortized update time. First, we can define it as an amortized update time when we {\em start from an empty graph}; equivalently, the update time has to be amortized over all edges that ever appear. 
The second way is to allow the algorithm to preprocess an arbitrary input graph and amortize the update time over all  updates (not counting the edges in the initial graph as updates); for example, one can start from a graph with $n^2$ edges and have only $n$ updates. 
The first definition is more common in the analysis of dynamic algorithms but it is harder to prove hardness results for it. Our hardness results hold for this type of amortization; in fact, they hold even when there is a large (polynomial) number of updates. Many previous hardness results hold only for the second type of amortization, e.g., the results in \cite{AbboudW14} that are not based on SETH and 3SUM.

\subsection{Notation} 
All matrices and vectors in this paper are Boolean and $\tilde{O}(\cdot)$ hides logarithmic factors in $O(\cdot)$.
We will also use the following non-standard notation.

\begin{definition}
[$\oo$ Notation] \label{oo def} 
For any parameters $n_1,n_2,n_3$, we say that a function $f(n_{1},n_{2},n_{3})=\oo(n_{1}^{c_{1}}n_{2}^{c_{2}}n_{3}^{c_{3}})$
iff there exists some constant $\epsilon>0$ such that $f=O(n_{1}^{c_{1}-\epsilon}n_{2}^{c_{2}}n_{3}^{c_{3}}+n_{1}^{c_{1}}n_{2}^{c_{2}-\epsilon}n_{3}^{c_{3}}+n_{1}^{c_{1}}n_{2}^{c_{2}}n_{3}^{c_{3}-\epsilon})$.
We use the analogous definition for functions with one or two parameters.
\end{definition}

\subsection{Organization}

Instead of starting from \oMv, our reductions will start from an intermediate problem called \ouMv. We describe this and prove necessary results in \Cref{sec:intermediate}. 
In \Cref{sec:full:fully hardness} we prove hardness results for the amortized update time of fully dynamic algorithms and the worst-case update time of partially dynamic algorithms. In \Cref{sec:full:partially hardness} we prove hardness results for the total update time of partially dynamic problems. In \Cref{sec:futher discussions} we provide further discussions.

\begin{table}
\footnotesize
\begin{tabular}{|>{\centering}m{0.15\textwidth}|>{\centering}p{0.1\textwidth}|>{\centering}p{0.25\textwidth}|>{\centering}m{0.2\textwidth}|>{\centering}p{0.25\textwidth}|}
\hline 
\multicolumn{2}{|c|}{Name and short name} & Input & Update & Query\tabularnewline
\hline 
\hline 
\multirow{3}{0.15\textwidth}{Subgraph Connectivity} & st-SubConn & \multirow{3}{0.25\textwidth}{A fixed undirected graph $G$, and a subset $S$ of its vertices} & \multirow{3}{0.2\textwidth}{Insert/remove a node into/from $S$} & Are $s$ and $t$ connected in $G[S]$ the subgraph induced by $S$?\tabularnewline
\cline{2-2} \cline{5-5} 
 & ss-SubConn &  &  & For any $v$, are $s$ and $v$ connected in $G[S]$ the subgraph
induced by $S$?\tabularnewline
\cline{2-2} \cline{5-5} 
 & ap-SubConn &  &  & For any $u,v$, are $u$ and $v$ connected in $G[S]$ the subgraph
induced by $S$?\tabularnewline
\hline 
\multirow{3}{0.15\textwidth}{Reachability} & st-Reach & \multirow{3}{0.25\textwidth}{A directed graph} & \multirow{3}{0.2\textwidth}{Edge insertions/deletions} & Is $t$ reachable from $s$?\tabularnewline
\cline{2-2} \cline{5-5} 
 & ss-Reach &  &  & For any $v$, is $v$ reachable from $s$?\tabularnewline
\cline{2-2} \cline{5-5} 
 & ap-Reach (Transitive Closure or TC) &  &  & For any $u,v$, is $v$ reachable from $u$? \tabularnewline
\hline 
\multirow{3}{0.15\textwidth}{Shortest Path (undirected)} & st-SP & \multirow{3}{0.25\textwidth}{An undirected unweighted graph} & \multirow{3}{0.2\textwidth}{Edge insertions/deletions} & Find the distance $d(s,t)$.\tabularnewline
\cline{2-2} \cline{5-5} 
 & ss-SP &  &  & Find the distance $d(s,v)$, for any $v$.\tabularnewline
\cline{2-2} \cline{5-5} 
 & ap-SP &  &  & Find the distance $d(u,v)$, for any $u,v$.\tabularnewline
\hline 
\multicolumn{2}{|c|}{Triangle Detection} & \multirow{2}{0.25\textwidth}{An undirected graph } & \multirow{2}{0.2\textwidth}{Edge insertions/deletions} & Is there a  triangle in the graph?\tabularnewline
\cline{1-2} \cline{5-5} 
\multicolumn{2}{|c|}{s-Triangle Detection} &  &  & Is there a triangle containing $s$ in the graph?\tabularnewline
\hline 
\end{tabular}

\caption{Definitions of dynamic graph problems (1)}\label{table:problem definitions 1}
\end{table}

\begin{table}
\footnotesize
\begin{tabular}{|>{\centering}m{0.15\textwidth}|>{\centering}p{0.25\textwidth}|>{\centering}m{0.25\textwidth}|>{\centering}p{0.30\textwidth}|}
\hline 
Name  & Input & Update & Query\tabularnewline
\hline 
\hline 
Densest Subgraph &   An undirected graph  & Edge insertions/deletions & What is the density $|E(S)|/|S|$ of the densest subgraph $S$?\tabularnewline
\hline 
$d$-failure connectivity \cite{DuanP10}  &   A fixed undirected graph  & Roll back to original graph. Then remove any $d$ vertices from the
graph. & Is $s$ connected to $t$, for any given $(s,t)$?\tabularnewline
\hline 
Vertex color distance oracle \cite{Chechik12,LackiOPSZ13} &   A fixed undirected graph  & Change the color of any vertex & Given $s$ and color $c$, find the shortest distance from $s$ to
any $c$-colored vertex.\tabularnewline
\hline 
Diameter &   An undirected graph  & Edge insertions/deletions & Find the diameter of the graph.\tabularnewline
\hline 
Strong Connectivity & A directed graph & Edge insertions/deletions & Is the graph strongly connected?\tabularnewline
\hline 
\end{tabular}

\caption{Definitions of dynamic graph problems (2)}\label{table:problem definitions 2}
\end{table}

\begin{table}
\footnotesize
\begin{tabular}{|>{\centering}m{0.15\textwidth}|>{\centering}p{0.25\textwidth}|>{\centering}p{0.25\textwidth}|>{\centering}p{0.30\textwidth}|}
\hline 
Name  & Input & Update & Query\tabularnewline
\hline 
\hline 
Pagh's Problem  &   A family $\mathcal{X}$ of sets $X_{1},X_{2},\dots,X_{k}\subseteq[n]$ & Given $i,j$, insert $X_{i}\cap X_{j}$ into $\mathcal{X}$ & Given index $i$ and $j\in[n]$, is $j\in X_{i}$?\tabularnewline
\hline 
Langerman's Zero Prefix Sum &   An array $A[1\dots n]$ of integers & Set $A[i]=x$ for any $i$ and $x$ & Is there a $k$ s.t. $\sum_{i=1}^{k}A[i]=0$?\tabularnewline
\hline 
Erickson's Problem &   A matrix of integers of size $n\times n$ & Increment all values in a specified row or column & Find the maximum value in the matrix\tabularnewline
\hline 
\end{tabular}

\caption{Definitions of dynamic non-graph problems}\label{table:problem definitions 3}

\end{table}

\section{Intermediate Problems}\label{sec:intermediate}

In this section we show that the \oMv conjecture implies that \oMv is hard even when there is a polynomial preprocessing time and different dimension parameters (\Cref{sec:stronger conjectures}). Then in \Cref{sec:full hardness ouMv}, we present the problem whose hardness can be proved assuming the \oMv conjecture, namely the {\em online vector-matrix-vector multiplication} (\ouMv) problem, which is the key starting points for our reductions in later sections.

\subsection{\oMv with Polynomial Preprocessing Time and Arbitrary Dimensions}\label{sec:stronger conjectures}

We first define a more general version of the $\oMv$ problem: (1) we allow the algorithm to preprocess the matrix before the vectors arrive and (2) we allow the matrix to have arbitrary dimensions with a promise that the size of minimum dimension is not too ``small'' compared to the size of maximum dimension.

\begin{definition}
	[$\goMv$]
	Let $\gamma>0$ be a fixed constant. 
	An algorithm for the $\goMv$ problem is given \emph{parameters} $n_1,n_2,n_3$ as input with a promise that 
	$n_1 = \lfloor n_2^\gamma \rfloor$.
	Next, it is given a matrix $M$ of size $n_{1}\times n_{2}$ that can be preprocessed.
	Let $p(n_{1},n_{2})$ denote the \emph{preprocessing time}.
	After the preprocessing, an online sequence of vectors $v^{1},\dots,v^{n_{3}}$ is presented one after the other and the task is to compute each $Mv^{t}$ before $v^{t+1}$ arrives.
	Let $c(n_{1},n_{2},n_{3})$ denote the \emph{computation time} over the whole sequence.
\end{definition}
Note that the $\goMv$ problem can be trivially solved with $O(n_1n_2n_3)$ total computing time and without preprocessing time. 
Obviously, the \oMv conjecture implies that this running time is (almost) tight when $ n_1 = n_2 = n_3 = n $. Interestingly, it also implies that this running time is tight for other values of $ n_1$, $n_2$, and $n_3$:

\begin{theorem}
	\label{general oMv hard}
	For any constant $\gamma>0$,
	\Cref{oMv hard} implies that there is no algorithm for $\goMv$ with parameters $n_1,n_2,n_3$ using 
	preprocessing time $p(n_{1},n_{2})=poly(n_{1},n_{2})$ 
	and computation time $c(n_{1},n_{2},n_{3})=\oo(n_{1}n_{2}n_{3})$ that has an error probability of at most $1/3$.
\end{theorem}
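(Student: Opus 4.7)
The plan is to derive a contradiction with \Cref{oMv hard} by using a fast $\goMv$ algorithm to solve \oMv in truly subcubic time via a block decomposition that amortizes the polynomial preprocessing over many online vector operations. Suppose, for contradiction, that an algorithm $A$ solves $\goMv$ with preprocessing $p(n_1,n_2) = O((n_1 n_2)^c)$ for some constant $c$ and computation $c(n_1,n_2,n_3) = \oo(n_1 n_2 n_3)$; by the definition of $\oo$, there exists $\delta>0$ with $c(n_1,n_2,n_3) = O(n_1^{1-\delta} n_2 n_3 + n_1 n_2^{1-\delta} n_3 + n_1 n_2 n_3^{1-\delta})$. Given an \oMv instance $(M,v_1,\dots,v_n)$ on an $n\times n$ matrix, I pick a small constant $\beta>0$ to be chosen below, set $m_2 = \lceil n^{\beta}\rceil$ and $m_1 = \lfloor m_2^{\gamma}\rfloor$, pad $M$ with zero rows and columns so its dimensions become multiples of $m_1$ and $m_2$, and partition it into $(n/m_1)(n/m_2)$ blocks $B_{i,j}$ of size $m_1\times m_2$. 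For each block I instantiate a fresh copy $A_{i,j}$ of $A$ with parameters $(m_1,m_2,n)$, which is a legal $\goMv$ instance because $m_1 = \lfloor m_2^{\gamma}\rfloor$ and $n_3=n$ is unconstrained.

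When the vector $v_k$ arrives in the outer \oMv stream, I split it into its column-block pieces $v_k^{(j)}$ of length $m_2$, feed $v_k^{(j)}$ to $A_{i,j}$ for every $i$, collect each returned partial product $B_{i,j} v_k^{(j)}$, sum these across $j$ within each row strip to obtain the corresponding block of $M v_k$, and output $M v_k$ before $v_{k+1}$ arrives; the online guarantee of $A$ ensures all partial products are available in time. The preprocessing cost is
\[
\frac{n^2}{m_1 m_2}\cdot p(m_1,m_2) \;=\; O\!\left(n^2 (m_1 m_2)^{c-1}\right) \;=\; O\!\left(n^{2 + (c-1)(\gamma+1)\beta}\right),
\]
which becomes $O(n^{2.5})$ once I set $\beta = 1/(2(c-1)(\gamma+1))$ (and the case $c\le 1$ is trivial). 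The total online computation is
\[
\frac{n^2}{m_1 m_2}\cdot c(m_1,m_2,n) \;=\; O\!\left(\frac{n^3}{m_1^{\delta}}+\frac{n^3}{m_2^{\delta}}+n^{3-\delta}\right),
\]
which is $O(n^{3-\delta'})$ for some $\delta'>0$ because $m_1$ and $m_2$ are polynomial in $n$. The bookkeeping sums over block rows contribute $O\!\left(n\cdot (n/m_1)\cdot m_1\cdot (n/m_2)\right) = O(n^{3-\beta})$. All three terms are truly subcubic, yielding an \oMv algorithm of running time $O(n^{3-\epsilon'})$ for some $\epsilon'>0$, contradicting \Cref{oMv hard}.

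The only delicate step is balancing $\beta$: it must be small enough for $(c-1)(\gamma+1)\beta<1$ so that preprocessing is subcubic, yet $m_1$ and $m_2$ must still grow polynomially in $n$ so that the $\oo$ savings in $c(m_1,m_2,n)$ are not washed out; since $c$, $\gamma$, and $\delta$ are all constants, such a $\beta$ exists. Handling the floor in $m_1=\lfloor m_2^{\gamma}\rfloor$, the divisibility of $n$ by $m_1$ and $m_2$, and the composition of online guarantees are all routine (padding and straightforward simulation). Conceptually, the theorem just formalizes that \emph{polynomial preprocessing cannot help}, because one can always feed a $\goMv$ subroutine enough vectors — and reuse its preprocessing across enough blocks — to drive the amortized preprocessing cost below the trivial bound, for any $\gamma>0$.
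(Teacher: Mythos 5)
Your proposal is correct and is essentially the paper's own argument: the paper merely factors the same tiling into two steps (\Cref{oMv<=oMv n*n} and \Cref{oMv hard small preproc} handle \oMv via blocks with mildly superlinear preprocessing, and the self-reduction \Cref{large preproc oMv<=oMv} shrinks the blocks to size $n^{\Theta(\beta)}$ chosen according to the polynomial degree $c$, with the same $n^{3}/m_{2}$ bit-wise OR recombination cost), so your single-shot decomposition of the $n\times n$ matrix into $m_{1}\times m_{2}$ blocks with $n_{3}=n$ is exactly the composition of those two lemmas. The one detail you gloss over is the error probability: with $\Theta(n^{2}/(m_{1}m_{2}))$ block instances each allowed to err with probability $1/3$, you must amplify (the paper reruns each block computation $O(\log n)$ times, takes the majority answer, and union-bounds), which costs only a logarithmic factor and does not affect the truly subcubic bound.
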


The rest of this section is devoted to proving the above theorem. 
The proof proceeds in two steps. 
First, we show that, assuming \Cref{oMv hard}, there is no algorithm for $\goMv$ 
when the preprocessing time is $(n_{1}n_{2})^{1+\epsilon}$ for any constant $\epsilon<1/2$ and 
the computation time is $\oo(n_1 n_2 n_3)$.

\begin{lemma}
\label{oMv<=oMv n*n}
For any constant $\gamma>0$ and integer $n$, fix any $n_1,n_2$ where $n_1=n_2^\gamma$, $\max\{n_1,n_2\}=n$ and $n_3=n$.\footnote{
	Actually, we need $n_1 = \lfloor n_2^\gamma \rfloor$ 
	but from now we will always omit it and assume that $n_2^\gamma$ is an integer.
	This affects the running time of the statement by at most a constant factor.
}
Suppose there is an algorithm $\cA$ for $\goMv$ with parameters $n_1,n_2,n_3$ using 
preprocessing time $p(n_{1},n_{2})$ and
computation time $c(n_{1},n_{2},n_3)$ that has an error probability of at most $\delta$. 
Then there is an algorithm $\cB$ for $\oMv$ with parameter $n$ using
(no preprocessing time and) computation time $\tilde{O}(\frac{n^{2}}{n_{1}n_{2}}p(n_{1},n_{2})+\frac{n^{2}}{n_{1}n_{2}}c(n_{1},n_{2},n)+\frac{n^{3}}{n_{2}})$ that has an error probability of at most $\delta$.
\end{lemma}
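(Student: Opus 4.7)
The plan is to tile the $n \times n$ matrix $M$ into a grid of $\frac{n}{n_1} \times \frac{n}{n_2}$ blocks, each of size $n_1 \times n_2$, and to run an independent copy of $\cA$ on each block. Concretely, let $M_{i,j}$ denote the block in row-block $i \in [n/n_1]$ and column-block $j \in [n/n_2]$, and create one instance $\cA_{i,j}$ of $\goMv$ on matrix $M_{i,j}$ with vector-sequence length $n_3 = n$. Preprocessing all blocks takes $\frac{n^2}{n_1 n_2} \cdot p(n_1, n_2)$ time, which in the statement of the lemma is folded into the computation time of $\cB$ (since $\cB$ itself is required to have \emph{no} preprocessing phase, matching the setup of the \oMv conjecture).

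When the $t$-th online vector $v^t \in \{0,1\}^n$ arrives, first split it into $n/n_2$ contiguous subvectors $v^{t,(1)}, \dots, v^{t,(n/n_2)}$ of length $n_2$. Then, for each $j$, feed $v^{t,(j)}$ into each of the $n/n_1$ instances $\cA_{1,j}, \dots, \cA_{n/n_1, j}$, obtaining subvectors $M_{i,j} v^{t,(j)}$ of length $n_1$. Finally, for each row-block $i$, take the bitwise OR $\bigvee_{j=1}^{n/n_2} M_{i,j} v^{t,(j)}$, which is precisely the $i$-th row-block of $Mv^t$. Concatenating these gives the required answer before $v^{t+1}$ needs to be read, so the online property is preserved. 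Running all instances across the whole sequence costs $\frac{n^2}{n_1 n_2} \cdot c(n_1, n_2, n)$, and the OR-combination costs $n_1 \cdot \frac{n}{n_2}$ per row-block per time step, i.e.\ $n \cdot \frac{n}{n_1} \cdot \frac{n}{n_2} \cdot n_1 = \frac{n^3}{n_2}$ in total; all other bookkeeping (splitting, outputting) is dominated by $\frac{n^3}{n_2}$.

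The subtle point is controlling the error probability, since a naive union bound over the $\frac{n^2}{n_1 n_2} \le n^2$ block instances would blow the error up to a constant times $\delta \cdot n^2$. I will absorb this into the $\tilde O(\cdot)$ by standard boosting: replace each $\cA_{i,j}$ by $\Theta(\log n)$ independent copies and output the coordinate-wise majority at each time step $t$. Each copy is correct on each individual time step with probability at least $1 - \delta$ (since error $\le \delta$ over the full sequence implies error $\le \delta$ at any fixed step), so by a Chernoff bound the majority is wrong at a given step with probability at most $n^{-c}$ for any constant $c$ we want; a union bound over all $\tfrac{n^2}{n_1 n_2} \cdot n$ (instance, time step) pairs then keeps the overall error bounded by $\delta$. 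This multiplies the per-instance cost by a $\Theta(\log n)$ factor, hidden by $\tilde O$, and yields the claimed bound $\tilde O\!\bigl(\tfrac{n^2}{n_1 n_2} p(n_1, n_2) + \tfrac{n^2}{n_1 n_2} c(n_1, n_2, n) + \tfrac{n^3}{n_2}\bigr)$. The main (very mild) obstacle is the bookkeeping for the boosting argument — verifying that per-step error is inherited from whole-sequence error, and that independent repetition is legitimate despite each copy maintaining state across the online sequence; both follow because the copies are run in parallel on the same inputs and never interact.
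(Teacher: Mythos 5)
Your proposal is correct and follows essentially the same route as the paper's own proof: partition $M$ into $n_1\times n_2$ blocks, run an instance of $\cA$ per block, bitwise-OR the block products (costing $n^3/n_2$), and control the error by $O(\log n)$-fold repetition with majority vote via a Chernoff bound. The only cosmetic difference is that you spell out the per-step versus whole-sequence error bookkeeping a bit more explicitly than the paper does.
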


\begin{proof}
We will construct $\cB$ by using $\cA$ as a subroutine. 
We partition $M$ into blocks $\{M_{x,y}\}_{1\le x\le n/n_{1},1\le y\le n/n_{2}}$
where $M_{x,y}$ is of size $n_{1}\times n_{2}$.\footnote{
	Here we assume that $n_1$ and $n_2$ divides $n$ and we will similarly assume this whenever we divide a matrix into a blocks.
	This assumption can be removed easily: for each ``boundary'' blocks $M_{x,y}$ where $x = \lceil n/n_1 \rceil$ or $y = \lceil n/n_2 \rceil$, we keep the size $M_{x,y}$ to be $n_1 \times n_2$ but it may overlap with other block. 
	This will affect the running time by at most a constant factor.
} 
We feed $M_{x,y}$ to an instance $I_{x,y}$ of $\cA$ and preprocess 
using $\frac{n^{2}}{n_{1}n_{2}}p(n_{1},n_{2})$
time.
For each vector $v^t$, we partition it into blocks $\{v_{y}^{t}\}_{1\le y\le n/n_{2}}$ each
of size $n_{2}$. For each $x,y$, we compute $M_{x,y}v_{y}^{t}$
using the instance $I_{x,y}$ for all $t\le n$.
The total time for computing $M_{x,y}v_{y}^{t}$, for all $x,y,t$, is 
$\frac{n^{2}}{n_{1}n_{2}}c(n_{1},n_{2},n)$. 
We keep the error probability to remain at most $\delta$ by a standard application of the Chernoff bound:
repeat the above procedure for computing $M_{x,y}v_{y}^{t}$, for each $x,y,t$, 
$O(\log(n_1 n_2 n_3))$ many times and take the most frequent answer.

Let $c^{t}=Mv^{t}$. We write $c^{t}$ as blocks $\{c_{x}^{t}\}_{1\le x\le n/n_{1}}$
each of size $n_{1}$. Since $c_{x}^{t}=\bigvee_{y}M_{x,y}v_{y}^{t}$ (bit-wise OR)
for each $x,t$, we can compute $c_{x}^{t}$ in time $\frac{n}{n_{2}}n_{1}$
for each $x,t$ (there are $\frac{n}{n_2}$ many $y$'s and $M_{x,y}v_{y}^{t}$ is a vector of size $n_1$). The total time for this, over all $x,t$, is $\frac{n}{n_{1}}n\times\frac{n}{n_{2}}n_{1}=\frac{n^{3}}{n_{2}}$.
\end{proof}

\begin{corollary}
\label{oMv hard small preproc}
For any constants $\gamma>0$ and $\epsilon<1/2$, \Cref{oMv hard} implies that there is no algorithm for $\goMv$ with parameters $n_1,n_2,n_3$
using preprocessing time $p(n_{1},n_{2}) \le (n_{1}n_{2})^{1+\epsilon}$ and
computation time $c(n_{1},n_{2},n_{3})=\oo(n_{1}n_{2}n_{3})$ that has an error probability of at most $1/3$.
\end{corollary}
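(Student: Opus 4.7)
The plan is to derive a contradiction with Conjecture~\ref{oMv hard} by plugging the hypothetical fast $\goMv$ algorithm $\cA$ into Lemma~\ref{oMv<=oMv n*n} as a subroutine. If such an $\cA$ existed with preprocessing $p(n_1,n_2) \le (n_1 n_2)^{1+\epsilon}$ and computation $c(n_1,n_2,n_3) = \oo(n_1 n_2 n_3)$, the lemma would give an \oMv algorithm $\cB$ on $n \times n$ inputs running in time $\tilde{O}(T_1 + T_2 + T_3)$, where $T_1 = \tfrac{n^2}{n_1 n_2}\, p(n_1,n_2)$, $T_2 = \tfrac{n^2}{n_1 n_2}\, c(n_1,n_2,n)$, and $T_3 = \tfrac{n^3}{n_2}$. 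My goal is to choose $n_1,n_2$ compatibly with the $\goMv$ promise $n_1 = n_2^\gamma$ and $\max(n_1,n_2)=n$, and then show each $T_i$ is $O(n^{3-\delta})$ for some fixed $\delta > 0$.

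Set $n_2 = n$ and $n_1 = n^\gamma$ if $\gamma \le 1$, and $n_1 = n$ and $n_2 = n^{1/\gamma}$ otherwise. Write $\beta := \min(\gamma,1/\gamma) \in (0,1]$, so that $\min(n_1,n_2) = n^\beta$ and $n_1 n_2 = n^{1+\beta}$. The ``gluing'' term $T_3$ is immediate: $n_2 \ge n^{\min(1,1/\gamma)}$, so $T_3 = O(n^{3 - \min(1,1/\gamma)})$. The computation term $T_2$ is only slightly harder: unfolding the $\oo$ notation, there is a constant $\epsilon' > 0$ with $c(n_1,n_2,n) = O\!\bigl(n_1^{1-\epsilon'} n_2 n + n_1 n_2^{1-\epsilon'} n + n_1 n_2 n^{1-\epsilon'}\bigr)$, and dividing by $n_1 n_2 / n^2$ gives $T_2 = O(n_1^{-\epsilon'} n^3 + n_2^{-\epsilon'} n^3 + n^{3-\epsilon'}) = O(n^{3 - \beta \epsilon'})$ using $n_1, n_2 \ge n^\beta$.

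The step I expect to be the crux is controlling the preprocessing term, since this is the only place the hypothesis $\epsilon < 1/2$ is used. We have $T_1 \le \tfrac{n^2}{n_1 n_2}(n_1 n_2)^{1+\epsilon} = n^2 (n_1 n_2)^\epsilon$, and since $n_1 n_2 \le n^2$ with equality essentially when $\gamma = 1$, this can be as large as $n^{2+2\epsilon}$. This is strictly subcubic exactly when $\epsilon < 1/2$, yielding $T_1 = O(n^{3-(1-2\epsilon)})$; any weakening of the hypothesis on $\epsilon$ would break the argument at this point, and no slack in $\gamma$ is available to help.

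Combining the three bounds, $\cB$ runs in $O(n^{3-\delta})$ time with $\delta := \min(1-2\epsilon,\ \beta\epsilon',\ \min(1,1/\gamma)) > 0$. The Chernoff amplification in Lemma~\ref{oMv<=oMv n*n} preserves an error probability of $1/3$, so $\cB$ is a truly subcubic randomized algorithm for \oMv with error at most $1/3$, contradicting Conjecture~\ref{oMv hard}. This completes the proof.
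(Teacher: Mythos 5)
Your proposal is correct and follows essentially the same route as the paper: plug the hypothetical $\goMv$ algorithm into \Cref{oMv<=oMv n*n} with $\max\{n_1,n_2\}=n$, $n_1=n_2^\gamma$, $n_3=n$, bound the preprocessing term by $n^{2+2\epsilon}$ (where $\epsilon<1/2$ is used), and bound the computation and gluing terms via $n_1,n_2\ge n^{\min\{\gamma,1/\gamma\}}$ to get a truly subcubic \oMv algorithm, contradicting \Cref{oMv hard}. The only difference is that you unfold the $\oo$ notation and track an explicit exponent $\delta$, which the paper leaves implicit.
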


\begin{proof}
Suppose there is such an algorithm. Then by \Cref{oMv<=oMv n*n},
we can solve $\oMv$ with parameter $n$
in time $\tilde{O}( \frac{n^{2}}{n_{1}n_{2}}p(n_{1},n_{2})+\frac{n^{2}}{n_{1}n_{2}}c(n_{1},n_{2},n)+\frac{n^{3}}{n_{2}} )$
with error probability at most $1/3$ where $n_1=n_2^\gamma$ and $\max\{n_1,n_2\}=n$. 
We have $\frac{n^{2}}{n_{1}n_{2}}p(n_{1},n_{2}) \le n^{2}(n_{1}n_{2})^{\epsilon} \le n^{2+2\epsilon} = \oo(n^3)$,
and $\frac{n^{2}}{n_{1}n_{2}}c(n_{1},n_{2},n)+\frac{n^{3}}{n_{2}}=\oo(n^{3})$
where the last equality holds because $n_{1},n_{2}\ge \min\{ n^{\gamma},n^{1/\gamma} \}$.
The total time is $\oo(n^{3})$ and the error probability is at most $1/3$ contradicting \Cref{oMv hard}.
\end{proof}

For the second step, we show that the hardness of $\goMv$ even when the preprocessing time is $p(n_{1},n_{2})=\poly(n_{1},n_{2})$.

\begin{lemma}
\label{large preproc oMv<=oMv}
For any constant $\gamma>0$ and integers $n_1,n_2,n_3$ where $n_1 = n_2^\gamma$,
fix any $k_{1},k_{2}$ where $k_{1}\le n_1, k_2 \le n_2$, and $k_1=k_2^\gamma$.
Suppose there is an algorithm $\cA$ for $\goMv$ with parameters $k_1,k_2,n_3$,
preprocessing time $p(k_{1},k_{2})$ and computation time $c(k_{1},k_{2},n_{3})$ that has an error probability of at most $\delta$.
Then there is an algorithm $\cB$ for $\goMv$ with parameters $n_1,n_2,n_3$,
preprocessing time $\tilde{O}( \frac{n_{1}n_{2}}{k_{1}k_{2}}p(k_{1},k_{2}) )$ and
computation time $\tilde{O}( \frac{n_{1}n_{2}}{k_{1}k_{2}}c(k_{1},k_{2},n_{3})+\frac{n_{1}n_{2}n_{3}}{k_{2}} )$ that has an error probability of at most $\delta$.
\end{lemma}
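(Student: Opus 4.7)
The plan is to mimic the block-decomposition strategy of \Cref{oMv<=oMv n*n}, except that instead of shrinking all the way down to an $\oMv$ instance of dimension $n$, we only shrink to the block size $(k_1, k_2)$ that $\cA$ already expects. Since the promise $k_1 = k_2^\gamma$ is compatible with $n_1 = n_2^\gamma$, every such sub-block automatically satisfies the $\goMv$ promise and can be fed to $\cA$ verbatim; no change of parameter is needed.

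Concretely, I would first partition $M$ into $(n_1/k_1)(n_2/k_2)$ blocks $\{M_{x,y}\}$ of size $k_1 \times k_2$ and run the preprocessing phase of $\cA$ on each block independently, paying $O\!\left(\tfrac{n_1 n_2}{k_1 k_2} \, p(k_1, k_2)\right)$ in total. When the $t$-th vector $v^t$ arrives, I would split it into $n_2/k_2$ sub-vectors $v^t_y$ of length $k_2$ and forward $v^t_y$ to every instance $I_{x,y}$ of $\cA$ sitting in column strip $y$. Across the whole online sequence of $n_3$ vectors, each of the $n_1 n_2/(k_1 k_2)$ instances processes exactly $n_3$ queries online, so the total time spent inside $\cA$ is $O\!\left(\tfrac{n_1 n_2}{k_1 k_2} \, c(k_1, k_2, n_3)\right)$.

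To assemble the output I would use the identity $(Mv^t)_x = \bigvee_{y} M_{x,y}\, v^t_y$ (bit-wise OR of length-$k_1$ vectors), which costs $O(k_1 \cdot n_2/k_2)$ per pair $(x,t)$ and therefore $O(n_1 n_2 n_3 / k_2)$ in total. Summing the three contributions matches the claimed preprocessing and computation time exactly.

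The only real subtlety is that we have $N := n_1 n_2/(k_1 k_2)$ parallel instances of $\cA$, so a naive union bound over their failure events inflates the error from $\delta$ to $N\delta$. I would handle this exactly as in \Cref{oMv<=oMv n*n}: replace each call to $\cA$ by $\Theta(\log(N n_3 / \delta))$ independent repetitions and take the per-query majority answer, which by a standard Chernoff estimate pushes each amplified instance's error below $\delta/N$, and then union-bounding over the $N$ instances gives overall error at most $\delta$. The extra logarithmic factor is absorbed into the $\tilde{O}(\cdot)$ notation, so none of the running-time bounds changes. I do not anticipate any real obstacle beyond this routine bookkeeping; the non-trivial reduction step has already been done in the previous lemma, and here we are simply reusing it at a coarser granularity.
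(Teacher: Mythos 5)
Your proposal is correct and follows essentially the same route as the paper's proof: the same $k_1\times k_2$ block decomposition of $M$, the same per-vector splitting and per-block calls to $\cA$, the same bit-wise OR assembly costing $\frac{n_1 n_2 n_3}{k_2}$, and the same Chernoff-plus-majority amplification (absorbed into $\tilde{O}(\cdot)$) to keep the error at $\delta$. No substantive difference from the paper's argument.
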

\begin{proof}
We will construct $\cB$ by using $\cA$ as a subroutine.
We partition $M$ into blocks $\{M_{x,y}\}_{1\le x\le n_{1}/k_{1},1\le y\le n_{2}/k_{2}}$
where $M_{x,y}$ is of size $k_{1}\times k_{2}$. We feed $M_{x,y}$
to an instance $I_{x,y}$ of $\cA$ and then preprocess. 
This takes $\frac{n_{1}n_{2}}{k_{1}k_{2}}p(k_{1},k_{2})$ total preprocessing time.

Once the vector $v^{t}$ arrives, we partition it into blocks $\{v_{y}^{t}\}_{1\le y\le n_{2}/k_{2}}$
each of size $k_{2}$. For each $x,y$, we compute $M_{x,y}v_{y}^{t}$
using the instance $I_{x,y}$. The total computation time, over all $t$, for
doing this will be $\frac{n_{1}n_{2}}{k_{1}k_{2}}c(k_{1},k_{2},n_{3})$.
By repeating the procedure for a logarithmic number of times as in \Cref{oMv<=oMv n*n}, the error probability remains at most $\delta$. 

Let $c^{t}=Mv^{t}$. We write $c^{t}$ as blocks $\{c_{x}^{t}\}_{1\le x\le n_{1}/k_{1}}$
each of size $k_{1}$. Since $c_{x}^{t}=\bigvee_{y}M_{x,y}v_{y}^{t}$
for each $x,t$, we can compute $c_{x}^{t}$ in time $\frac{n_{2}}{k_{2}}k_{1}$
for each $x,t$. The total time for this, over all $x,t$, is $\frac{n_{1}}{k_{1}}n_{3}\times\frac{n_{2}}{k_{2}}k_{1}=\frac{n_{1}n_{2}n_{3}}{k_{2}}$.\end{proof}

We conclude the proof of \Cref{general oMv hard}.

\begin{proof}[Proof of \Cref{general oMv hard}]
We construct an algorithm $\cB$ for $\goMv$ with parameters $n_1,n_2,n_3$ that contradicts \Cref{oMv hard} 
by using an algorithm $\cA$ from the statement of \Cref{general oMv hard} as a subroutine. 
That is, $\cA$ is an algorithm for $\goMv$ with parameters $k_1,k_2,k_3$,
preprocessing time $(k_{1}k_{2})^{c}$ for some constant $c$,
and computation time $\oo(k_{1}k_{2}k_{3})$ with error probability $1/3$.
Let $\epsilon < \min\{1/2,c\}$ be a constant.
We choose $k_1,k_2$ such that $k_{1}=n_{1}^{\epsilon/c}$, $k_{2}=n_{2}^{\epsilon/c}$ and $k_3 = n_3$.

Note that $k_1 \le n_1$, $k_2 \le n_2$ and $k_1 = k_2^\gamma$. So we can apply \Cref{large preproc oMv<=oMv}
and get $\cB$ which has error probability $1/3$ and, ignoring polylogarithmic factors, uses
preprocessing time $\frac{n_{1}n_{2}}{k_{1}k_{2}}p(k_{1},k_{2})\le n_{1}n_{2}\cdot p(k_{1},k_{2})\le(n_{1}n_{2})^{1+\epsilon}$
and computation time $\frac{n_{1}n_{2}}{k_{1}k_{2}}c(k_{1},k_{2},n_{3})+\frac{n_{1}n_{2}n_{3}}{k_{2}}=\frac{n_{1}n_{2}}{n_{1}^{\epsilon/c}n_{2}^{\epsilon/c}}\oo(n_{1}^{\epsilon/c}n_{2}^{\epsilon/c}n_{3})+\frac{n_{1}n_{2}n_{3}}{n_{2}^{\epsilon/c}}=\oo(n_{1}n_{2}n_{3})$.
This contradicts \Cref{oMv hard} by \Cref{oMv hard small preproc}.
\end{proof}

\subsection{The Online Vector-Matrix-Vector Multiplication Problem (\ouMv)}\label{sec:full hardness ouMv}

Although we base our results on the hardness of \oMv, the starting point of most of our reductions is a slightly different problem called online vector-matrix-vector multiplication problem. In this problem, we multiply the matrix with two vectors, one from the left and one from the right.

\begin{definition}
	[$\gouMv$ problem]
	Let $\gamma>0$ be a fixed constant.
	An algorithm for the $\gouMv$ problem is given \emph{parameters} $n_1,n_2,n_3$ as its input with the promise that 
	$n_1 = \lfloor n_2^\gamma \rfloor$.
	Next, it is given a matrix $M$ of size $n_{1}\times n_{2}$ that can be preprocessed.
	Let $p(n_{1},n_{2})$ denote the \emph{preprocessing time}.
	After the preprocessing, an online sequence of vector pairs $(u^{1},v^{1}),\dots,(u^{n_{3}},v^{n_{3}})$ is presented one after the other and the task is to compute each $(u^{t})^{\top}Mv^{t}$ before $(u^{t+1},v^{t+1})$ arrives.
	Let $c(n_{1},n_{2},n_{3})$ denote the \emph{computation time} over the whole sequence.
	The $\guMv$ problem with parameters $n_1,n_2$ is the special case of $\gouMv$ where $ n_3 = 1 $.
\end{definition}
We also write $\ouMv$ and $\uMv$ to refer to, respectively, $\gouMv$ and $\guMv$ without the promise.
Our reductions will exploit the fact that the result of this multiplication is either $ 0 $ or $ 1 $; thus using only $ 1 $ bit as opposed to $ n $ bits in $ \oMv $.
Starting from $\ouMv$ instead of $\oMv$ will thus give simpler reductions and better lower bounds on the query time.
Using a technique for finding ``witnesses'', which will be defined below, when the result of a vector-matrix-vector multiplication is $ 1 $, we can reduce the  $ \goMv $ problem to the $ \gouMv $ problem and establish the following hardness for $ \gouMv $. 

\begin{theorem}
	\label{ouMv hard}
	For any constant $\gamma>0$,
	\Cref{oMv hard} implies that there is no algorithm for $\gouMv$ with parameters $n_1,n_2,n_3$ using
	preprocessing time $ p(n_1, n_2) = poly(n_{1},n_{2})$ and
	computation time $c(n_{1},n_{2},n_{3})=\oo(n_{1}n_{2}n_{3})$ that has an error probability of at most $1/3$.
\end{theorem}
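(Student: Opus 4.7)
My plan is to reduce $\goMv$ (which is hard by \Cref{general oMv hard}) to $\gouMv$ via a standard witness-finding primitive, so that an efficient $\gouMv$ algorithm would yield an efficient $\goMv$ algorithm and thereby contradict \Cref{general oMv hard}. Assume for contradiction that an algorithm $A$ solves $\gouMv$ with polynomial preprocessing and computation time $c_A(n_1,n_2,n_3) = \oo(n_1 n_2 n_3)$ and error probability at most $1/3$; from $A$ I will build a $\goMv$ algorithm $B$ violating \Cref{general oMv hard}.

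\emph{Witness finding.} Given the OuMv oracle and any vector $v$, I recover $Mv$ as follows. Since $u^\top M v = \bigvee_{i:u_i=1}(Mv)_i$, we have $u^\top M v = 1$ iff some index $i$ satisfies $u_i = (Mv)_i = 1$ (a \emph{witness}). Starting from $u = \mathbf{1}$, a single witness can be located in $O(\log n_1)$ queries by binary search on $u$: split $u$ into two halves, query OuMv on each, and recurse on a half whose answer is $1$. By iteratively locating a witness $i$, recording $(Mv)_i = 1$, and setting $u_i \gets 0$, I enumerate all $k$ one-positions of $Mv$ in $O((k+1)\log n_1)$ queries, and verify that no further witness exists with one additional $0$-answer.

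\emph{The reduction.} $B$ preprocesses $M$ once using $A$'s preprocessing, and for each online OMv round $t$ runs witness finding, feeding its internal OuMv queries sequentially as the online steps of a single instance of $A$. Over all $n_3$ rounds the total OuMv query count is $\tilde O(T + n_3)$ with $T = \sum_t |Mv^t| \le n_1 n_3$, so the entire query sequence is served by one online OuMv instance with $N_3 = \tilde O(n_1 n_3)$. Each OuMv answer is amplified by $O(\log N_3)$ independent repetitions followed by majority voting, so a union bound keeps the overall error at most $1/3$. Hence $B$ runs in time $\poly(n_1, n_2) + \polylog(n_1, n_3) \cdot c_A(n_1, n_2, N_3) + O(n_1 n_3)$.

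\emph{Deriving the contradiction.} Plugging $c_A = \oo(n_1 n_2 N_3)$ and $N_3 = \tilde O(n_1 n_3)$ gives a running time of $\oo(n_1^2 n_2 n_3)$ up to polylog factors. By the definition of $\oo$, this decomposes into three terms, each offering a fixed polynomial saving $\epsilon > 0$ in one of the three dimensions. Choosing the $\goMv$ parameter $n_3$ polynomially large relative to $n_1$ (for instance, scaling as $n_3 = n_1^{\Theta(1/\epsilon)}$, allowed since \Cref{general oMv hard} is valid for every $n_3$) makes each term at most $\oo(n_1 n_2 n_3)$, contradicting \Cref{general oMv hard} on the resulting $(n_1, n_2, n_3)$ instance. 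I expect the main obstacle to be exactly this parameter-tuning step: the factor-$n_1$ blowup in $N_3$ introduced by witness finding has to be absorbed by $A$'s $\epsilon$-saving, which is most direct when the saving sits in $N_3$ (large $n_3$ then amplifies the saving), while the cases in which the saving lies in $N_1$ or $N_2$ require a careful case analysis of the dominant $\oo$-term in the chosen regime of parameters satisfying the $\gamma$-promise.
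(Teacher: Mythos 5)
Your witness-finding primitive is the same one the paper uses, but the way you deploy it has a genuine gap: you run it against a \emph{single} $\gouMv$ instance on the full $n_1\times n_2$ matrix, so the number of online rounds blows up to $N_3=\tilde O(n_1n_3)$, and you then hope to absorb this factor-$n_1$ blowup by taking $n_3$ polynomially large. That tuning step cannot work. The hypothesis on $\cA$ only gives $c_\cA(N_1,N_2,N_3)=O(N_1^{1-\eps}N_2N_3+N_1N_2^{1-\eps}N_3+N_1N_2N_3^{1-\eps})$, and an algorithm whose cost is exactly the first term, $n_1^{1-\eps}n_2N_3$ with some small $\eps<1$, is perfectly consistent with it. Your reduction then costs $\tilde\Theta(n_1^{1-\eps}n_2\cdot n_1n_3)=\tilde\Theta(n_1^{2-\eps}n_2n_3)$; dividing by the target $n_1n_2n_3$ leaves an excess factor $n_1^{1-\eps}$ that is independent of $n_3$, because both your cost and the target scale linearly in $n_3$. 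Hence no choice of $n_3$ makes this $\oo(n_1n_2n_3)$, and \Cref{general oMv hard} is not contradicted. (The second term is similarly problematic whenever $\eps\le\gamma$, since its excess is $n_2^{\gamma-\eps}$.) Enlarging $n_3$ only helps when the saving happens to sit in the third parameter, and you cannot force that case.

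The missing idea is the paper's block decomposition (\Cref{oMv <= ouMv}): instead of querying $\ouMv$ on the whole matrix, partition $M$ into $\frac{n_1}{k_1}\cdot\frac{n_2}{k_2}=n_1$ blocks of size $k_1\times k_2$ with $k_1k_2=n_2$ and $k_1=k_2^\gamma$ (so the $\gamma$-promise is preserved), run one witness-listing instance per block with round budget $n_3$ (rolling back to the post-preprocessing state when the budget is exhausted, as in \Cref{list witness}), and clear found witnesses from the $u$-vector as you sweep over the blocks of a row. Then the total number of extra witness queries is $\sum_{x,y,t}|W_{x,y,t}|\le n_1n_3$, which is exactly matched by the $n_1$ blocks times $n_3$ rounds each, so the overhead is only a polylogarithmic factor, and the total cost is $\tilde O(n_1\cdot c(k_1,k_2,n_3))=n_1\cdot\oo(k_1k_2n_3)=\oo(n_1n_2n_3)$, which does contradict \Cref{general oMv hard}. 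In short: the per-query payload of $\ouMv$ is one bit, so the loss from reconstructing $Mv^t$ must be paid for by shrinking the matrix seen by $\cA$, not by inflating the number of rounds of a full-size instance.
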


An $\guMv$ algorithm with preprocessing time $p(n_1,n_2)$ and computation time $c(n_1,n_2)$ implies
an $\gouMv$ algorithm with preprocessing time $\tilde{O}(p(n_1,n_2))$, computation time $\tilde{O}(n_3 c(n_1,n_2))$ and the same error probability by a standard application of the Chernoff bound as in the proof of \Cref{oMv<=oMv n*n}.
Therefore, we also get the following:
\begin{corollary}
	\label{uMv hard}
	For any constant $\gamma>0$,
	\Cref{oMv hard} implies that there is no algorithm for $\guMv$ with parameters $n_1,n_2$
	using preprocessing time $p(n_1,n_2)=\poly(n_1,n_2)$ and
	computation time $c(n_1,n_2)=\oo(n_1 n_2)$ that has an error probability of at most $1/3$.
\end{corollary}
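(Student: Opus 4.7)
The plan is to derive this corollary directly from \Cref{ouMv hard} by the standard reduction that converts a single-query algorithm into an online algorithm by running it once per vector pair, together with probability amplification. Suppose for contradiction that there is an algorithm $\cA$ for $\guMv$ with preprocessing time $p(n_1,n_2)=\poly(n_1,n_2)$ and computation time $c(n_1,n_2)=\oo(n_1 n_2)$ whose error probability is at most $1/3$. My goal is to build an algorithm $\cB$ for $\gouMv$ with the same parameter $\gamma$ and parameters $n_1,n_2,n_3$ whose preprocessing and computation times violate \Cref{ouMv hard}.

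First I would amplify $\cA$'s success probability. Run $\cA$ independently $k = \Theta(\log(n_1 n_2 n_3))$ times on each query and take the majority vote; by a standard Chernoff bound the error probability of this amplified routine $\cA'$ on any single query $(u,v)$ is at most, say, $1/(10\, n_3)$. Since the preprocessing is done once and can be shared across repetitions (or simply redone $k$ times), $\cA'$ has preprocessing time $\tilde{O}(p(n_1,n_2))$, which is still $\poly(n_1,n_2)$, and per-query computation time $\tilde{O}(c(n_1,n_2))$.

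Next I would define $\cB$ for $\gouMv$ as follows: on input $M$ of size $n_1 \times n_2$, run $\cA'$'s preprocessing on $M$; then, upon receiving each pair $(u^t,v^t)$ in the online phase, invoke $\cA'$ on $(u^t,v^t)$ and output the returned bit before reading $(u^{t+1},v^{t+1})$. The preprocessing time of $\cB$ is $\poly(n_1,n_2)$. The total computation time is $\tilde{O}(n_3 \cdot c(n_1,n_2)) = \oo(n_1 n_2 n_3)$, since $c(n_1,n_2)=\oo(n_1 n_2)$ and the $\oo(\cdot)$ notation absorbs polylogarithmic factors in $n_1,n_2,n_3$. By a union bound over the $n_3$ queries, the total error probability of $\cB$ is at most $n_3 \cdot 1/(10\, n_3) \le 1/3$. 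This contradicts \Cref{ouMv hard}, completing the proof.

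There is no real obstacle here; the only points requiring a bit of care are (i) checking that the Chernoff amplification factor $\tilde{O}(1)$ preserves both the polynomial preprocessing bound and the $\oo(n_1 n_2 n_3)$ computation bound (it does, since $\oo$ is defined to tolerate polylogarithmic slack), and (ii) noting that the promise $n_1=\lfloor n_2^\gamma\rfloor$ is the same for $\guMv$ and $\gouMv$, so the reduction does not cross parameter regimes.
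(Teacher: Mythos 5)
Your proposal is correct and matches the paper's own argument: the paper derives \Cref{uMv hard} from \Cref{ouMv hard} by exactly this reduction, running the $\guMv$ algorithm once per vector pair with Chernoff-based amplification so that the preprocessing stays $\poly(n_1,n_2)$ and the total computation time is $\tilde{O}(n_3\, c(n_1,n_2)) = \oo(n_1 n_2 n_3)$. Your extra care about the union bound over the $n_3$ queries and the shared promise $n_1=\lfloor n_2^\gamma\rfloor$ is consistent with what the paper leaves implicit.
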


The rest of this section is devoted to the proof of \Cref{ouMv hard}.

\begin{definition}
[Witness of $\ouMv$]
We say that any index $i$ is a {\em witness} for a pair of vectors  $(u^{t},v^{t})$ in an instance $I$
of $\ouMv$ if $u_{i}^{t}\wedge(Mv^{t})_{i}=1$, i.e., the $i$-th entries of vectors $u^t$ and $Mv^{t}$ are both one.
\end{definition}

Observe that $(u^{t})^{\top}Mv^{t}=1$ if and only if there is a witness for $(u^{t},v^{t})$.  
The problem of \emph{listing all the witnesses of $\gouMv$} is defined similarly as $\gouMv$ except that
for each vector pair $(u^{t},v^{t})$, we have to list all witnesses of $(u^{t},v^{t})$ 
(i.e., output every index $i$ such that $i$ is a witness)
before $(u^{t+1},v^{t+1})$ arrives.
We first show a reduction from the problem of listing all the witnesses of $\gouMv$ to the $\gouMv$ problem itself.
The reduction is similar to \cite[Lemma 3.2]{WilliamsW10}.

\begin{lemma}
\label{list witness}
Fix any constant $\gamma>0$ and integers $n_1,n_2$ and $n_3$. 
Suppose there is an algorithm $\cA$ for $\gouMv$ with parameters $n_1,n_2,n_3$
preprocessing time $p(n_{1},n_{2})$ and computation time $c(n_{1},n_{2},n_{3})$ that has an error probability of at most $\delta$.
Then there is an algorithm~$\cB$ for
listing all witnesses of $\gouMv$ with the same parameters using
preprocessing time $\tilde{O}(p(n_{1},n_{2}))$ and
computation time $\tilde{O}((1 + \sum_{t}\tilde{O}(w_{t}) / n_3) \cdot c(n_{1},n_{2},n_3))$ that has an error probability of at most $\delta$, 
where $w_{t}$ is the number of witnesses of $(u^{t},v^{t})$.\end{lemma}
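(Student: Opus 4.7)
The plan is to use $\cA$ as a black-box oracle answering the restricted question ``does there exist a witness for $(u,v)$ inside an index set $A\subseteq[n_1]$?'' We implement such a query by feeding $\cA$ the pair $(u|_A,v)$, where $u|_A$ agrees with $u$ on $A$ and is $0$ outside; a $1$-answer certifies a witness in $A$ and a $0$-answer certifies none. Armed with this primitive, $\cB$ peels off witnesses one at a time by a standard binary search. For preprocessing, $\cB$ invokes $\cA$'s preprocessing on $M$; to control adaptive error it runs $\Theta(\log N)$ independent copies of $\cA$ in parallel (where $N := n_3 + O(\log n_1)\sum_t w_t$ is the a~priori bound on the total number of $\cA$-queries derived below) and answers every query by majority vote across the copies, so preprocessing costs $\tilde O(p(n_1,n_2))$.

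At round $t$, given $(u^t,v^t)$, $\cB$ first asks $\cA$ whether any witness exists; if the (majority) answer is $0$ the round is done. Otherwise $\cB$ isolates one witness by binary search: maintain an active set $A$ starting at $[n_1]$, split it into halves $A_L,A_R$, query $(u^t|_{A_L},v^t)$, recurse on $A_L$ on a $1$-answer and on $A_R$ otherwise (at least one branch must succeed). In $O(\log n_1)$ queries $A$ shrinks to a singleton $\{i\}$; $\cB$ outputs $i$, zeros the $i$-th coordinate of $u^t$, and restarts at the top until the ``any witness?'' query returns $0$. This uses $1+O(w_t\log n_1)$ queries to $\cA$ at round $t$. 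Summing yields $N = n_3 + O(\log n_1)\sum_t w_t$ queries in total. Viewing these as a single online sequence for $\cA$ of length $N$, and using that $c(n_1,n_2,\cdot)$ is (at most) linear in its third argument, the total computation cost is
\[
\tilde O\!\left(\tfrac{N}{n_3}\cdot c(n_1,n_2,n_3)\right) \;=\; \tilde O\!\left(\bigl(1+\tfrac{\sum_t w_t}{n_3}\bigr)\cdot c(n_1,n_2,n_3)\right),
\]
with the $\Theta(\log N)$ replications and the $O(\log n_1)$ binary-search overhead both absorbed into $\tilde O(\cdot)$.

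The main obstacle is the interaction between adaptivity and $\cA$'s error guarantee: $\cA$ promises failure probability at most $\delta$ over an entire online sequence, but our binary-search queries depend on $\cA$'s earlier answers, so per-query errors must be controlled. The amplification resolves this cleanly: all $\Theta(\log N)$ copies see the same (adaptively produced) query sequence, each copy errs with probability at most $\delta$ on that sequence, and hence the majority of the copies errs on any fixed query with probability at most $1/\poly(N)$; a union bound over all $N$ queries keeps $\cB$'s failure probability at most $\delta$. A secondary concern is that $N$ may exceed the nominal $n_3$, which we handle either by treating $\cA$ as usable on sequences longer than $n_3$ (paying a proportional factor that already appears in the stated bound) or, equivalently, by concatenating $\lceil N/n_3\rceil$ fresh independent instances of $\cA$; either variant works as long as $c$ is roughly additive in $n_3$, which holds for any reasonable online algorithm.
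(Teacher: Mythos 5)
Your overall strategy coincides with the paper's: implement the oracle ``is there a witness of $(u^t,v^t)$ inside $A$?'' by querying $\cA$ on $(u^t|_A,v^t)$, extract witnesses one by one via binary search at a cost of $1+O(w_t\log n_1)$ queries per pair, and handle randomization by running $O(\log)$ many independent copies and taking majority votes (your treatment of the adaptivity issue is, if anything, more explicit than the paper's). So the core of the argument is fine.

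The genuine gap is in how you handle the fact that the total number of oracle calls, $N=n_3+O(\log n_1)\sum_t w_t$, can exceed $n_3$, while $\cA$ is only guaranteed to process $n_3$ vector pairs after a single preprocessing. Your first variant ``treats $\cA$ as usable on sequences longer than $n_3$,'' justified by the assumption that $c(n_1,n_2,\cdot)$ is roughly additive in its last argument; but $\cA$ is an arbitrary black box whose guarantee is stated only for exactly $n_3$ pairs, and no monotonicity or additivity of $c$ is part of the hypothesis, so this step is not available. Your second variant, concatenating $\lceil N/n_3\rceil$ fresh instances of $\cA$, is sound for correctness but does not yield the claimed bounds: every fresh instance must re-preprocess $M$, so you either inflate the preprocessing to order $(N/n_3)\,p(n_1,n_2)$ (and $N$ is not even known in advance) or add a term of order $(N/n_3)\,p(n_1,n_2)$ to the computation time, which is not dominated by $\tilde{O}((1+\sum_t\tilde{O}(w_t)/n_3)\,c(n_1,n_2,n_3))$ when $p$ is much larger than $c$; in the downstream use of this lemma $p$ is an arbitrary polynomial, so this loss would break the reduction to $\goMv$. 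The missing idea is the paper's rollback trick: after each batch of $n_3$ queries, restore $\cA$ to its state right after preprocessing (it suffices to undo the memory cells written during the online phase, whose number is bounded by the computation time) and rerun the online phase on the same preprocessed state. This reuses the single preprocessing, costs $O(c(n_1,n_2,n_3))$ per batch including the rollback, and gives exactly the stated preprocessing time $\tilde{O}(p(n_1,n_2))$ and computation time $\tilde{O}((1+\sum_t\tilde{O}(w_t)/n_3)\,c(n_1,n_2,n_3))$, with no extra assumptions on $\cA$ or on the shape of $c$.
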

\begin{proof}
We will show a reduction for deterministic algorithms.
This reduction can be extended to work for randomized algorithms as well by a standard application of the Chernoff bound as in the proof of \Cref{oMv<=oMv n*n}. 
We construct $\cB$ using $\cA$ as a subroutine.
Let $M$ be the input matrix of~$\cB$. We use $\cA$ to preprocess $M$.

For any vector pair $(u,v)$, we say the ``\emph{query} $(u,v)$'' to $\cA$ returns true if, by using $\cA$, we get $u^\top Mv =1$.
For any a set of indices $I$ of entries of $u$, 
let $u_I$ be the $n_1$-dimensional vector where $(u_I)_i = (u)_i$ for all $i\in I$ and $(u_I)_i = 0$ otherwise. 
Suppose that $I$ contains $w$ many witnesses of $(u,v)$.
We now describe a method to identify all witnesses of $(u,v)$ in $I$ using $1+O(w\log n_1)$ queries. 
Note that the witnesses of $(u,v)$ contained in $I$ are exactly the witnesses of $(u_I,v)$.
We check if $w=0$ by querying $(u_I,v)$ one time.
If $w>0$, then we identify an arbitrary witness of $(u_I,v)$ one by one using binary search.
More precisely, if $I$ is of size one, then return the only index $i\in I$ which must be a witness.
Otherwise, let $I_0 \subset I$ be a set of size $\lfloor |I|/2 \rfloor$. 
If the query $(u_{I_0},v)$ return true, then recurse on $I_0$. Otherwise, recurse on $I\setminus I_0$. 
This takes $O(\log n_1)$ queries because $|I| \le n_1$.
Once we find a witness $i\in I$, we do the same procedure on $I\setminus\{i\}$ until we find all $w$ witnesses. 
Therefore, the total number of queries for finding $w$ many witnesses of $(u,v)$ in~$I$ is $1+O(w\log n_1)$.

Once $(u^t,v^t)$ arrives, we list all witnesses of $(u^t,v^t)$ using $1+O(w_t\log n_1)$ queries by the above procedure where $I = [n_1]$. The total number of queries is $n_{3}+\sum_{t}\tilde{O}(w_{t})$. 
However, $\cA$ is an algorithm for $\gouMv$ with parameters $n_1,n_2,n_3$.
So once there are $n_3$ queries to $\cA$, we need to roll back $\cA$ to the state right after preprocessing.
Hence, we need to roll back $\frac{n_{3}+\sum_{t}\tilde{O}(w_{t})}{n_3} = 1 + \sum_{t}\tilde{O}(w_{t}) / n_3$ times. 
Therefore, the total computation time is $(1 + \sum_{t}\tilde{O}(w_{t}) / n_3) c(n_{1},n_{2},n_3)$.
\end{proof}

Next, using \Cref{list witness}, we can show the reduction from  $\goMv$ to $\gouMv$.

\begin{lemma}
\label{oMv <= ouMv}
For any constant $\gamma>0$ and integers $n_1,n_2,n_3$ where $n_1=n_2^\gamma$, fix $k_1,k_2$ and $k_3$ such that 
$k_1 k_2 = n_2$, $k_1 = k_2^\gamma$ and $k_3 = n_3$.
Suppose there is an algorithm $\cA$ for $\gouMv$ with parameters $k_1,k_2,k_3$ and 
preprocessing time $p(k_1,k_2)$ and computation time $c(k_1,k_2,k_3)$ that has an error probability of at most $\delta$.
Then
there is an algorithm $\cB$ for $\goMv$ with parameters $n_1,n_2,n_3$ using
preprocessing time $\tilde{O}( n_{1}\cdot p(k_1,k_2 ) )$ and
computation time $\tilde{O}( n_1 \cdot c(k_1,k_2,n_3) )$ that has an error probability of at most $\delta$. 
\end{lemma}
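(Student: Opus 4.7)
The plan is to partition $M$ into $n_1$ sub-blocks of size $k_1 \times k_2$, preprocess each with an independent copy of $\cA$, and combine their outputs online via \Cref{list witness}. Since $k_1 k_2 = n_2$ and $n_1 = n_2^\gamma = k_1^{\gamma+1}$, the partition has $n_1/k_1$ row-blocks indexed by $x$ and $n_2/k_2 = k_1$ column-blocks indexed by $y$, for a total of exactly $n_1$ blocks $M_{x,y}$. Preprocessing all of them gives total preprocessing time $\tilde{O}(n_1 \cdot p(k_1,k_2))$ as claimed, where the $\tilde{O}$ absorbs a $\polylog$ amplification factor used to drive the error of each instance down to $\ll 1/n_1$.

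When the $t$-th vector $v^t$ arrives, I split it into sub-vectors $v_y^t$ of size $k_2$; the sought output $Mv^t$ then decomposes into row-blocks $c_x^t = \bigvee_y M_{x,y} v_y^t \in \{0,1\}^{k_1}$. A naive scheme would invoke \Cref{list witness} on each $I_{x,y}$ with left-vector $\mathbf{1}_{k_1}$ to read off every one-entry of $M_{x,y} v_y^t$, but the worst-case $k_1$ witnesses per call would introduce an unwanted factor of $k_1$ in the running time. To avoid this, for each $(x,t)$ I process the column-blocks $y = 1, \ldots, k_1$ sequentially, maintaining an ``undecided'' set $U \subseteq [k_1]$ initialized to $[k_1]$ and querying $I_{x,y}$ with $(u_U, v_y^t)$, where $u_U$ is the indicator vector of $U$. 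The witnesses returned are exactly the indices $i \in U$ with $(M_{x,y} v_y^t)_i = 1$; each such index is inserted into the 1-support of $c_x^t$ and removed from $U$. Because $U$ only shrinks, index $i$ is eventually added to the 1-support of $c_x^t$ iff some $M_{x,y} v_y^t$ has a 1 at position $i$, which is precisely when $(Mv^t)_i = 1$.

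The key calculation bounds the total witness count: for any fixed $(x,t)$, each index of $U$ is removed at most once, so $\sum_y w_{x,y,t} \le k_1$, and hence $\sum_{x,y,t} w_{x,y,t} \le (n_1/k_1) \cdot n_3 \cdot k_1 = n_1 n_3$. Substituting into the per-instance time bound of \Cref{list witness} gives
\[
\sum_{x,y} \tilde{O}\!\left(\!\left(1 + \tfrac{1}{n_3}\sum_t w_{x,y,t}\right) c(k_1,k_2,n_3)\right) = \tilde{O}\!\left(n_1 \cdot c(k_1,k_2,n_3)\right),
\]
matching the claimed computation time; a union bound over the $n_1$ (pre-amplified) instances keeps the overall error at most $\delta$. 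The main obstacle is this adaptive shrinking step — without it the naive scheme loses a factor of $k_1$ — and one must verify that the counter $w_t$ in \Cref{list witness} is billed against the actually-submitted pair $(u_U, v_y^t)$, so that shrinking $U$ directly reduces the accounted witnesses.
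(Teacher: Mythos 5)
Your proposal is correct and follows essentially the same route as the paper: partition $M$ into the $n_1$ blocks of size $k_1\times k_2$, run the witness-listing version of $\cA$ (via \Cref{list witness}) on each block, and adaptively zero out already-found witnesses in the left vector so that the total witness count is bounded by $n_1 n_3$, yielding computation time $\tilde{O}(n_1\cdot c(k_1,k_2,n_3))$. The only cosmetic differences are in bookkeeping (you bound $\sum_y w_{x,y,t}\le k_1$ per $(x,t)$ while the paper bounds $\sum_{x,y}|W_{x,y,t}|\le n_1$ per $t$, and you amplify each instance's error and union-bound rather than invoking the Chernoff-repetition remark), which are equivalent.
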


\begin{proof}
Again, we show a reduction for deterministic algorithms. This can be extended to work for randomized algorithms
by a standard application of the Chernoff bound as in the proof of \Cref{oMv<=oMv n*n}. 
By plugging $\cA$ into \Cref{list witness}, we have an algorithm $\cA'$ 
for listing all witnesses of $\gouMv$ with parameters $k_1,k_2,k_3$.
We will formulate $\cB$ using $\cA'$ as a subroutine.
	
Let $M$ be an input matrix $M$ of $\cB$ of size $n_{1}\times n_{2}$, 
we partition $M$ into blocks $\{M_{x,y}\}_{1\le x\le n_{1}/k_1,1\le y\le n_2/k_2}$
each of size $k_1 \times k_2$. 
For each $x$ and $y$, let $I_{x,y}$ be an instance of $\cA'$ and 
we feed $M_{x,y}$ into $I_{x,y}$ to preprocess.
The total preprocessing time is $ \frac{n_1 n_2}{k_1 k_2}\cdot p(k_1,k_2) = n_1 \cdot p(k_1,k_2)$.

In the $\goMv$ problem, for any $t\le n_3$, once $v^{t}$ arrives, we need to compute $b^{t}=Mv^{t}$
before $v^{t+1}$ arrives. To do so, we first partition $v^{t}$ into
blocks $\{v_{y}^{t}\}_{1\le y\le n_2/k_2}$ each of size $k_2$.
We write $b^{t}=\{b_{x}^{t}\}_{1\le x\le n_{1}/k_1}$ as blocks
each of size $k_1$. Note that $b_{x}^{t}=\bigvee_{y}M_{x,y}v_{y}^{t}$
($\bigvee$ means bit-wise OR).

To compute $b_{x}^{t}$, the procedure iterates over all values for $ y $ from $ 1 $ to $ n_2/k_2 $.
When $y=1$, we set $u_{x,1}^{t}$ to be the all-ones vector. 
Let $W_{x,y,t}$ be the set of witnesses $(u_{x}^{t},v_{y}^{t})$.
We feed $(u_{x,y}^{t},v_{y}^{t})$ to the instance $I_{x,y}$ for listing the
witnesses $W_{x,y,t}$.
For all $i\in W_{x,y,t}$, we now know that $(b_{x}^{t})_{i}=1$.
To find other indices $i$ such that $(b_{x}^{t})_{i}=1$, we set
$u_{x,y+1}^{t}$ to be same as $u_{x,y}^{t}$ except that $(u_{x,y+1}^{t})_{i}=0$
for all found witnesses $i\in W_{x,y,t}$. Then we proceed with $y\gets y+1$. We repeat this until $y=n_2/k_2$. Then $b_{x}^{t}$ is completely computed. 
Once this procedure is done for all $x$, $b^{t}$ is completely
computed. We repeat until $t=n_{3}$, and we are done.

Now, we denote $w_{x,y}=\sum_{t}|W_{x,y,t}|$.
By \Cref{list witness} the computation time of the instance $I_{x,y}$ is $ \frac{n_{3}+\tilde{O}(w_{xy})}{n_3} c(k_1,k_2,n_3) $.
Summing over all $x,y$, we have a total running time of
\begin{equation*}
	\sum_{1\le x\le\frac{n_{1}}{k_{1}},1\le y\le\frac{n_{2}}{k_{2}}}\frac{n_{3}+\tilde{O}(w_{xy})}{n_{3}}c(k_{1},k_{2},n_{3}) \, .
\end{equation*}
To conclude that the computation time is $\tilde{O}( n_1 \cdot c(k_1,k_2,n_3) )$, 
it is enough to show that $ \sum_{x, y} w_{x,y} \leq n_1 n_3 $.
Note that for a fixed $ t $, the witness sets $ W_{x,y,t} $ are disjoint for different $x$ and $y$ as we set the entries in the `$u$'-vectors to $0$ for witnesses that we already found.
Furthermore for every $ i \in \bigcup_y W_{x,y,t} $, we have $(b_{x}^{t})_{i}=1$.
As the number of 1-entries of $ b^{t} $ is at most $ n_1 $, we have $ \sum_{x, y} |W_{x,y,t}| \leq n_1 $. 
So $ \sum_{x, y, t} |W_{x,y,t}| \leq n_1 n_3 $. Hence $ \sum_{x, y} w_{x,y} \leq n_1 n_3 $.
\end{proof}

Now we are ready to prove the main theorem.

\begin{proof}
[Proof of \Cref{ouMv hard}]
We will construct an algorithm $\cB$ for $\goMv$ with parameters $n_1,n_2,n_3$ that contradicts \Cref{oMv hard}
by using an algorithm $\cA$ for $\gouMv$ from the statement of \Cref{ouMv hard} as a subroutine.
That is, $\cA$ is an algorithm for $\gouMv$ with parameters $k_1,k_2,k_3$ 
using preprocessing time $p(k_1,k_2)= \poly(k_1,k_1)$ and computation time $c(k_1,k_2,k_3)=\oo(k_1 k_2 k_3)$ that has an error probability of at most $1/3$. We choose $k_1,k_2$ such that $k_1 k_2 = n_2$, $k_1 = k_2^\gamma$ and $k_3 = n_3$.

By \Cref{oMv <= ouMv}, ignoring polylogarithmic factors, $\cB$ has error probability $1/3$ and uses
preprocessing time $\poly(n_{1},n_{2})$
and computation time $n_1 \cdot c(k_1,k_2,n_3) = n_1 \oo(k_1 k_2 n_3) = \oo(n_1 n_2 n_3)$ which contradicts \Cref{oMv hard} by
\Cref{general oMv hard}.
\end{proof}

\subsubsection{Interpreting $\ouMv$ as Graph Problems and a satisfiability problem}
In this section, we show that $\ouMv$ can be viewed as graph problems and satisfiability problem, namely {\em edge query}, {\em independent set query} and {\em 2-CNF query}, as stated in \Cref{equiv ouMv}.
This section is independent from the rest and will not be used later. The edge query problem, however, can be helpful when one wants to show a reduction from $\ouMv$ to graph problems. For example, it is implicit in the reduction from $\ouMv$ to the subgraph connectivity problem. Moreover, the independent set query problem was shown to have an $O(n^2n'/\log^2 n)$ time algorithm by Williams \cite{Williams07} using his \oMv algorithm. The 2-CNF query problem was shown to be equivalent to the independent set query problem in \cite{BansalW12}.
So it is interesting that these problem are equivalent to \ouMv and thus cannot be solved much faster (i.e., in $\oo(n^2n')$ time) assuming the \oMv conjecture. 

\begin{theorem} \label{equiv ouMv}
	For any integers $n$ and $n'$,
	consider the following problems.
	\begin{itemize}
		\item $\ouMv$ with parameters $n_{1},n_{2}=\Theta(n)$ and $n_3 = n'$, preprocessing time $p(n_1,n_2)$ and computation time $c(n_1,n_2,n_3)$.
		\item Independent set (respectively vertex cover) query \cite{BansalW12}: preprocess a graph $G=(V,E)$, when $|V|=n$,
		in time $p_{\mathit{ind}}(n)$. Then given a sequence of sets $S^{1},\dots,S^{n'}\subseteq V$,
		decide if $S^{t}$ is an independent set (respectively a vertex cover) before $S^{t+1}$
		arrives in total time $c_{\mathit{ind}}(n,n')$.
		\item 2-CNF query \cite{BansalW12}: preprocess a 2-CNF $F$ on $n$ variables
		in time $p_{\mathit{cnf}}(n)$. Then given a sequence of assignments $X^{1},\dots,X^{n'}$,
		decide if $F(X^{t})=1$ before $X^{t+1}$	arrives in total time $c_{\mathit{cnf}}(n,n')$.		
		\item Edge query: preprocess a graph $G=(V,E)$, when $|V|=n$, in time $p_{\mathit{edge}}(n)$. Then given
		a sequence of set pairs $(S^{1},T^{1}),\dots,(S^{n'},T^{n'})$,
		decide if there is an edge $(a,b)\in E(S^{t},T^{t})$, before $S^{t+1}$
		arrives in total time $c_{\mathit{edge}}(n,n')$.
	\end{itemize}
	We have that $p(n_1,n_2) = \Theta(p_{\mathit{ind}}(n)) = \Theta(p_{\mathit{cnf}}(n)) = \Theta(p_{\mathit{edge}}(n))$ 
	and $c(n_1,n_2,n_3) = \Theta(c_{\mathit{ind}}(n,n')) = \Theta(c_{\mathit{cnf}}(n,n')) = \Theta(c_{\mathit{edge}}(n,n'))$.
\end{theorem}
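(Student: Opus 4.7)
The plan is to establish the four-way equivalence as a chain of constant-factor reductions between matrix, graph, and formula representations. The central bridge is between $\ouMv$ and edge query: given an $\ouMv$ instance with matrix $M$ of size $n_1 \times n_2$, I would interpret $M$ as the bipartite adjacency matrix of a graph $G$ on vertex set $L \cup R$ with $|L|=n_1$, $|R|=n_2$, and each query pair $(u^t, v^t)$ as the indicator pair of sets $(S^t \subseteq L, T^t \subseteq R)$; then $(u^t)^\top M v^t = \bigvee_{i \in S^t,\, j \in T^t} M_{ij}$, which equals $1$ iff there is an edge between $S^t$ and $T^t$ in $G$. The reverse direction is identical, encoding $G$ by its adjacency matrix and each query set pair by indicator vectors.

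To move between edge query and independent set query, the independent-set-to-edge reduction is immediate: on the same $G$, a set $S^t$ is independent iff the edge query on $(S^t, S^t)$ returns false. For the converse I would apply a bipartite double cover: from $G$, build $G'$ on $V \times \{0,1\}$ with an edge $\{(u,0),(v,1)\}$ for every $(u,v) \in E(G)$; then $G'$ is bipartite, and $(S^t \times \{0\}) \cup (T^t \times \{1\})$ is independent in $G'$ iff there is no edge between $S^t$ and $T^t$ in $G$. Vertex cover queries reduce to independent set queries by the complement identity ($S^t$ is a vertex cover of $G$ iff $V \setminus S^t$ is independent in $G$), which is an $O(n)$ operation per query.

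For the 2-CNF query, I would use the standard literal-graph encoding. From a graph $G$, build the 2-CNF formula $F$ with one variable $x_v$ per vertex and one clause $(\neg x_u \vee \neg x_v)$ per edge; translating the query as $X^t_v = 1$ iff $v \in S^t$, we get $F(X^t) = 1$ iff $S^t$ is independent in $G$. Conversely, from $F$ with clauses $(\ell_1^i \vee \ell_2^i)$ over $n$ variables, build a graph $H$ on the $2n$ literal vertices with an edge $\{\neg \ell_1^i, \neg \ell_2^i\}$ for each clause; an assignment $X^t$ satisfies $F$ iff its set of satisfied literals is independent in $H$.

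I do not expect a serious obstacle: all four reductions are syntactic encodings with $O(n)$-per-query conversion cost, and the main care is in bookkeeping the constant-factor blowups in the vertex/variable counts (the double cover and the literal-graph construction each double $n$), all of which are absorbed by the $\Theta$ bounds in the statement. The only mild subtlety is that $\ouMv$ permits $n_1 \neq n_2$, but the hypothesis $n_1, n_2 = \Theta(n)$ combined with the bipartite interpretation ensures that the graphs and formulas produced also have $\Theta(n)$ vertices and variables.
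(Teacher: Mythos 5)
Your proposal is correct, and its core gadget coincides with the paper's: your bipartite double cover of a graph with adjacency matrix $A$ is exactly the paper's block matrix $M'=\left[\begin{smallmatrix}0 & A\\ A^{\top} & 0\end{smallmatrix}\right]$ from Proposition~\ref{lem:special case}, used there to make the two query vectors act on disjoint vertex sets. The organization differs: the paper uses $\ouMv$ as a hub, reducing independent set query and edge query to $\ouMv$ directly (via the adjacency matrix and queries $v^{\top}Mv$, respectively $u^{\top}Mv$) and reducing $\ouMv$ to each of them via $M'$ with the queries $w^{\top}M'w$ and $x^{\top}M'y$, while for the 2-CNF equivalence it simply cites \cite[Section 2.3]{BansalW12}; you instead build a chain $\ouMv \leftrightarrow$ edge query $\leftrightarrow$ independent set $\leftrightarrow$ 2-CNF, proving the 2-CNF link explicitly via the literal graph (one vertex per literal, an edge $\{\neg\ell_1,\neg\ell_2\}$ per clause) and handling vertex cover explicitly by complementation, neither of which the paper spells out. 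Each of your links is sound and incurs only constant-factor blowups in the number of vertices/variables and $O(n)$-per-query translation cost, which the $\Theta(\cdot)$ statement with $n_1,n_2=\Theta(n)$ absorbs, so your version buys a fully self-contained proof at the cost of composing two reductions where the paper uses one.
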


Another observation that might be useful in proving \oMv hardness results is the following. 

\begin{theorem}\label{thm:uMv special}
	In the $\ouMv$ problem, we can assume that the matrix $M$ is symmetric, and each vector pair $(u^t,v^t)$ is such that either $u^t=v^t$ or the supports of $u^t$ and $v^t$ are disjoint (i.e., the inner product between $u^t$ and $v^t$ is $0$).
\end{theorem}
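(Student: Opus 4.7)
The plan is to prove the theorem by a standard symmetrization/lifting reduction from general $\ouMv$. Given an instance with matrix $M \in \{0,1\}^{n_1 \times n_2}$ and query pairs $(u^t, v^t)$, I construct a symmetric $(n_1+n_2) \times (n_1+n_2)$ block matrix
\begin{equation*}
M' = \begin{pmatrix} \mathbf{0}_{n_1 \times n_1} & M \\ M^{\top} & \mathbf{0}_{n_2 \times n_2} \end{pmatrix},
\end{equation*}
which is symmetric by inspection. For each query pair, I pad the vectors with zeros: let $u'^t = \begin{pmatrix} u^t \\ \mathbf{0}_{n_2} \end{pmatrix}$ and $v'^t = \begin{pmatrix} \mathbf{0}_{n_1} \\ v^t \end{pmatrix}$ in $\{0,1\}^{n_1+n_2}$.

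A direct block-matrix computation then yields $(u'^t)^{\top} M' v'^t = (u^t)^{\top} M v^t$, so each answer in the original instance equals the corresponding answer in the transformed instance. Moreover, by construction the support of $u'^t$ is contained in $\{1,\dots,n_1\}$ while the support of $v'^t$ is contained in $\{n_1+1,\dots,n_1+n_2\}$, so the two supports are disjoint. Hence every query pair in the new instance satisfies the second clause of the theorem's condition (and the $u^t = v^t$ clause, being an alternative, only enlarges the allowed family of inputs, so it needs no separate treatment).

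It remains to observe that the reduction is cheap: the matrix size grows from $n_1 \times n_2$ to $(n_1+n_2)^2$, so preprocessing in polynomial time on the reduced instance gives polynomial preprocessing on the original, and the per-query cost is identical. In the square case $n_1 = n_2 = n$, the blow-up is a factor of $4$ in both dimensions and the computation-time bound $\oo(n_1 n_2 n_3)$ from \Cref{ouMv hard} transfers up to constants; this is exactly the regime needed for the downstream reductions in the paper. The only mild subtlety is that the construction collapses the two dimension parameters into one, so the statement is most naturally used in the square setting (i.e., $\gamma=1$); for non-square $\gouMv$, the same argument gives a square $\ouMv$ instance of dimension $n_1+n_2$, which is sufficient for all applications of this structural assumption in subsequent reductions.
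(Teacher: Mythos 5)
Your proof is correct and follows essentially the same route as the paper: the paper's Proposition on the symmetric embedding $M'=\left[\begin{smallmatrix}0 & M\\ M^{\top} & 0\end{smallmatrix}\right]$ shows $u^{\top}Mv = x^{\top}M'y$ for the zero-padded vectors $x=\left[\begin{smallmatrix}u\\0\end{smallmatrix}\right]$, $y=\left[\begin{smallmatrix}0\\v\end{smallmatrix}\right]$, which is exactly your construction. The paper additionally notes the variant $w^{\top}M'w=u^{\top}Mv$ with $w=\left[\begin{smallmatrix}u\\v\end{smallmatrix}\right]$, which realizes the $u^t=v^t$ clause, but as you observe, one of the two alternatives suffices for the statement.
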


The rest of this section is devoted to proving the above theorems. 
First, we need this fact.
\begin{proposition}
\label{lem:special case}Consider a Boolean matrix $M\in\{0,1\}^{n_{1}\times n_{2}}$ and Boolean vectors
$u\in\{0,1\}^{n_{1}}$ and $v\in\{0,1\}^{n_{2}}$. Let $M'=\left[\begin{smallmatrix}
0 & M\\
M^{T} & 0
\end{smallmatrix}\right]$, $w=\left[\begin{smallmatrix}
u\\
v
\end{smallmatrix}\right]$, $x=\left[\begin{smallmatrix}
u\\
0
\end{smallmatrix}\right]$ and $y=\left[\begin{smallmatrix}
0\\
v
\end{smallmatrix}\right]$ where $w,x,y\in\{0,1\}^{n_{1}+n_{2}}$. Then $u^{\top}Mv=w^{\top}M'w=x^{\top}M'y$.\end{proposition}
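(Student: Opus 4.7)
The plan is just to carry out the two block matrix products directly, using the block structure of $M'$ and the placement of zeros in $w$, $x$, $y$.

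First I would compute $M'w$. Writing $M'$ and $w$ in block form gives
\begin{equation*}
M'w = \begin{pmatrix} 0 & M \\ M^\top & 0 \end{pmatrix}\begin{pmatrix} u \\ v \end{pmatrix} = \begin{pmatrix} Mv \\ M^\top u \end{pmatrix},
\end{equation*}
so that $w^\top M' w = u^\top M v \;\vee\; v^\top M^\top u$ in Boolean arithmetic. Since $u^\top M v$ and $v^\top M^\top u$ are both scalars equal to $\bigvee_{i,j} u_i \wedge M_{ij} \wedge v_j$, they coincide, and OR-ing a Boolean value with itself yields itself; hence $w^\top M' w = u^\top M v$.

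Next I would do the analogous computation for $x^\top M' y$. The zero halves of $x$ and $y$ kill most of the block product:
\begin{equation*}
M' y = \begin{pmatrix} 0 & M \\ M^\top & 0 \end{pmatrix}\begin{pmatrix} 0 \\ v \end{pmatrix} = \begin{pmatrix} Mv \\ 0 \end{pmatrix},
\end{equation*}
and then
\begin{equation*}
x^\top M' y = \begin{pmatrix} u^\top & 0 \end{pmatrix}\begin{pmatrix} Mv \\ 0 \end{pmatrix} = u^\top M v,
\end{equation*}
completing the chain of equalities.

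Honestly there is no real obstacle: the statement is pure block-matrix bookkeeping, and the only mildly nontrivial point is the one about Boolean arithmetic in the first computation, where one must observe that the two off-diagonal contributions are the same scalar (so the OR is idempotent), rather than adding in $\mathbb{Z}$ as one would for a real matrix. The utility of the proposition, of course, is in what it enables afterwards (namely \Cref{thm:uMv special}), not in the proof itself.
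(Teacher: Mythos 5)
Your proof is correct and is essentially identical to the paper's: both expand $M'w$ and $M'y$ in block form and observe that $w^\top M'w=(u^\top Mv)\vee(v^\top M^\top u)=u^\top Mv$ and $x^\top M'y=u^\top Mv$. Your extra remark on the idempotence of the Boolean OR just makes explicit what the paper leaves implicit.
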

\begin{proof}
It is easy to verify that $w^{\top}M'w=w^{\top}\left[\begin{smallmatrix}
Mv\\
M^{T}u
\end{smallmatrix}\right]=(u^{\top}Mv)\vee (v^{\top}M^{\top}u)=u^{\top}Mv$. Similarly, $x^{\top}M'y=x^{\top}\left[\begin{smallmatrix}
Mv\\
0
\end{smallmatrix}\right]=u^{\top}Mv$.\end{proof}

\Cref{thm:uMv special} follows immediately from \Cref{lem:special case}. Now we prove \Cref{equiv ouMv}.

\begin{proof}[Proof of \Cref{equiv ouMv}.]
Let $\cA_{\oMv}, \cA_{\mathit{ind}}$ and $\cA_{\mathit{edge}}$
be the algorithms for $\oMv$, independent set query and edge query respectively.

\textit{($\ouMv \Rightarrow$ independent set query)} Given
an input graph $G=(V,E)$ of $\cA_{\mathit{ind}}$, preprocess the adjacency matrix $M$ of $G$ using $\cA_{\oMv}$. 
Once $S^{t}$ arrives, let $v^{t}$ be the indicator vector of $S^{t}$ 
(i.e., for all $i\in V$, $(v^t)_i=1$ if $i\in S^t$, otherwise $(v^t)_i=0$).
Observe that $(v^{t})^{\top}Mv^{t}=0$ iff $S^{t}$ is independent.
So we can use $\cA_{\oMv}$ to answer the query of $\cA_{\mathit{ind}}$.

\textit{($\ouMv \Rightarrow$ edge query)} Given an input graph $G$ of $\cA_{\mathit{edge}}$, 
preprocess the adjacency matrix $M$ of $G$ using  $\cA_{\oMv}$.
Once $(S^{t},T^{t})$ arrives, let $u^{t}$ and $v^{t}$ be the indicator vectors of $S^{t}$ and $T^{t}$ respectively. 
Observe that $(u^{t})^{\top}Mv^{t}=1$ iff there is an edge $(a,b)\in E(S^{t},T^{t})$.
So we can use $\cA_{\oMv}$ to answer the query of $\cA_{\mathit{edge}}$.

\textit{(independent set query $\Rightarrow \ouMv$)} Given an input matrix $M$ of $\cA_{\oMv}$, 
let $G$ the graph defined by the adjacency matrix
$M'=\left[\begin{smallmatrix}
0 & M\\
M^{T} & 0
\end{smallmatrix}\right]$. We preprocess $G$ using  $\cA_{\mathit{ind}}$.
Once $(u^{t},v^{t})$ arrives, let $S^{t}$ be the set indicated by
$w=\left[\begin{smallmatrix}
u^{t}\\
v^{t}
\end{smallmatrix}\right]$
(i.e., $i\in S^t$ iff $w_i$=1). 
We have that $S^{t}$ is independent iff $(u^{t}){}^{\top}Mv^{t}=w^{\top}M'w=0$
by  \Cref{lem:special case}.
So we can use $\cA_{\mathit{ind}}$ to answer the query of $\cA_{\oMv}$.

\textit{(edge query $\Rightarrow \ouMv$)} Given an input matrix $M$ of $\cA_{\oMv}$, 
let $G$ be the graph defined by the adjacency matrix $M'=\left[\begin{smallmatrix}
0 & M\\
M^{T} & 0
\end{smallmatrix}\right]$. We preprocess $G$ using  $\cA_{\mathit{edge}}$.
Once $(u^{t},v^{t})$ arrives, let $S^{t},T^{t}$ be the sets indicated
by $x=\left[\begin{smallmatrix}
u^{t}\\
0
\end{smallmatrix}\right]$ and $y=\left[\begin{smallmatrix}
0\\
v^{t}
\end{smallmatrix}\right]$. There is an edge $(a,b)\in E(S^{t},T^{t})$ iff $(u^{t}){}^{\top}Mv^{t}=x^{\top}M'y=1$
by  \Cref{lem:special case}. 
So we can use $\cA_{\mathit{edge}}$ to answer the query of $\cA_{\oMv}$.

(independent set query $\Leftrightarrow$ 2-CNF query) See \cite[Section 2.3]{BansalW12}.
\end{proof}

Note that we can use an $\oMv$ algorithm to solve the dominating set query
problem, defined in a similar way as independent set query problem. 
Indeed, let $M$ be the adjacency matrix of $G$, and $v$ be an indicator vector of
$S$. We have that $Mv\vee v$ (bit-wise OR) is the all-one vector iff $S$ is a dominating
set. However, it is not clear if the reverse reduction exists.

\section{Hardness for Amortized Fully Dynamic and Worst-case Partially
Dynamic Problems}\label{sec:full:fully hardness}

In this section, we give reductions from our intermediate problems to various dynamic problems.
In \Cref{sec:lb_high_query}, we give conditional lower bounds for those graph problems whose algorithms cannot have the update time $u(m)=\oo(\sqrt{m})$ and the query time $q(m)=\oo(m)$ simultaneously. 
In \Cref{sec:lb_trade_off}, we give the bounds for those problems that cannot have the update time $u(m)=\oo(m^{1-\delta})$ and the query time $q(m)=\oo(m^\delta)$ simultaneously, for any constant $0<\delta<1$.
In \Cref{sec:lb_graph_other,sec:lb_non_graph}, we give the lower bounds for the remaining graph and non-graph problems, whose lower bound parameters of update/query time are in a different form (see \Cref{fig:lb_plot}).
We devote \Cref{sec:lb_diam,sec:lb_densest} to proving the lower bounds for approximating the diameter of a weighted graph and the densest subgraph problem, respectively, because their reductions are more involved.

\begin{figure}
	\centering
	\includegraphics[scale=0.5]{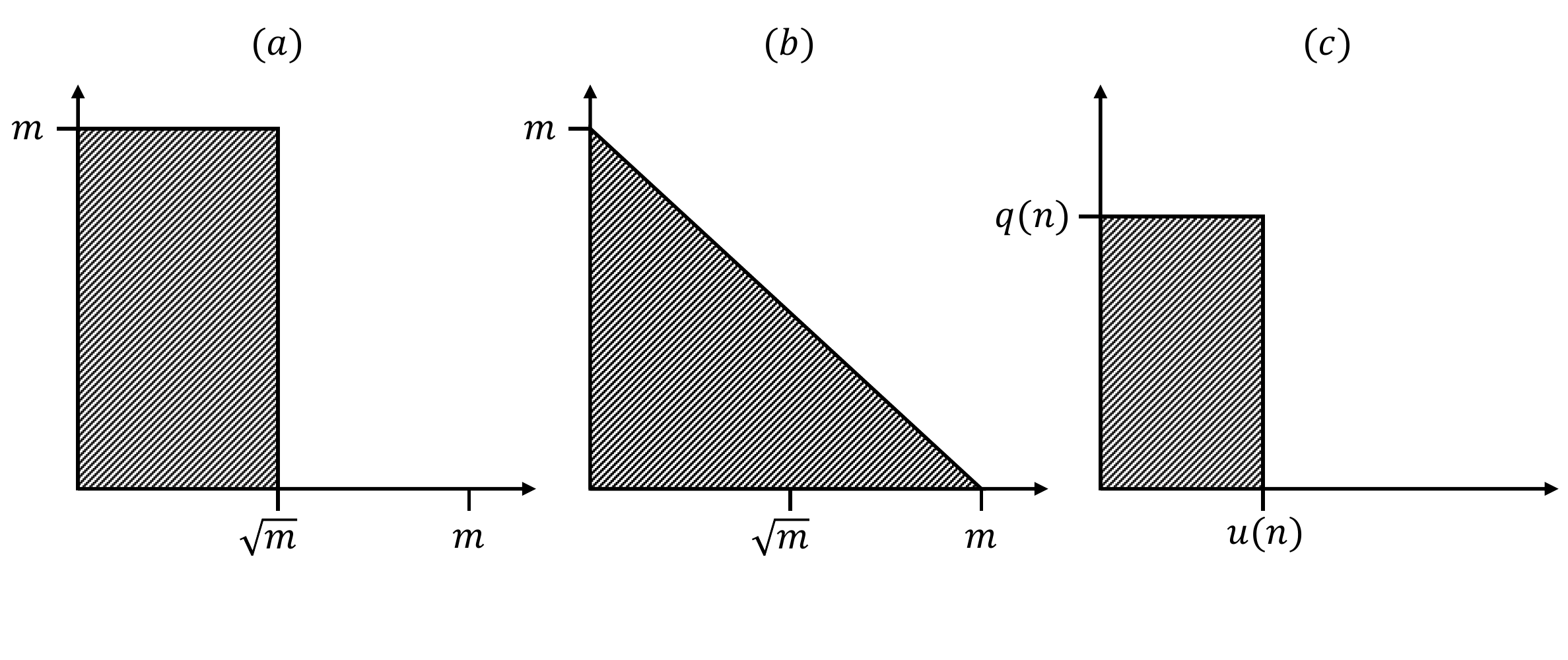} 
	
	\caption{\small
		Horizontal and vertical axes represent the update and query time of dynamic algorithms respectively. The shaded areas indicate the ranges of update/query time whose existence of dynamic algorithm with such parameters would contradict \Cref{oMv hard}. Chart (a) and (b) are for problems in \Cref{sec:lb_high_query} and \Cref{sec:lb_trade_off} respectively. Chart (c) is for problems in \Cref{sec:lb_graph_other,sec:lb_non_graph,sec:lb_diam,sec:lb_densest} where the lower bound parameters are in many different forms.
	}\label{fig:lb_plot}
\end{figure}

Our hardness results, compared to previously known bounds, for fully dynamic problems are summarized in \Cref{table: summary fully graph} and \Cref{table: summary fully non-graph}. 
Our tight hardness results are summarized in \Cref{table:tight results}.

Given a matrix $M\in\{0,1\}^{n_{1}\times n_{2}}$, we denote by $G_{M}=((L,R),E)$ the bipartite
graph where $L=\{l_{1},\dots,l_{n_{1}}\}$, $R=\{r_{1},\dots,r_{n_{2}}\}$,
and $E=\{(r_{j},l_{i})\mid M_{ij}=1\}$.

\begin{table}
\footnotesize
\begin{tabular}{|>{\centering}p{0.25\textwidth}|>{\centering}p{0.07\textwidth}|>{\centering}p{0.07\textwidth}|>{\centering}p{0.07\textwidth}|>{\centering}p{0.07\textwidth}|>{\centering}p{0.07\textwidth}|>{\centering}p{0.07\textwidth}|>{\centering}p{0.15\textwidth}|}
\hline 
\multirow{2}{0.25\textwidth}{Problems} & \multicolumn{3}{c|}{Upper Bounds} & \multicolumn{3}{c|}{Lower Bounds} & \multirow{2}{0.15\textwidth}{Remark}\tabularnewline
\cline{2-7} 
 & $p(m,n)$ & $u(m,n)$ & $q(m,n)$ & $p(m,n)$ & $u(m,n)$ & $q(m,n)$ & \tabularnewline
\hline 
\hline 
\multirow{4}{0.25\textwidth}{ss-SubConn, ap-SubConn} & $m^{4/3}$ & $m^{2/3}$ & $m^{1/3}$ & $poly$ & $m^{2/3-\eps}$ & $m^{1/3-\eps}$ & Upper: \cite{ChanPR11}, amortized only; Lower: when $m=O(n^{3/2})$ \tabularnewline
\cline{2-8} 
& $m^{6/5}$ & $m^{4/5}$ & $m^{1/5}$ & $poly$ & $m^{4/5-\eps}$ & $m^{1/5-\eps}$ & Upper: \cite{Duan10}; Lower: when $m= O(n^{5/4})$ \tabularnewline
\cline{2-8} 
 & $1$ & $1$ & $m$ & $poly$ & $m^{1/2-\eps}$ & $m^{1-\eps}$ & Lower: when $m= O(n^2)$ \tabularnewline
\cline{2-8} 
 & $m$ & $n$ & 1  & $poly$ & $n^{1-\eps}$ & $n^{2-\eps}$ & Upper: \cite{KapronKM13}; Lower: $m=\Theta(n^{2})$\tabularnewline
\hline 
st-SubConn, (unweighted) st-SP, st-Reach, s-triangle detection, strong connectivity & 1 & 1 & $m$ & $poly$ & ${m}^{1/2-\eps}$ & $m^{1-\eps}$ & Lower: when $m= O(n^2)$\tabularnewline
\hline 
\multirow{2}{0.25\textwidth}{(unweighted) ss-SP, ss-Reach} & 1 & $m$ & 1 & $poly$ & $m^{1-\epsilon}$ & $m^{\epsilon'}$ & Lower: when $m= O(n^{1/(1-\eps)})$, $\epsilon'< \epsilon$ \tabularnewline
\cline{2-8} 
 & 1 & 1 & $m$ & $poly$ & $m^{1/2-\eps}$ & $m^{1-\eps}$ & Lower: when $m= O(n^2)$ \tabularnewline
\hline 
Erickson's problem & 1 & $n$ & 1 & $poly$ & $n^{1-\eps}$ & $n^{1-\eps}$ & Upper: binary search tree \tabularnewline
\hline 
$d$-failure connectivity  & $mn^{1/c}$ & $d^{2c+4}$ & $d$ & $poly(n)$ & $poly(d)$ & $d^{1-\eps}$ & Upper: \cite{DuanP10} \tabularnewline
\hline 
$3$-approx vertex color distance oracle & $m\sqrt{n}$ & $\sqrt{n}$ & $1$ & $poly$ & $n^{1-\eps}$ & $n^{2-\eps}$ & Upper: \cite{HermelinLWY11,Chechik12,LackiOPSZ13}; Lower: when $(3-\epsilon)$-approx \tabularnewline
\hline 
Pagh's problem over $k$ sets in a universe $[n]$ & $1$ & $n$ & $1$ & $poly$ & $n^{1-\eps}$ & $k^{1-\eps}$ & \tabularnewline
\hline 

Multiphase over $k$ sets in a universe $[n]$ & $1$ & \multicolumn{2}{c|}{$\tau \le \max\{k,n\}$} & $poly$ & \multicolumn{2}{c|}{$\tau \ge \min\{k,n\}^{1-\eps}$}  & \tabularnewline
\hline 
\end{tabular}
\caption{Our tight results along with the matching upper bounds (or better upper bounds when worse approximation ratio is allowed).
The polylogarithmic factors are omitted. 
The lower bounds state that there is no algorithm achieving stated preprocessing time, amortized update time, and query time simultaneously, unless the \oMv conjecture fails.
The matching upper bounds of update time are all worst-case time except the bound by \cite{ChanPR11}. The upper bounds without remark are the naive ones.
}

\label{table:tight results}
\end{table}

\begin{table}
\footnotesize
\begin{tabular}{|>{\centering}p{0.3\textwidth}|>{\centering}p{0.08\textwidth}|>{\centering}p{0.08\textwidth}|>{\centering}p{0.08\textwidth}|>{\centering}p{0.08\textwidth}|>{\centering}p{0.12\textwidth}|>{\centering}p{0.20\textwidth}|}
\hline 
Problems & $p(m,n)$ & $u(m,n)$ & $q(m,n)$ & Conj.& Reference & Remark\tabularnewline
\hline 
\hline 
\multirow{3}{0.3\textwidth}{st-SubConn} & $m^{4/3}$ & $m^{\delta-\epsilon}$ & $m^{2/3-\delta-\epsilon}$ & 3SUM & \cite{AbboudW14} &Choose any $\delta\in[1/6,1/3]$. $m \le O(n^{1.5})$ \tabularnewline
\cline{2-7} 
 & $m^{1+\delta-\epsilon}$ & $m^{\delta-\epsilon}$ & $m^{2\delta-\epsilon}$ & Triangle & \cite{AbboudW14} & For some $\delta<0.41$ depending on the conjecture\tabularnewline
\cline{2-7} 
 & $n^{3-\epsilon}$ & $n^{1-\epsilon}$ & $n^{2-\epsilon}$ & BMM & \cite{AbboudW14} & When $m=\Theta(n^{2})$\tabularnewline
\hline 
\multirow{3}{0.3\textwidth}{st-Reach} & $m^{4/3}$ & $m^{\delta-\epsilon}$ & $m^{2/3-\delta-\epsilon}$ & 3SUM & \cite{AbboudW14} & Choose any $\delta\in[1/6,1/3]$. $m \le O(n^{1.5})$ \tabularnewline
\cline{2-7} 
 & $\mathbf{m^{1+\delta-\epsilon}}$ & $\mathbf{m^{2\delta-\epsilon}}$ & $\mathbf{m^{2\delta-\epsilon}}$ (*) & Triangle & \cite{AbboudW14} & Only lower bound for amortized time over $O(n)$ updates; for some $\delta<0.41$
depending on the conjecture \tabularnewline
\cline{2-7} 
 & $\mathbf{n^{3-\epsilon}}$ & $\mathbf{n^{2-\epsilon}}$ & $\mathbf{n^{2-\epsilon}}$ (*) & BMM & \cite{AbboudW14} & Only lower bound for amortized time over $O(n)$ updates; $m=\Theta(n^{2})$ \tabularnewline
\hline 
st-SubConn\tablefootnote{implies the same bound for all reachablity problems (including transitive closure), strong connectivity, bipartite perfect matching, size of maximum matching, minimum vertex cover, maximum independent set on bipartite graph, size of st-maxflow on undirected unit capacity. See some reductions from \cite{AbboudW14}}, st-Reach, unweighted $(\alpha,\beta)$st-SP where $3\alpha+\beta<5$
\tablefootnote{implies all shortest path problems (ss-SP,ap-SP) with same approximation factor},
Triangle Detection, s-Triangle Detection\tablefootnote{implies the same bound for st-Reach}, etc. (See footnotes) & $\mathbf{\poly(n)}$ & $\mathbf{m^{1/2-\epsilon}}$ & $\mathbf{m^{1-\epsilon}}$ & $\oMv$ & \Cref{cor:high query} & Choose any $m\le n^{2}$\tabularnewline
\hline 
ss-SubConn\tablefootnote{implies the same bound for ap-SubConn, ss-Reach, transitive closure.},
unweighted $(\alpha,\beta)$ss-SP where $2\alpha,\beta<4$\tablefootnote{implies the same bound for ap-SP with same approximation factor.},
unweighted $(\alpha,\beta)$vertex color distance oracle where $\alpha+\beta<3$, etc. (See footnotes) & $\mathbf{\poly(n)}$ & $\mathbf{m^{\delta-\epsilon}}$ & $\mathbf{m^{1-\delta-\epsilon}}$ & $\oMv$ & \Cref{corr:trade-off}  &Choose any $\delta\in(0,1)$, and $m\le\min\{n^{1/\delta},n^{1/(1-\delta)}\}$\tabularnewline
\hline 
unweighted $(3-\epsilon)$st-SP, $(2-\epsilon)$diameter on weighted Graphs,
Densest Subgraph of size at least 5 & $\mathbf{\poly(n)}$ & $\mathbf{n}^{\mathbf{1/2-\epsilon}}$ & $\mathbf{n^{1-\epsilon}}$ & $\oMv$ &
Corollaries \ref{corr:3approx stSP}, \ref{corr:approx diam} and \ref{corr:densest} & \tabularnewline
\hline 
\multirow{3}{0.3\textwidth}{$d$-failure Connectivity} & $n^{2-\epsilon}$ & $(dn)^{1/2-\epsilon}$ & $d^{1/2-\epsilon}$ & 3SUM  & \cite{KopelowitzPP14} & \tabularnewline
\cline{2-7} 
 & $\mathbf{\poly(n)}$ & $\mathbf{(dn)^{1-\epsilon}}$ & $\mathbf{d^{1-\epsilon}}$ & $\oMv$ & \Cref{corr:d failure dn}& Choose any $\delta\in(0,1/2]$, $d=m^\delta$, $m=\Theta(n^{1/(1-\delta)})$ \tabularnewline
\cline{2-7} 
 & $\mathbf{\poly(n)}$ & $\mathbf{d^{1/\delta}}$ & $\mathbf{d^{1-\epsilon}}$ & $\oMv$ & \Cref{corr:d failure d poly} & Choose any $\delta\in(0,1/2]$, $d=m^\delta$, $m\le \Theta(n^{1/(1-\delta)})$ \tabularnewline
\hline 
\end{tabular}

\caption{Amortized lower bounds for fully dynamic graph problems and worst-case lower bounds for partially dynamic graph problems.
Bounds which are not subsumed are highlighted.
Each row states that there is no algorithm achieving stated preprocessing
time, update time, and query time \emph{simultaneously}, unless the
conjecture fails. 
Except for $(2-\epsilon)$-approx diameter on weighted graphs,
all the lower bounds also hold for the worst-case update time of partially dynamic algorithms.
Bounds marked with the asterisk (*) hold when the update time is amortized over \emph{only $O(n)$ updates}. It is not clear how to get this parameter for an update time amortized over any polynomially many updates like all our bounds.
}
\label{table: summary fully graph}
\end{table}

\begin{table}
\footnotesize
\begin{tabular}{|>{\centering}p{0.3\textwidth}|>{\centering}p{0.08\textwidth}|>{\centering}p{0.08\textwidth}|>{\centering}p{0.08\textwidth}|>{\centering}p{0.08\textwidth}|>{\centering}p{0.12\textwidth}|>{\centering}p{0.20\textwidth}|}
	\hline 
	Problems & $p(m,n)$ & $u(m,n)$ & $q(m,n)$ & Conj.& Reference & Remark\tabularnewline
\hline 
\hline 
\multirow{2}{0.3\textwidth}{Langerman's } & $n^{1+\delta-\epsilon}$ & $n^{\delta/2-\epsilon}$ & $n^{\delta/2-\epsilon}$ & multi phase &
\cite{Patrascu10} &  If $\tau\ge n^{\delta-\epsilon}$, $\delta\in(0,1)$ \tabularnewline
\cline{2-7} 
 & $\mathbf{\poly(n)}$ & $\mathbf{n^{1/2-\epsilon}}$ & $\mathbf{n^{1/2-\epsilon}}$ & $\oMv$ & \Cref{corr:langerman} &\tabularnewline
\hline 
\multirow{4}{0.3\textwidth}{Pagh's over $k$ sets in a universe $[n]$} & $k^{1+\delta-\epsilon}$ & $k^{\delta-\epsilon}$ & $k^{\delta-\epsilon}$ & Triangle & \cite{AbboudW14} & $n\le k\le n^{2}$, Choose $\delta\in(1/3,0.41)$ depending on the
conjecture\tabularnewline
\cline{2-7} 
 & $k^{3/2-\epsilon}$ & $k^{1/2-\epsilon}$ & $k^{1/2-\epsilon}$ & BMM & \cite{AbboudW14} &$n\le k\le n^{2}$\tabularnewline
\cline{2-7} 
 & $k^{1/3-\epsilon}$ & $k^{1/3-\epsilon}$ & $k^{1/3-\epsilon}$ & 3SUM & \cite{AbboudW14} & $k=\Theta(n^{1.5+\epsilon'})$\tabularnewline
\cline{2-7} 
 & $\mathbf{\poly(n)}$ & $\mathbf{n^{1-\epsilon}}$ & $\mathbf{k^{1-\epsilon}}$ & $\oMv$ & \Cref{corr:pagh} &$k\le poly(n)$ \tabularnewline
\hline 
\multirow{2}{0.3\textwidth}{Erickson over a matrix of size $n\times n$} & $n^{2+\delta-\epsilon}$ & $n^{\delta-\epsilon}$ & $n^{\delta-\epsilon}$ & multi phase & \cite{Patrascu10} &If $\tau\ge n^{\delta-\epsilon}$, $\delta\in(0,1)$ \tabularnewline
\cline{2-7} 
 & $\mathbf{\poly(n)}$ & $\mathbf{n^{1-\epsilon}}$ & $\mathbf{n^{2-\epsilon}}$ & $\oMv$ & \Cref{corr:erickson} &\tabularnewline
\hline 
\multirow{2}{0.3\textwidth}{Multiphase over $k$ sets in a universe $[n]$} & $n^{4-\epsilon}$ & \multicolumn{2}{c|}{$\tau\ge n^{1/2-\epsilon}$} & 3SUM & \cite{Patrascu10} & $k=\Theta(n^{5/2})$\tabularnewline
\cline{2-7} 
 & $\mathbf{\poly(k,n)}$ & \multicolumn{2}{c|}{$\mathbf{\tau\ge}\mathbf{min\{k,n\}^{1-\epsilon}}$} & $\oMv$ & \Cref{corr:multiphase}&\tabularnewline
\hline 
\end{tabular}

\caption{Amortized lower bounds for fully dynamic non-graph problems. 
Bounds which are not subsumed are highlighted. 
Each row states that there is no algorithm achieving stated preprocessing
time, update time, and query time \emph{simultaneously}, unless the
corresponding conjecture fails.
The lower bounds based on the multiphase problem are
only for worst case time and their parameters are implicit from \cite{Patrascu10}.}

\label{table: summary fully non-graph}
\end{table}

In this section, our proofs usually follow two simple steps.
First, we show the reductions in lemmas that 
given a dynamic algorithm $\cA$ for some problem, one can solve $\uMv$ by running the preprocessing step of $\cA$ on some graph 
and then making
some number of updates and queries.
Then, we conclude in corollaries that if either 1)
$\cA$ has low worst-case update/query time, or 2) $\cA$ has low amortized update/query time and $\cA$ is fully dynamic,
then this contradicts \Cref{oMv hard}.

\subsection{Lower Bounds for Graph Problems with High Query Time}\label{sec:lb_high_query}
To show hardness of the problems in this section, we reduce from $\uMv$ where $n_1 = n_2 = \sqrt{m}$. 
The idea is that when $u$ and $v$ arrive, we make the update operations of the dynamic algorithm $\cA$
to ``handle'' both $u$ and $v$. Then make only 1 query to $\cA$ to answer $1$-$\uMv$.
Since the reduction is efficient in the number of queries, we get a high lower bound of query time.

\paragraph{$s$-$t$ Subgraph Connectivity (st-SubConn)}
\begin{lemma}
\label{st-subconn reduc}Given a partially dynamic algorithm $\cA$ for
st-SubConn, one can solve $1$-$\uMv$ with parameters $n_1$ and $n_2$ by 
running the preprocessing step of $\cA$ on a graph with $O(m)$ edges and $\Theta(\sqrt{m})$ vertices, 
and then performing $O(\sqrt{m})$ turn-on operations (or $O(\sqrt{m})$ turn-off operations) and $1$ query,
where $m$ is such that $n_1=n_2=\sqrt{m}$.
\end{lemma}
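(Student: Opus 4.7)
The plan is to realize the bilinear product $u^\top M v$ as an $s$-$t$ connectivity question in a fixed bipartite-plus-two-terminals graph, where toggling the ``on'' status of vertices on the two sides corresponds to choosing the supports of $u$ and $v$. First I would set $n_1 = n_2 = \sqrt{m}$ and let $G_M = ((L,R),E)$ be the bipartite graph associated to $M$ as defined in the paper. I would then build $G$ from $G_M$ by adding two new vertices $s$ and $t$, connecting $s$ to every $\ell_i \in L$ and $t$ to every $r_j \in R$. This graph has $|L| + |R| + 2 = \Theta(\sqrt{m})$ vertices and $|E| + |L| + |R| = O(m)$ edges, so it satisfies the size bounds of the lemma, and it can be fed to the preprocessing step of $\cA$.

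Next, given the input vectors $u, v \in \{0,1\}^{\sqrt{m}}$, I would activate exactly the vertices that correspond to their supports. Concretely, for the ``turn-on'' version I would initialize the maintained set as $S = \{s,t\}$ and then, upon receiving $(u,v)$, issue the updates that turn on $\ell_i$ for every $i$ with $u_i = 1$ and $r_j$ for every $j$ with $v_j = 1$. Since $|\operatorname{supp}(u)| + |\operatorname{supp}(v)| \le n_1 + n_2 = 2\sqrt{m}$, this is $O(\sqrt{m})$ turn-on operations. Finally, I would issue a single $s$-$t$ subgraph connectivity query and return its answer as the value of $u^\top M v$.

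The correctness argument is immediate from the structure of $G$: every $s$-$t$ path in $G[S]$ must use a neighbor of $s$ and a neighbor of $t$, and since $s$ and $t$ are only adjacent to $L$ and $R$ respectively and $L \cup R$ induces a bipartite graph, every such path has the form $s-\ell_i-r_j-t$. This path exists in $G[S]$ if and only if $\ell_i, r_j \in S$ and $(\ell_i, r_j) \in E(G_M)$, i.e., $u_i = v_j = M_{ij} = 1$; equivalently, $u^\top M v = 1$. For the ``turn-off'' version I would symmetrically initialize $S = L \cup R \cup \{s,t\}$ during preprocessing and, upon receiving $(u,v)$, turn off $\ell_i$ for every $i$ with $u_i = 0$ and $r_j$ for every $j$ with $v_j = 0$, again using $O(\sqrt{m})$ updates before the single query.

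There is no real obstacle here; the main subtlety to flag is just that the reduction uses only one direction of updates, so it applies equally to the incremental and to the decremental variants of subgraph connectivity, which is precisely the ``partially dynamic'' setting stated in the lemma.
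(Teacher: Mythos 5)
Your proposal is correct and follows essentially the same reduction as the paper: the identical graph $G_M$ augmented with $s$ and $t$ attached to $L$ and $R$, with the supports of $u,v$ encoded by turning vertices on (or, symmetrically, their complements turned off), one query, and $O(\sqrt{m})$ updates. The only nitpick is that an $s$-$t$ path need not literally have the form $s\text{--}\ell_i\text{--}r_j\text{--}t$ (it may zigzag), but any such path must cross an $L$-$R$ edge with both endpoints active, which yields the length-3 path, so the correctness claim stands.
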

\begin{proof}
We only prove the decremental case, because the incremental case is
symmetric. 
Given $M$, we construct the bipartite graph $G_{M}$ and add to it
vertices $s,t$, and edges $(t,l_{i}),(r_{j},s)$ for all $r_{j}\in R,l_{i}\in L$. Thus, the total number of edges is at most $n_1 n_2 + n_1 + n_2 = O(m)$. 
In the beginning, every vertex is ``turned on'', i.e., included in the set $ S $ of the st-SubConn algorithm

Once $u$ and $v$ arrive, we turn off $l_{i}$ iff $u_{i}=0$ and turn
off $r_{j}$ iff $v_{j}=0$. We have $u^{\top}Mv=1$ iff $s$ is connected
to $t$. In total, we need to do at most $n_1+n_2 = O(\sqrt{m})$ updates
and $1$ query.
\end{proof}

\paragraph{Distinguishing between 3 and 5 for $s$-$t$ distance (st-SP (3 vs.\ 5))}
\begin{lemma}
\label{st-SP 3vs5 reduc}Given a partially dynamic algorithm $\cA$ for
$(\alpha,\beta)$-approximate st-SP with $3\alpha+\beta<5$, one can
solve $1$-$\uMv$ with parameters $n_1$ and $n_2$ by 
running the preprocessing step of $\cA$ on a graph with $O(m)$ edges and $\Theta(\sqrt{m})$ vertices, 
and then making $O(\sqrt{m})$ insertions (or $O(\sqrt{m})$ deletions) and $1$ query,
where $m$ is such that $n_1=n_2=\sqrt{m}$. \end{lemma}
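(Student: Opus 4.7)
The plan is to mimic the st-SubConn reduction from \Cref{st-subconn reduc}, but now encode the value of $u^{\top}Mv$ as a gap in $s$-$t$ distance rather than a connectivity indicator. Specifically, I will build (essentially) the same graph $G_M$ augmented with vertices $s,t$ and edges $(t,l_i)$, $(r_j,s)$, and argue that the answer bit $u^{\top}Mv$ is equivalent to deciding $d(s,t) \leq 3$ vs $d(s,t) \geq 5$, which the $(\alpha,\beta)$-approximation can distinguish since $3\alpha+\beta<5$.

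In more detail, for the incremental case I would preprocess the graph consisting of $G_M$ together with isolated $s$ and $t$, using $O(m)$ edges and $\Theta(\sqrt{m})$ vertices. Once $(u,v)$ arrives, I would insert the edge $(s,r_j)$ for every $j$ with $v_j=1$ and the edge $(t,l_i)$ for every $i$ with $u_i=1$, totaling $O(\sqrt{m})$ insertions. The decremental case is symmetric: start with all such edges present and delete those corresponding to zero coordinates of $u$ and $v$.

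The core structural claim is: (i) if $u^{\top}Mv=1$, pick any witness $(i,j)$ with $u_i=v_j=M_{ij}=1$ to get a length-$3$ path $s\to r_j \to l_i \to t$; (ii) if $u^{\top}Mv=0$, then no length-$3$ path exists, and because the graph is bipartite with $L\cup\{s\}$ on one side and $R\cup\{t\}$ on the other, any $s$--$t$ path has odd length, so the next possible length is $5$. Thus $d(s,t)\in\{3,\infty\}\cup\{5,7,\dots\}$ with $d(s,t)=3$ iff $u^{\top}Mv=1$. Issuing one query for the $s$-$t$ distance and returning an estimate $\tilde{d}$ with $d(s,t)\le \tilde d \le \alpha\cdot d(s,t)+\beta$, we get $\tilde d \le 3\alpha+\beta<5$ when the answer is $1$ and $\tilde d \ge 5$ when the answer is $0$, so comparing $\tilde d$ to $5$ recovers the bit.

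The step that needs the most care is verifying the bipartite parity argument rules out length $4$ (and making sure no incidental short paths sneak in through $s$ or $t$): since $s$ has neighbors only in $R$ and $t$ only in $L$, and $G_M$ edges go between $L$ and $R$, every $s$--$t$ walk alternates sides and has odd length. Aside from that, the proof is essentially a bookkeeping check that edge counts, vertex counts, and update counts land at $O(m)$, $\Theta(\sqrt m)$, and $O(\sqrt m)$ respectively, exactly as in \Cref{st-subconn reduc}.
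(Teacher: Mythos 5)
Your proposal is correct and follows essentially the same reduction as the paper: the same graph $G_M$ augmented with $s,t$ and edges $(t,l_i),(r_j,s)$, the same updates keyed to the zero/one coordinates of $u$ and $v$, and the same conclusion that $d(s,t)=3$ iff $u^{\top}Mv=1$ and otherwise $d(s,t)\ge 5$, which the $(\alpha,\beta)$-approximation with $3\alpha+\beta<5$ distinguishes in one query. The only difference is that you spell out the bipartite parity argument ruling out a length-$4$ path, which the paper leaves implicit.
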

\begin{proof}
We only prove the decremental case, because the incremental case is
symmetric. 
Given $M$, we construct the bipartite graph $G_{M}$ and add to it
vertices $s,t$, and edges $(t,l_{i}),(r_{j},s)$ for all $r_{j}\in R,l_{i}\in L$. Thus, the total number of edges is at most $n_1n_2+n_1+n_2 = O(m)$. 

Once $u$ and $v$ arrive, we delete $(t,l_{i})$ iff $u_{i}=0$ and
delete $(r_{j},s)$ iff $v_{j}=0$. If $u^{\top}Mv=1$, then $d(s,t)=3$,
otherwise $d(s,t)\ge5$. In total, we need to do at most $n_1+n_2=O(\sqrt{m})$
updates and $1$ query.
\end{proof}

\paragraph{Triangle Detection and Triangle Detection at vertex $s$}
\begin{lemma}
\label{tri reduc}Given a partially dynamic algorithm $\cA$ for
(s-)triangle detection, one can solve $1$-$\uMv$ with parameters $n_1$ and $n_2$ by
running the preprocessing step of $\cA$ on a graph with $O(m)$ edges and $\Theta(\sqrt{m})$ vertices, 
and then making $O(\sqrt{m})$ insertions (or $O(\sqrt{m})$ deletions) and $1$ query,
where $m$ is such that $n_1=n_2=\sqrt{m}$. \end{lemma}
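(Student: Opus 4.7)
The plan is to use essentially the same construction as in the previous two lemmas (for st-SubConn and st-SP), modified slightly so that the existence of the desired $u^\top M v = 1$ witness translates into the existence of a triangle rather than into connectivity or a short $s$-$t$ path. I will present the decremental case; the incremental case is symmetric.

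First I would build, in the preprocessing phase, the bipartite graph $G_M$ on vertex set $L \cup R$ with $|L| = n_1$, $|R| = n_2$, together with an extra auxiliary vertex $s$ that is joined to every vertex of $L \cup R$. Feed this graph (with $O(m)$ edges and $\Theta(\sqrt{m})$ vertices) into $\cA$'s preprocessing routine. Once the pair $(u,v)$ arrives, perform the following updates: delete the edge $(s, l_i)$ whenever $u_i = 0$, and delete $(s, r_j)$ whenever $v_j = 0$. This is at most $n_1 + n_2 = O(\sqrt{m})$ deletions. Finally, issue a single triangle detection query.

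The correctness hinges on the observation that $G_M$ is bipartite, so any triangle in the whole graph must contain $s$ and therefore has the form $s,\, l_i,\, r_j$ for some $i,j$. Such a triangle exists after the updates if and only if all three edges $(s,l_i)$, $(l_i,r_j)$, $(s,r_j)$ are present, which happens if and only if $u_i = 1$, $M_{ij} = 1$, and $v_j = 1$. Equivalently, a triangle exists if and only if $u^\top M v = 1$. This also handles the $s$-triangle detection variant, since the only triangles anywhere in the graph are triangles through $s$. The incremental direction is symmetric: start from $G_M$ alone and insert $(s,l_i)$ for each $i$ with $u_i = 1$ and $(s,r_j)$ for each $j$ with $v_j = 1$, again using only $O(\sqrt{m})$ updates and one query.

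I do not anticipate a genuine obstacle here; the reduction is essentially a one-line modification of the preceding st-SP reduction. The only point worth stating carefully is the bipartiteness argument that rules out spurious triangles inside $G_M$, which is what makes a single triangle-existence query sufficient to decide $u^\top M v$.
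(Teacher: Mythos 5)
Your reduction is correct and is essentially identical to the paper's own proof: the same construction of $G_M$ plus an auxiliary vertex $s$ joined to all of $L \cup R$, the same deletions of $(s,l_i)$ for $u_i=0$ and $(s,r_j)$ for $v_j=0$, and a single query, with bipartiteness of $G_M$ forcing every triangle through $s$. The only difference is that you spell out the bipartiteness argument explicitly, which the paper leaves implicit.
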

\begin{proof}
We only prove the decremental case. Given $M$, we construct the bipartite
graph $G_{M}$ and add to it a vertex $s$ and edges $(s,l_{i}),(r_{j},s)$
for all $r_{j}\in R,l_{i}\in L$. Thus, the total number of edges is at most $n_1n_2+n_1+n_2 = O(m)$. 

Once $u$ and $v$ arrive, we delete $(s,l_{i})$ iff $u_{i}=0$ and
delete $(r_{j},s)$ iff $v_{j}=0$. We have $u^{\top}Mv=1$ iff there
is a triangle in a graph iff there is a triangle incident to $s$.
In total, we need to do $n_1+n_2=O(\sqrt{m})$ updates and $1$
query.\end{proof}
\begin{corr}
	\label{cor:high query}
	For any $n$ and $m \le n^2$,
	unless \Cref{oMv hard} fails, there is no partially dynamic algorithm $\cA$
	for the problems in the list below for graphs with $n$ vertices and $m$ edges
	with preprocessing time $p(m)=poly(m)$,
	worst update time $u(m)=\oo(\sqrt{m})$ and query time $q(m)=\oo(m)$
	that has an error probability of at most $1/3$.
	Moreover, this is true also for fully dynamic algorithms with amortized update time.
	The problems are:
	\begin{itemize}[noitemsep,nolistsep]
	\item st-SubConn
	\item st-SP (3 vs.\ 5)
	\item (s-)triangle detection
	\end{itemize}
\end{corr}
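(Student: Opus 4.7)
The plan is to combine the three reduction lemmas (\Cref{st-subconn reduc}, \Cref{st-SP 3vs5 reduc}, \Cref{tri reduc}) with the hardness of $\uMv$ / $\ouMv$ from \Cref{uMv hard} / \Cref{ouMv hard}. In each case we set $n_1 = n_2 = \sqrt{m}$ (we may assume $m \le n^2$, and the matrix can always be padded with zero rows/columns if needed to fit into the host graph on $n$ vertices). Suppose for contradiction that a dynamic algorithm $\cA$ exists with $p(m) = \poly(m)$, $u(m) = \oo(\sqrt{m})$, and $q(m) = \oo(m)$.

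First I would handle the partially dynamic, worst-case setting. Given an input $M$ for $1$-$\uMv$, run the preprocessing of $\cA$ on the graph produced by the relevant lemma; this costs $\poly(m) = \poly(n_1, n_2)$. When the pair $(u, v)$ arrives, apply the $O(\sqrt{m})$ prescribed updates (all insertions or all deletions, which is legal in the partially dynamic setting) and issue the single query. The total computation time for answering $(u,v)$ is
\[
O(\sqrt{m}) \cdot \oo(\sqrt{m}) + \oo(m) = \oo(m) = \oo(n_1 n_2),
\]
which contradicts \Cref{uMv hard} (with $\gamma = 1$).

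For the fully dynamic, amortized setting I would instead reduce from $\ouMv$ with parameters $n_1 = n_2 = \sqrt{m}$ and $n_3$ chosen to be any fixed polynomial in $m$ large enough to absorb the preprocessing cost, i.e.\ so that $\poly(m) = \oo(n_1 n_2 n_3)$. Preprocess $\cA$ once. For each pair $(u^t, v^t)$, perform the $O(\sqrt{m})$ updates from the reduction, ask the query, and then \emph{roll back} those updates by executing the reverse $O(\sqrt{m})$ updates (turning vertices back off / re-inserting deleted edges, both of which are legal in the fully dynamic setting). Across all $n_3$ rounds this is $O(n_3 \sqrt{m})$ updates and $n_3$ queries, for a total amortized cost of
\[
O(n_3 \sqrt{m}) \cdot \oo(\sqrt{m}) + n_3 \cdot \oo(m) = \oo(n_3 m) = \oo(n_1 n_2 n_3),
\]
contradicting \Cref{ouMv hard}.

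The only real subtlety is the rollback step in the amortized reduction: one must verify for each of the three problems that the reverse operations are permitted (yes for fully dynamic) and that the rolled-back state coincides with the post-preprocessing state, so that the graph fed into $\cA$ during round $t+1$ is exactly the one dictated by the reduction for $(u^{t+1}, v^{t+1})$. This is immediate because in all three reductions the updates are just toggling the presence of a specific set of edges/vertices based on the bits of the current pair. The remaining arithmetic — verifying that $\poly(m)$ can be absorbed by choosing $n_3$ a sufficiently large polynomial, and that $\sqrt{m} \cdot \oo(\sqrt{m}) = \oo(n_1 n_2)$ — is routine from the definition of $\oo(\cdot)$.
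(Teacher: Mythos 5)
Your proposal is correct and follows essentially the same route as the paper: the same three reduction lemmas combined with \Cref{uMv hard} for the partially dynamic worst-case claim, and the same rollback-to-post-preprocessing trick combined with \Cref{ouMv hard} for the fully dynamic amortized claim. The only cosmetic difference is the choice of $n_3$: the paper takes $n_3=\sqrt{m}$ precisely so that the number of updates, $\Theta(n_3\sqrt{m})$, dominates the $O(m)$ edges present after preprocessing (which is what licenses charging the amortized bound $u(m)$ per update), a condition your larger polynomial $n_3$ also satisfies — note that ``absorbing the preprocessing cost'' is not actually needed, since \Cref{ouMv hard} already allows $\poly(n_1,n_2)$ preprocessing separately from the $\oo(n_1n_2n_3)$ computation time.
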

\begin{proof}
	Suppose there is such a partially dynamic algorithm $\cA$. That is, on a graph with $n_0$ vertices and $m_0 =  O(n_0^2)$ edges, $\cA$ has worst-case update time 
	$u(m_0)=\oo(\sqrt{m_0})$ and query time $q(m_0)=\oo(m_0)$.
	We will construct an algorithm $\cB$ for $1$-$\uMv$ with parameters $n_1$ and $n_2$ which contradicts \Cref{oMv hard}. 
	Using \Cref{st-subconn reduc,st-SP 3vs5 reduc,tri reduc}, by running $\cA$ on a graph with $n_0 = \Theta(\sqrt{m})$ vertices and $m_0 = O(m)$ edges where $m$ is such that $n_1=n_2=\sqrt{m}$ 
	(note that, indeed, $m_0 =  O(n_0^2)$),
	$\cB$ has preprocessing time $\poly(m)$ 
	and computation time $O( \sqrt{m} u(m) +  q(m) ) = \oo(m)$ which contradicts \Cref{oMv hard} by \Cref{uMv hard}.
		
	Next, suppose that $\cA$ is fully dynamic and only guarantees an amortized bound. 
	We will construct an algorithm $\cC$ for $1$-$\ouMv$ with parameters $n_1$, $n_2$, and $n_3$ which again contradicts \Cref{oMv hard}
	by running $\cA$ on the same graph as for solving 1-$\uMv$ while the number of updates and queries needed is multiplied by $O(n_3)$.
	This can be done because $\cA$ is fully dynamic. 
	So, for each vector pair $(u,v)$ for $\cC$, if $\cA$ makes $k$ updates to the graph,
	then $\cA$ can undo these updates with another $k$ updates so that the updated graph is the same as right after the preprocessing.
	Recall that, by the notion of amortization, if there are $t$ updates, 
	then $\cA$ takes $O( (t+m_0) \cdot u(m_0) )$ time where $m_0$ is a number of edges ever appearing in the graph.
	By choosing $n_3 = \sqrt{m}$, we have that $\cC$ has preprocessing time $\poly(m)$ 
	and computation time $O( (\sqrt{m}n_3 + m )u(m) + n_3 q(m) ) = \oo(m\sqrt{m})$ which contradicts \Cref{oMv hard} by \Cref{ouMv hard}.
\end{proof}
Note that st-SubConn is reducible to the following problems in a way that preserves the parameters of the lower bounds (see \cite{AbboudW14} for the first three reductions):
\begin{itemize}[noitemsep,nolistsep]
\item st-Reach,
\item Strong connectivity,
\item Bipartite perfect matching,
\item Size of bipartite maximum matching (and, hence, vertex cover),
\item st-maxflow in undirected and unit capacity graph (see \cite[Theorem 3.6.1]{Madry11}).
\end{itemize}
Therefore, these problems have the same lower bound.

\subsection{Lower Bounds for Graph Problems with a Trade-off}\label{sec:lb_trade_off}
To show hardness of the problems in this section, we reduce from $\uMv$ where $n_1 = m^\delta$ and $n_2 = m^{1-\delta}$ for any constant $\delta \in (0,1)$.
When $u$ and $v$ arrive, we make $m^{1-\delta}$ updates of the dynamic algorithm $\cA$ to handle only $v$. 
Then make $m^\delta$ queries to $\cA$ to find the value of $\uMv$.
Since the choice of $\delta$ is free, we get a trade-off lower bound between update time and query time.

Through out this subsection, $\delta\in(0,1)$ is any constant.

\paragraph{Single Source Subgraph Connectivity (ss-Subconn) }
\begin{lemma}
	\label{lem:ss-subconn reduc}Given a partially dynamic algorithm $\cA$ for ss-SubConn, after polynomial preprocessing time, 
	one can solve $(\frac{1-\delta}{\delta})$-$\uMv$ with parameters $n_1$ and $n_2$ by running the preprocessing step of $\cA$ on a graph with $\Theta(m^{1-\delta}+m^{\delta})$ nodes and $O(m)$ edges,
	then making $O(m^{1-\delta})$ insertions (or $O(m^{1-\delta})$ deletions) and $O(m^{\delta})$ queries,
	where $m$ is such that $m^\delta = n_1$ (so $m^{1-\delta} = n_2$).
	\end{lemma}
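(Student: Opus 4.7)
My plan is to give an almost direct reduction mirroring the one for st-SubConn in Lemma~\ref{st-subconn reduc}, but exploiting the single-source query capability to balance the cost of updates versus queries.

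First, given the input matrix $M$ of size $n_1 \times n_2$ with $n_1 = m^{\delta}$ and $n_2 = m^{1-\delta}$ (so $n_1 n_2 = m$), I would form the bipartite graph $G_M = ((L,R),E)$ associated to $M$, adjoin a single source vertex $s$, and connect $s$ to every $r_j \in R$. The resulting graph has $n_1 + n_2 + 1 = \Theta(m^{\delta} + m^{1-\delta})$ vertices and at most $n_1 n_2 + n_2 = O(m)$ edges, as required. I would then run the preprocessing of $\cA$ on this graph, with the initial subset $S$ taken to be all vertices (decremental case) or only $\{s\} \cup L$ (incremental case). This preprocessing costs $\poly(m)$ time by assumption.

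For the decremental case, when the vector pair $(u,v)$ arrives I would handle it in two phases. In the \emph{update phase}, for each $j$ with $v_j = 0$ I turn off $r_j$ using $\cA$'s deletion operation; this is $O(n_2) = O(m^{1-\delta})$ operations. After this phase, the active right vertices are exactly those indexed by the support of $v$, and hence a left vertex $l_i$ is connected to $s$ in $G[S]$ if and only if there exists $j$ with $v_j = 1$ and $M_{ij} = 1$, i.e., $(Mv)_i = 1$. In the \emph{query phase}, for each $i$ with $u_i = 1$ I ask $\cA$ whether $s$ and $l_i$ are connected in $G[S]$; this is at most $n_1 = O(m^{\delta})$ queries. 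I output $1$ iff at least one query returns yes, which by the observation above happens precisely when $u^{\top} M v = 1$. The incremental case is symmetric: initialize $S = \{s\} \cup L$, turn on $r_j$ for each $j$ with $v_j = 1$, and query as before.

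I do not expect a genuine obstacle here since the construction is immediate; the only thing to double-check is the accounting of vertices and edges (making sure the graph stays $O(m^\delta + m^{1-\delta})$ vertices and $O(m)$ edges regardless of which of $\delta$ or $1-\delta$ is larger) and the correctness of the semantics of $G[S]$, in particular that with $s$ permanently on, the query ``is $s$ connected to $l_i$ in $G[S]$?'' really returns $(Mv)_i$ rather than something polluted by spurious paths of length~$>2$; this holds because $G_M \cup \{(s,r_j)\}_j$ is bipartite on $\{s\} \cup L \mid R$, so the only $s$--$l_i$ paths go through a single $r_j$. The rest is just matching the update/query counts to the $\poly(m)$ preprocessing bound promised by the statement.
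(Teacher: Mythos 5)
Your construction is exactly the paper's: the bipartite graph $G_M$ with a source $s$ joined to all of $R$, turning off (resp.\ on) the $r_j$ according to $v$ with $O(m^{1-\delta})$ updates, and then at most $n_1 = O(m^{\delta})$ single-source connectivity queries to the $l_i$ with $u_i = 1$, with the same vertex/edge accounting. The only cosmetic slip is the remark that ``the only $s$--$l_i$ paths go through a single $r_j$'' (longer paths exist), but your stated equivalence is still correct since any $s$--$l_i$ path ends in an edge from an active $r_j$ to $l_i$, giving $(Mv)_i = 1$, and conversely a length-two path exists.
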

\begin{proof}
	We only prove the decremental case because the incremental case is
	symmetric. Given $M$, we construct the bipartite
	graph $G_{M}$, with an additional vertex $s$ and edges $(r_{j},s)$
	for all $j\le n_2$. Thus, the total number of edges is $n_1 n_2 + n_2= O(m)$.
	In the beginning, every node is turned on.
	Once $u$ and $v$ arrive, we turn off $r_{j}$ iff $v_{j}=0$. If $u^{\top}Mv=1$,
	then $s$ is connected to $l_{i}$ for some $i$ where $u_{i}=1$.
	Otherwise, $s$ is not connected to $l_{i}$ for all $i$ where $u_{i}=1$.
	We distinguish these two cases by querying, for every $ 1 \leq i \leq n_2 $, whether $ s $ and $ l_i $ are connected.
	In total, we need to do $n_2=m^{1-\delta}$ updates and $n_1=m^{\delta}$
	queries.
\end{proof}

\paragraph{Distinguishing between 2 and 4 for distances from $s$ (ss-SP (2 vs.\ 4))}
\begin{lemma}
\label{ss-SP 2vs4 reduc}Given a partially dynamic algorithm $\cA$ for
$(\alpha,\beta)$-approximate ss-SP with $2\alpha+\beta<4$, one can
solve $(\frac{1-\delta}{\delta})$-$\uMv$ with parameters $n_1$ and $n_2$ 
by running the preprocessing step of $\cA$ on a graph with $O(m)$ edges and $O(m^\delta + m^{1-\delta})$ vertices,
and then making $O(m^{1-\delta})$ insertions (or $O(m^{1-\delta})$ deletions) and $O(m^{\delta})$
queries, 
where $m$ is such that $m^\delta = n_1$ (so $m^{1-\delta} = n_2$).
\end{lemma}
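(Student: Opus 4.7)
The plan is to mimic the reduction from the preceding lemma (\Cref{st-SP 3vs5 reduc}), but with a single distinguished source $s$ and a longer query phase, so that one $\uMv$ instance is resolved by $n_2$ edge deletions followed by $n_1$ distance queries rather than a single query. I treat the decremental case; the incremental case is symmetric (start with $s$ isolated and insert the edges incident to $s$ that correspond to $v$).

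Given $M \in \{0,1\}^{n_1 \times n_2}$ with $n_1 = m^\delta$, $n_2 = m^{1-\delta}$, I would build $G_M = ((L,R),E)$ and add a new vertex $s$ together with all edges $(s,r_j)$ for $r_j \in R$. The vertex count is $n_1 + n_2 + 1 = O(m^\delta + m^{1-\delta})$ and the edge count is at most $n_1 n_2 + n_2 = O(m)$, matching the stated bounds. This graph is handed to the preprocessing of $\cA$ (polynomial time is allowed).

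When $(u,v)$ arrives, I would delete the edge $(s,r_j)$ for every $j$ with $v_j = 0$, costing at most $n_2 = O(m^{1-\delta})$ updates. After these deletions, the neighbors of $s$ in $R$ are exactly $\{r_j : v_j = 1\}$, so a length-$2$ walk $s \to r_j \to l_i$ exists if and only if there is some $j$ with $v_j = 1$ and $M_{ij}=1$, i.e., if and only if $(Mv)_i = 1$. Then for each $i$ with $u_i = 1$ I would query the approximate $s$-to-$l_i$ distance, costing at most $n_1 = O(m^\delta)$ queries. Declaring $u^\top M v = 1$ iff at least one returned approximate distance is strictly less than $4$ gives the answer.

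The one point that actually needs justification is why querying only settles a $2$-versus-$4$ gap, and this is where I would be a little careful. Because every edge of the constructed graph has exactly one endpoint in $L \cup \{s\}$-style bipartition (formally: $s \cup R$ on one side, $L$ on the other, since $s$ is adjacent only to $R$ and $R$ is adjacent only to $L \cup \{s\}$), every $s$-to-$l_i$ walk has even length. Hence whenever $(Mv)_i = 0$ the true distance $d(s,l_i)$ is at least $4$, while whenever $(Mv)_i = 1$ it equals $2$. Using $2\alpha + \beta < 4$, an $(\alpha,\beta)$-approximate oracle returns a value $<4$ in the former case and $\ge 4$ in the latter, so the two cases are distinguishable and the reduction is correct.
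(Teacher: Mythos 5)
Your proposal is correct and follows essentially the same route as the paper's proof: the same graph $G_M$ with a vertex $s$ joined to all of $R$, the same $n_2$ deletions of edges $(s,r_j)$ with $v_j=0$, and the same $n_1$ queries $d(s,l_i)$ for $i$ with $u_i=1$, distinguishing distance $2$ from distance $\ge 4$ via $2\alpha+\beta<4$. Your added parity justification of the ``$\ge 4$'' case is a nice touch and is valid, except that the bipartition you should cite is $(\{s\}\cup L,\, R)$ rather than the $(\{s\}\cup R,\, L)$ written in your parenthetical (with the latter the edges $(s,r_j)$ would lie inside one side); with the correct bipartition $s$ and $l_i$ lie on the same side, so all $s$--$l_i$ walks indeed have even length.
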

\begin{proof}
We only prove the decremental case. Given $M$, we construct the bipartite
graph $G_{M}$ and add to it a vertex $s$ and edges $(r_{j},s)$
for all $r_j \in R$. Thus, the total number of edges $n_1n_2+n_2=O(m)$. 

Once $u$ and $v$ arrive, we disconnect $s$ from $r_{j}$ iff $v_{j}=0$.
We have that if $u^{\top}Mv=1$, then $d(s,l_{i})=2$ for some $i$
where $u_{i}=1$ , otherwise $d(s,l_{i})\ge4$ for all $i$ where
$u_{i}=1$. In total, we need to do $n_2=O(m^{1-\delta})$ updates and
$n_1=O(m^{\delta})$ queries.
\end{proof}

\paragraph{Vertex-color Distance Oracle (1 vs.\ 3)}
Vertex-color distance oracles are studied in \cite{HermelinLWY11,Chechik12}. 
Given a graph $G$, one can change the color of any vertex and must handle the query that, for any vertex $u$ and color $c$, 
return $d(u,c)$ the distance from $u$ to the nearest vertex with color $c$. 
Chechik \cite{Chechik12} showed, for any integer $k\ge2$, a dynamic oracle with update time $\tilde{O}(n^{1/k})$ and query time $O(k)$ which $(4k-5)$ approximates the distance. 
Lacki~et~al.~\cite{LackiOPSZ13} extended the result when $k=2$ by handling additional operations and used it as a subroutine to get an algorithm for dynamic $(6+\epsilon)$-approximate Steiner tree. 

\begin{lemma}
\label{colDistOracle reduc}Given a dynamic algorithm $\cA$ for $(\alpha,\beta)$-approximate
vertex-color distance oracle with $\alpha+\beta<3$, one can solve $(\frac{1-\delta}{\delta})$-$\uMv$ with parameters $n_1$ and $n_2$ 
by running the preprocessing step of $\cA$ on a graph with $O(m)$ edges and $O(m^\delta + m^{1-\delta})$ vertices,
and then making $O(m^{1-\delta})$ vertex-color changes and $O(m^{\delta})$ queries,
where $m$ is such that $m^\delta = n_1$ (so $m^{1-\delta} = n_2$).
 \end{lemma}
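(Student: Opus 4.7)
The plan is to mirror the reduction for ss-SP (2 vs.\ 4) given in \Cref{ss-SP 2vs4 reduc}, with the role of edge deletions replaced by vertex-color changes, since that is the only update operation available to the oracle. First, I would construct the bipartite graph $G_M$ with parts $L=\{l_1,\dots,l_{n_1}\}$ and $R=\{r_1,\dots,r_{n_2}\}$, which has $O(m)$ edges and $n_1+n_2 = O(m^\delta+m^{1-\delta})$ vertices, and give every vertex an initial ``inactive'' color. The preprocessing step of $\cA$ is run on this colored graph, which takes polynomial time.

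When the pair $(u,v)$ arrives, the vector $v$ is encoded by picking one distinguished ``active'' color $c^\star$ and, for each $j$ with $v_j=1$, recoloring $r_j$ to $c^\star$ (leaving all other vertices with their inactive colors). This costs at most $n_2=O(m^{1-\delta})$ color-change updates. The vector $u$ is then handled at query time: for every $i$ with $u_i=1$, I would ask the oracle for the approximate distance $\tilde d(l_i,c^\star)$, which is $O(m^\delta)$ queries. I would report $u^\top Mv=1$ iff at least one of these returned values is strictly less than $3$.

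Correctness follows from the bipartite structure. If $u^\top Mv=1$, there exist $i,j$ with $u_i=v_j=M_{ij}=1$, so $l_i$ is adjacent to the $c^\star$-colored vertex $r_j$ and the true distance is $1$; an $(\alpha,\beta)$-approximation then returns a value at most $\alpha+\beta<3$. Conversely, if $u^\top Mv=0$, then for every $i$ with $u_i=1$ no neighbor of $l_i$ in $R$ is colored $c^\star$, and since $l_i\in L$ while every $c^\star$-colored vertex lies in $R$, any $l_i$-to-$c^\star$ path in $G_M$ has odd length; hence the true distance is at least $3$ and the returned value is at least $3$.

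The reduction itself is nearly immediate, so there is no single hard step; the only thing to get right is the quantitative gap, namely that the ``yes'' and ``no'' cases differ by a factor of $3$ in true distance so that the approximation hypothesis $\alpha+\beta<3$ (rather than $2\alpha+\beta<4$ as for ss-SP) is the correct one. This matches the fact that in ss-SP one can add a fresh source at distance $1$ from each $r_j$, pushing the gap to $2$ vs.\ $4$, whereas here the only natural ``anchor'' is $r_j$ itself, giving the $1$ vs.\ $3$ gap that the statement reflects.
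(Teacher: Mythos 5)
Your reduction is correct and essentially the paper's own: the same bipartite graph $G_M$, one vector encoded by vertex recolorings, the other by color-distance queries, and the bipartite $1$-vs-$3$ distance gap combined with $\alpha+\beta<3$ to decide $u^\top Mv$. The only difference is a mirror image of roles—you recolor the $R$-side to encode $v$ and query $\tilde d(l_i,c^\star)$ for the $i$'s with $u_i=1$, whereas the paper recolors the $L$-side to encode $u$ and queries $d(r_j,c)$ for the $j$'s with $v_j=1$—which in fact matches the update/query counts as literally stated in the lemma (the paper's proof has them transposed), and since $\delta\in(0,1)$ is arbitrary the two versions are interchangeable.
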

\begin{proof}
Given $M$, we construct the bipartite graph $G_{M}$. 
Set the color of $l_{i}$ for all $l_i \in L$ to~$c$.
The colors of the other vertices (i.e., those in $ R $) are set to $c' \neq c$.

Once $u$ and $v$ arrive, we set the color of $l_{i}$ to $c'\neq c$
iff $u_{i}=0$. If $u^{\top}Mv=1$, then $d(r_{j},c)=1$ for some
$j$ where $v_{j}=1$. Otherwise, $d(r_{j},c)\ge3$ for all $j$ where
$v_{j}=1$. In total, we need to do $n_1=O(m^{\delta})$
updates and $n_2= O(m^{1-\delta})$ queries.\end{proof}

\begin{corr}
\label{corr:trade-off}
For any $n$, $m \le O(\min\{n^{1/\delta}, n^{1/(1-\delta)}\})$, and constant $\delta\in(0,1)$, 
unless \Cref{oMv hard} fails, there is no partially dynamic algorithm $\cA$
for the problems in the list below for graphs with $n$ vertices and at most $m$ edges
with preprocessing time $p(m)=poly(m)$, 
worst-case update time $u(m)=\oo(m^{\delta})$, and query time $q(m)=\oo(m^{1-\delta})$
that has an error probability of at most $1/3$.
Moreover, this is true also for fully
dynamic algorithms with amortized update time. 
The problems are:
\begin{itemize}[noitemsep,nolistsep]
\item ss-SubConn
\item ss-SP (2 vs.\ 4)
\item Vertex-color Distance Oracle (1 vs.\ 3)
\end{itemize}
\end{corr}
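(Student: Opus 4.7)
The plan is to mirror the structure of the proof of \Cref{cor:high query}: assume for contradiction that such a partially dynamic algorithm $\cA$ exists, and use the three reduction lemmas from this subsection (\Cref{lem:ss-subconn reduc}, \Cref{ss-SP 2vs4 reduc}, \Cref{colDistOracle reduc}) to construct an algorithm $\cB$ for $1$-$\uMv$ that contradicts \Cref{uMv hard} with parameter $\gamma = \delta/(1-\delta)$.

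First I would set $n_1 = m^\delta$ and $n_2 = m^{1-\delta}$, so that $n_1 = n_2^\gamma$. Feeding $M$ into any of the three reductions produces a graph with $O(m)$ edges and $O(m^\delta + m^{1-\delta})$ vertices; the assumption $m \le O(\min\{n^{1/\delta}, n^{1/(1-\delta)}\})$ is exactly what is needed to guarantee that this graph fits within the $n$-vertex regime where $\cA$ enjoys its claimed bounds. The preprocessing of $\cB$ runs $\cA$'s preprocessing in $\poly(m)$ time, and then answering the $\uMv$ instance costs $O(m^{1-\delta}) \cdot u(m) + O(m^\delta) \cdot q(m) = \oo(m^{1-\delta} \cdot m^\delta) + \oo(m^\delta \cdot m^{1-\delta}) = \oo(n_1 n_2)$, contradicting \Cref{uMv hard}.

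For the fully dynamic amortized case, I would reduce from $1$-$\ouMv$ with parameters $n_1 = m^\delta$, $n_2 = m^{1-\delta}$, and $n_3$ chosen to be a suitable polynomial in $m$ (for instance $n_3 = m^\delta$, as in the proof of \Cref{cor:high query}). For each incoming pair $(u^t, v^t)$, apply the corresponding reduction lemma to perform $O(m^{1-\delta})$ updates and $O(m^\delta)$ queries; since $\cA$ is fully dynamic, after each pair we undo the updates with another $O(m^{1-\delta})$ updates so that the graph returns to its post-preprocessing state. The total number of updates ever applied is $O(n_3 \cdot m^{1-\delta} + m)$, so the amortized analysis charges total update time $O((n_3 m^{1-\delta} + m) \cdot u(m))$ and query time $O(n_3 m^\delta \cdot q(m))$. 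Both of these are $\oo(n_1 n_2 n_3)$, contradicting \Cref{ouMv hard}.

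The main (minor) obstacle is bookkeeping: verifying that the constructed graph genuinely has at most $n$ vertices and at most $O(m)$ edges in all three reductions, and that the two-sided constraint $m \le \min\{n^{1/\delta}, n^{1/(1-\delta)}\}$ is tight enough to cover both $L$ and $R$ sides of the bipartite construction $G_M$. Everything else is essentially a substitution of the reduction-lemma bounds into the general template already established for \Cref{cor:high query}, and no new ideas beyond those of the three reduction lemmas are required.
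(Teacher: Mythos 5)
Your proposal is correct and follows essentially the same route as the paper: reduce from $\uMv$ with dimensions $n_1=m^{\delta}$, $n_2=m^{1-\delta}$ via \Cref{lem:ss-subconn reduc,ss-SP 2vs4 reduc,colDistOracle reduc}, getting computation time $O(m^{1-\delta}u(m)+m^{\delta}q(m))=\oo(m)$ against \Cref{uMv hard}, and handle the fully dynamic amortized case by rolling back updates and invoking \Cref{ouMv hard} exactly as in \Cref{cor:high query} (which the paper only sketches and you spell out, correctly including the $+m$ term so the amortization applies). The only blemish is the label ``$1$-$\uMv$'' in your first sentence, which conflicts with your own (correct) choice $\gamma=\delta/(1-\delta)$ and is purely cosmetic.
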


\begin{proof}	
	Suppose there is such a partially dynamic algorithm $\cA$. That is, on a graph with $n_0$ vertices and $m_0 =  O(\min\{n_0^{1/\delta}, n_0^{1/(1-\delta)}\})$ edges, $\cA$ has worst-case update time 
	$u(m_0)=\oo(m_0^{\delta})$ and query time $q(m_0)=\oo(m_0^{1-\delta})$.
	We will give an algorithm $\cB$ for $(\frac{1-\delta}{\delta})$-$\uMv$ with parameters $n_1$ and $n_2$ which contradicts \Cref{oMv hard}.
	Using \Cref{lem:ss-subconn reduc,ss-SP 2vs4 reduc,colDistOracle reduc}, 	
	by running $\cA$ on a graph with $n_0 = \Theta(m^\delta + m^{1-\delta})$ vertices and $m_0 = O(m)$ edges,
	where $m$ is such that $m^\delta = n_1$ and so $m^{1-\delta} = n_2$
	(note that, indeed, $m_0 =  O(\min\{n_0^{1/\delta}, n_0^{1/(1-\delta)}\})$ ),
	$\cB$ has preprocessing time $\poly(m)$ 
	and computation time $O( m^{1-\delta} u(m) + m^\delta q(m) ) = \oo(m)$ which contradicts \Cref{oMv hard} by \Cref{uMv hard}.
	
	The argument for fully dynamic algorithm is similar as in the proof of \Cref{cor:high query}.
\end{proof}

These results show that improving the approximation ratio of 3 of vertex-color distance oracle will cost too much; i.e., we will need $\Omega(n)$ update or query time in a dense graph assuming \Cref{oMv hard} by setting $\delta = 1/2$ and $m=n^2$. 
In particular, one cannot improve the approximation ratio of dynamic Steiner tree with sub-linear update time by improving the approximation ratio of vertex-color distance oracle.

\subsection{Lower Bounds for Graph Problems with other Parameters}\label{sec:lb_graph_other}

\paragraph{$(3-\epsilon)$-approximate $s$-$t$ Shortest Path ($(3-\epsilon)$st-SP) }

By subdividing edges, we can get a weaker lower bound, but better
approximation factor, for distance related problems.
\begin{lemma}
\label{(3-eps) st-SP reduc}Given a partially dynamic algorithm $\cA$
for $(3-\epsilon)$-approximate st-SP, one can solve $\uMv$ with parameters $n_1$ and $n_2$ 
by running the preprocessing step of $\cA$ on a graph with $O(n)$ vertices and then
making $O(\sqrt{n})$ insertions (or $O(\sqrt{n})$ deletions) and $1$ query,
where $n$ is such that $n_1= n_2=\sqrt{n}$.
\end{lemma}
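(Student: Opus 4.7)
The plan is to amplify the $5$-vs-$3$ gap used in the earlier st-SP $(3$ vs.\ $5)$ reduction up to an almost-$3$ multiplicative ratio by subdividing the ``middle'' edges of $G_M$. The crucial observation is that a ``yes'' shortest $s$-$t$ path in the earlier construction uses exactly one $R$-$L$ crossing while a ``no'' path is forced to use at least three; so if we replace every such crossing by a path of length $L$ while leaving the $s$-$R$ and $L$-$t$ portions as single edges, the yes/no distances become $L+2$ and at least $3L+2$, which for sufficiently large constant $L$ is a multiplicative ratio arbitrarily close to $3$.

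Concretely, set $L:=\lceil 4/\epsilon\rceil$ so that $(3-\epsilon)(L+2)<3L+2$. Given the $\sqrt n \times \sqrt n$ matrix $M$, build the graph $H$ from $G_M$ by replacing each edge $(r_j,l_i)$ with $M_{ij}=1$ by a fresh internally vertex-disjoint path of length $L$, and adding the vertices $s,t$ together with all edges $(s,r_j)$ and $(l_i,t)$. In the decremental case, preprocess $H$ with $\cA$. Once $(u,v)$ arrives, delete $(s,r_j)$ for each $j$ with $v_j=0$ and $(l_i,t)$ for each $i$ with $u_i=0$ -- at most $n_1+n_2=O(\sqrt n)$ deletions in total -- and issue a single $d(s,t)$ query. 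The incremental case is symmetric: start without any $(s,r_j)$ or $(l_i,t)$ edges and insert those with $v_j=1$ or $u_i=1$.

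For the analysis, in the ``yes'' case there exist $i,j$ with $u_i=v_j=M_{ij}=1$, giving an $s$-$t$ path of length exactly $L+2$ through the $(i,j)$-gadget. In the ``no'' case any $s$-$t$ path has the shape $s\to r_{j_1}\to l_{i_1}\to r_{j_2}\to\cdots\to l_{i_k}\to t$; if $k=1$ then $v_{j_1}=M_{i_1 j_1}=u_{i_1}=1$, contradicting $u^\top Mv=0$, so $k\ge 2$ and the path has length at least $2+(2k-1)L\ge 3L+2$. A $(3-\epsilon)$-approximate answer $\tilde d$ therefore satisfies $\tilde d\le (3-\epsilon)(L+2)<3L+2\le \tilde d$ in the two cases respectively, so one query suffices to recover $u^\top Mv$.

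The graph $H$ has $O(\sqrt n)+(L-1)\cdot|\{(i,j):M_{ij}=1\}|=O(\sqrt n)+O(Ln)=O(n)$ vertices, since $L=O(1)$ for fixed $\epsilon$. The step that most deserves care is the ``no''-case distance lower bound: it relies on the fact that the subdivision paths are pairwise internally vertex-disjoint, so that every traversal of a gadget contributes exactly $L$ to the path length and no shortcut is possible, together with the observation that $k=1$ is directly excluded by the assumption $u^\top Mv=0$.
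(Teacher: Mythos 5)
Your proposal is correct and follows essentially the same route as the paper's proof: subdivide each bipartite edge of $G_M$ into a path of length $\Theta(1/\epsilon)$, attach $s$ and $t$ by single edges that get deleted (or inserted) according to the zero (or one) entries of $v$ and $u$, and distinguish distance $L+2$ from at least $3L+2$ with a single query. Your write-up is in fact slightly more explicit than the paper's on the ``no''-case lower bound (the $k\ge 2$ crossing count and internal disjointness of the subdivision paths), but the construction and parameters coincide.
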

\begin{proof}
We only prove the decremental case, because the incremental case is
symmetric.  
Given~$M$, we construct the bipartite graph $G_{M}$ and add to it vertices $s,t$, and edges $(t,l_{i}),(r_{j},s)$ for all $r_{j}\in R,l_{i}\in L$.
Furthermore we replace each edge $e$ in $G_{M}$ by a path $P_{e}$
of length $\frac{4}{\epsilon}$. Thus, the total number of vertices is
at most $O(n_1 n_2 /\epsilon) = O(n/\epsilon)$. 

Once $u$ and $v$ arrive, we delete $(t,l_{i})$ iff $u_{i}=0$ and
delete $(r_{j},s)$ iff $v_{j}=0$. If $u^{\top}Mv=1$, then $d(s,t)=2+\frac{4}{\epsilon}$,
otherwise $d(s,t)\ge2+3\cdot\frac{4}{\epsilon}$. One can verify that
$\frac{2+3\cdot\frac{4}{\epsilon}}{2+\frac{4}{\epsilon}}>3-\epsilon$ for any $\epsilon>0$.
In total, we need to do $n_1 + n_2 = O(\sqrt{n})$ updates and $1$ query.\end{proof}
\begin{corr}
\label{corr:3approx stSP}
Unless \Cref{oMv hard} fails, there is no partially dynamic algorithm
for $(3-\epsilon)$ st-SP on a graph with $n$ vertices 
with preprocessing time $p(n)=poly(n)$,
worst-case update time $u(n)=\oo(\sqrt{n})$, and query time $q(n)=\oo(n)$
that has an error probability of at most $1/3$.
Moreover, this is true also for fully dynamic algorithm with amortized update time.
\end{corr}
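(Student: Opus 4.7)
The plan is to mimic closely the proof of \Cref{cor:high query}, since \Cref{(3-eps) st-SP reduc} has essentially the same shape as \Cref{st-subconn reduc,st-SP 3vs5 reduc,tri reduc}: a reduction from a single $\uMv$ instance with $n_1 = n_2 = \sqrt{n}$ using polynomial preprocessing, $O(\sqrt{n})$ updates, and $O(1)$ queries on a graph with $\Theta(n/\epsilon)$ vertices (treating $\epsilon$ as a constant).

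First I would handle the partially dynamic worst-case case. Assume for contradiction that $\cA$ is a partially dynamic algorithm for $(3-\epsilon)$-approximate st-SP on $n_0$-vertex graphs with $p(n_0)=\poly(n_0)$, $u(n_0)=\oo(\sqrt{n_0})$, and $q(n_0)=\oo(n_0)$. Given an instance of $\uMv$ with parameters $n_1=n_2=\sqrt{n}$, invoke \Cref{(3-eps) st-SP reduc} to build $\cB$: run the preprocessing of $\cA$ on the reduction graph (which has $n_0=\Theta(n)$ vertices, so that $u,q$ inherit the claimed bounds), perform the $O(\sqrt{n})$ updates prescribed by the reduction, and issue the single query. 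The total computation time is $O(\sqrt{n}\cdot u(n) + q(n)) = \oo(n) = \oo(n_1 n_2)$, with polynomial preprocessing. This contradicts \Cref{uMv hard}.

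Next I would handle the fully dynamic amortized case by essentially the same construction, but reducing from $\ouMv$ instead of $\uMv$, exactly as in \Cref{cor:high query}. Given $\cA$ fully dynamic with amortized update bound $u(n_0)$ and query bound $q(n_0)$, and an $\ouMv$ instance with parameters $n_1=n_2=\sqrt{n}$ and $n_3$ to be chosen, I would iterate over the $n_3$ vector pairs: for each pair, apply the $O(\sqrt{n})$ updates from \Cref{(3-eps) st-SP reduc}, issue the query, and then undo those updates by another $O(\sqrt{n})$ updates (possible because $\cA$ is fully dynamic). Choosing $n_3=\sqrt{n}$, the total number of updates ever performed is $O(\sqrt{n}\cdot n_3 + n) = O(n)$ edges, so the amortized cost becomes $O((\sqrt{n}\,n_3 + n)\,u(n) + n_3\,q(n)) = \oo(n\sqrt{n}) = \oo(n_1 n_2 n_3)$ under the assumed bounds, contradicting \Cref{ouMv hard}.

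The only point that needs care, and which is the main technical obstacle, is ensuring that the amortized-time accounting is valid for the partially-dynamic $\to$ fully-dynamic passage: we must start from an empty graph and amortize over all edges that ever appear, as emphasized in the remark on amortization. The reduction graph of \Cref{(3-eps) st-SP reduc} has $\Theta(n)$ edges (from the subdivided complete bipartite graph plus the $s,t$ connections), so inserting it edge-by-edge at the start contributes $O(n\cdot u(n))$ to the amortized budget; together with the $O(\sqrt{n}\,n_3)$ on-the-fly updates this still gives the $\oo(n_1 n_2 n_3)$ bound. Everything else is a direct copy of the template in \Cref{cor:high query}.
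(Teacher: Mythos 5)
Your proposal is correct and matches the paper's intended argument: the paper states this corollary without a separate proof precisely because it follows from \Cref{(3-eps) st-SP reduc} by the same template as \Cref{cor:high query} (reduce from $1$-$\uMv$ with $n_1=n_2=\sqrt{n}$ for the worst-case partially dynamic bound, and from $1$-$\ouMv$ with $n_3=\sqrt{n}$ plus undoing of updates for the fully dynamic amortized bound), which is exactly what you do. Your accounting of the $\Theta(n)$ edges ever appearing in the amortized analysis also mirrors the paper's treatment in \Cref{cor:high query}, so no gap remains.
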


\paragraph{$d$-failure Connectivity }

$d$-failure connectivity problem is a ``1-batch-update'' version of dynamic subgraph connectivity.
The update, for turning off up to $d$ vertices, comes in
one batch. Then one can query whether two nodes $s$ and $t$ are connected.
We want the update time $u(d)$ for a batch of size $d$ and the query time $q(d)$ to depend
mainly on $d$. 

\begin{lemma}
	\label{d-fail conn reduc}
	Let $\delta\in(0,1/2]$ be a fixed constant.
	Given an algorithm $\cA$ for $d$-failure
	connectivity, one can solve $(\frac{\delta}{1-\delta})$-$\uMv$ with parameters $n_1$ and $n_2$ 
	by running the preprocessing step of $\cA$ on a graph with $O(m)$ edges and $\Theta(m^{1-\delta})$ vertices,
	then making $1$ batch of $O(m^\delta)$ updates and $O(m^{1-\delta})$ queries,
	where $m$ is such that $m^{1-\delta} = n_1$ (so $m^\delta = n_2$).
	\end{lemma}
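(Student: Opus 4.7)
The plan is to apply the same bipartite-graph encoding used by the earlier reductions in this section, but to respect the single-batch structure of $d$-failure connectivity by putting the shorter input vector into the batch and the longer one into the query phase.

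Concretely, set the $\uMv$ dimensions so that the matrix $M\in\{0,1\}^{n_1\times n_2}$ has $n_1=m^{\delta}$ and $n_2=m^{1-\delta}$, i.e., $n_1\le n_2$ since $\delta\le 1/2$. Construct the graph by taking the bipartite graph $G_M=((L,R),E)$ defined at the beginning of the section and adding a single new vertex $s$ joined to every $l_i\in L$. This graph has $n_1+n_2+1=\Theta(m^{1-\delta})$ vertices and $n_1n_2+n_1=O(m)$ edges, so the preprocessing hypothesis is met; we feed it to $\cA$. Once the pair $(u,v)$ arrives, form the (single) batch that removes every vertex $l_i$ with $u_i=0$: this is at most $n_1=m^{\delta}=d$ removals, so the batch is admissible. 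Then, for every $j$ with $v_j=1$, query $\cA$ whether $r_j$ is still connected to $s$; this is at most $n_2=m^{1-\delta}$ queries.

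Correctness is immediate from the bipartite structure: the only way for an $r_j$ to reach $s$ after the batch is via some $l_i\in L$ with $M_{ij}=1$ that was not removed, i.e., some $i$ with $u_i=M_{ij}=1$. Hence $r_j$ is connected to $s$ iff there exists $i$ with $u_i\wedge M_{ij}=1$, and OR-ing the query answers over all $j$ with $v_j=1$ yields $u^{\top}Mv$. The reduction outputs $1$ iff some query answers ``connected''.

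The main (and essentially only) obstacle is fitting within the budget of $d=m^{\delta}$ removals in a single batch: the obvious symmetric encoding used in \Cref{st-subconn reduc} would need $n_1+n_2=m^{\delta}+m^{1-\delta}$ removals and violate the budget when $\delta<1/2$, and attaching $s$ to the larger side $R$ would force $n_2=m^{1-\delta}$ removals and again exceed $d$. The asymmetric encoding above works precisely because $s$ is attached to only one side, so every $s$--$r_j$ path passes through exactly one removable vertex, letting us push the longer vector $v$ entirely into the query phase while the shorter vector $u$ is absorbed into the single batch.
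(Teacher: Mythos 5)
Your reduction is correct and is essentially the paper's own: the paper attaches $s$ to the $m^{\delta}$-sized side $R$ of $G_M$, fails the $r_j$ with $v_j=0$ in one batch of at most $m^{\delta}=d$ vertex removals, and queries $s$--$l_i$ connectivity for each $i$ with $u_i=1$, while you do the transpose (attach $s$ to the small side $L$, batch on $u$, query on $v$), which only swaps the roles of $u,v$ and of $n_1,n_2$ and is harmless since $u^{\top}Mv=v^{\top}M^{\top}u$ and \Cref{uMv hard} holds for every constant $\gamma>0$. The only nominal discrepancy is that the lemma fixes $n_1=m^{1-\delta}$ and $n_2=m^{\delta}$ whereas you set them the other way around; this is a relabeling of the identical construction and does not affect the argument.
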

\begin{proof}
Given $M$, we construct the bipartite graph $G_{M}$ and add to it a vertex $s$ and edges $(r_{j},s)$ for all $r_{j}\in R$. There
are $n_1n_2=O(m)$ edges and $n_1+n_2=\Theta(m^\delta + m^{1-\delta}) = \Theta(m^{1-\delta})$ vertices.

Once $u$ and $v$ arrive, we turn off all $r_{j}$ where $v_{j}=0$
in one batch of  $O(m^\delta)$ updates. $u^{\top}Mv=1$ iff, for some $i$, $u_{i}=1$,
$s$ is connected to $u_{i}$. This can be checked using at most $m^{1-\delta}$ queries.\end{proof}
\begin{corr}
	\label{corr:d failure d poly}
	For any $n$, $m = O(n^{1/(1-\delta)})$, and constant $\delta\in(0,1/2]$, 
	unless \Cref{oMv hard} fails, there is no algorithm for $d$-failure
	connectivity for a graph with $n$ vertices and at most $m$ edges
	with preprocessing time $p(n)=poly(n)$, update time $u(d)=\oo(d^{1/\delta})$, and query time $q(d)=\oo(d)$
	that has an error probability of at most $1/3$,
	when $d=m^{\delta}$.
	
	\end{corr}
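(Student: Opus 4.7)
The plan is to follow the two-step pattern used for Corollaries~\ref{corr:trade-off} and \ref{corr:3approx stSP}: assume the putative $d$-failure connectivity algorithm $\cA$ exists, then plug it into \Cref{d-fail conn reduc} to obtain a $\uMv$ algorithm $\cB$ that violates \Cref{uMv hard}.

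More concretely, fix $\delta \in (0, 1/2]$ and suppose $\cA$ achieves preprocessing time $p(n) = \poly(n)$, batch update time $u(d) = \oo(d^{1/\delta})$, and query time $q(d) = \oo(d)$ on graphs with $n$ vertices and at most $m = O(n^{1/(1-\delta)})$ edges, where $d = m^{\delta}$. To set up the contradiction, first I would instantiate \Cref{d-fail conn reduc} with the same $\delta$ and an integer $m$ satisfying $n_1 = m^{1-\delta}$ and $n_2 = m^{\delta}$ (so that the promise $n_1 = n_2^{(1-\delta)/\delta}$ of $\gamma\text{-}\uMv$ with $\gamma = (1-\delta)/\delta$ is met and $n_1 n_2 = m$). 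The reduction produces a graph with $\Theta(m^{1-\delta}) = \Theta(n)$ vertices and $O(m)$ edges, so the constraint $m \le O(n^{1/(1-\delta)})$ is exactly what makes $\cA$ applicable to this graph.

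Next, I would add up the work performed by $\cB$. Preprocessing $\cA$ on the reduction graph costs $p(\Theta(m^{1-\delta})) = \poly(m)$. Executing the single batch of $O(m^{\delta}) = O(d)$ vertex deactivations costs $u(d) = \oo(d^{1/\delta}) = \oo(m^{\delta \cdot 1/\delta}) = \oo(m)$. The $O(m^{1-\delta})$ connectivity queries cost a total of $O(m^{1-\delta}) \cdot q(d) = O(m^{1-\delta}) \cdot \oo(m^{\delta}) = \oo(m)$. Hence $\cB$ solves $\gamma\text{-}\uMv$ with parameters $n_1, n_2$ in polynomial preprocessing time and total computation time $\oo(m) = \oo(n_1 n_2)$, contradicting \Cref{uMv hard}. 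The extension to error probability $1/3$ is absorbed by the reduction, since \Cref{d-fail conn reduc} invokes $\cA$ only once per batch and $\cB$ inherits $\cA$'s error probability (or standard amplification can be applied).

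I do not expect any real obstacle: the only arithmetic to watch is that the balance $d = m^{\delta}$ makes the two $\oo(\cdot)$ terms collapse to $\oo(m)$, and that the stated graph-size constraint $m = O(n^{1/(1-\delta)})$ is precisely the one forced by the reduction's $\Theta(m^{1-\delta})$ vertex count, so there is no gap between the hypothesis on $\cA$ and the regime where the reduction applies.
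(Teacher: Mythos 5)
Your proposal is correct and follows essentially the same route as the paper: invoke \Cref{d-fail conn reduc} with $n_1=m^{1-\delta}$, $n_2=m^{\delta}$, observe that the single batch costs $u(m^{\delta})=\oo(m)$ and the $O(m^{1-\delta})$ queries cost $O(m^{1-\delta})\cdot q(m^{\delta})=\oo(m)$, and conclude a contradiction with \Cref{uMv hard}. (Your writing the promise exponent as $(1-\delta)/\delta$ rather than the paper's $\delta/(1-\delta)$ is only a labeling-convention difference and is immaterial, since the hardness of $\guMv$ holds for every fixed constant $\gamma>0$.)
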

\begin{proof}
	Suppose there is such an algorithm $\cA$. By \Cref{d-fail conn reduc}, we can solve $(\frac{\delta}{1-\delta})$-$\uMv$ with parameters  $n_1=m^{1-\delta}$ and $n_2=m^\delta$ by running $\cA$ in time $O(u(m^\delta) + m^{1-\delta}q(m^\delta) ) = \oo(m)$. This contradicts \Cref{oMv hard} by \Cref{uMv hard}.
\end{proof}
\Cref{corr:d failure d poly} implies that Duan and Pettie's result \cite{DuanP10}  with preprocessing time 
$\tilde{O}(d^{1-2/c}mn^{1/c - 1/(c\log(2d))})$,
update time $\tilde{O}(d^{2c+4})$ and query time $O(d)$, for any integer $c\ge 1$, is tight in the sense that we cannot improve the query time significantly as long as we want to have update time polynomial in $d$ (because we can choose $\delta$ to be any possible constant close to zero).
However, improving the update time does not contradict the \oMv conjecture.

With the same argument we also get the following lower bound.
\begin{corr}
	\label{corr:d failure dn}
	For any $n$, $m = \Theta(n^{1/(1-\delta)})$, and constant $\delta\in(0,1/2]$, 
	unless \Cref{oMv hard} fails, there is no algorithm for $d$-failure
	connectivity for a graph with $n$ vertices and at most $m$ edges
	with preprocessing time $p(n)=poly(n)$, update time $u(n,d)=\oo(dn)$, and query time $q(d)=\oo(d)$
	that has an error probability of at most $1/3$,
	when $d=m^{\delta}$.	
\end{corr}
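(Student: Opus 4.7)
The plan is to mimic the proof of \Cref{corr:d failure d poly} essentially verbatim, only substituting the new form of the update time into the running time accounting. The key is to use \Cref{d-fail conn reduc} with the same choice of parameters $n_1 = m^{1-\delta}$ and $n_2 = m^\delta$, and observe that under the constraint $m = \Theta(n^{1/(1-\delta)})$ (equivalently $n = \Theta(m^{1-\delta})$), the new update time bound $u(n,d) = \oo(dn)$ becomes $\oo(m)$, because $d \cdot n = m^\delta \cdot m^{1-\delta} = m$.

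More concretely, suppose for contradiction that $\cA$ is an algorithm for $d$-failure connectivity with the stated parameters. I will construct an algorithm $\cB$ for $(\tfrac{\delta}{1-\delta})$-$\uMv$ with parameters $n_1 = m^{1-\delta}$ and $n_2 = m^\delta$ and derive a contradiction with \Cref{uMv hard}. Applying \Cref{d-fail conn reduc}, $\cB$ preprocesses a graph on $\Theta(m^{1-\delta}) = \Theta(n)$ vertices and $O(m)$ edges, which by hypothesis takes $\poly(n) = \poly(m)$ time. Then, once $(u,v)$ arrives, $\cB$ performs a single batch update of $O(m^\delta) = O(d)$ vertex removals followed by $O(m^{1-\delta})$ connectivity queries, yielding total computation time
\begin{equation*}
u(n,d) + O(m^{1-\delta}) \cdot q(d) = \oo(dn) + m^{1-\delta} \cdot \oo(d) = \oo(m) + \oo(m^{1-\delta} \cdot m^{\delta}) = \oo(m).
\end{equation*}
This contradicts the $\uMv$ hardness from \Cref{uMv hard}.

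I do not anticipate any real obstacle here; the only thing to be careful about is that the reduction in \Cref{d-fail conn reduc} indeed produces a graph on $\Theta(n)$ vertices under the regime $m = \Theta(n^{1/(1-\delta)})$, so that invoking $u(n,d)$ with the problem's stated $n$ is legitimate, and that both the single batch update and the $O(m^{1-\delta})$ queries fit within $\oo(m)$. Both are immediate from the counting above. The proposal therefore reduces to a one-line invocation of the previous lemma together with the observation that $dn = m$ in this parameter regime.
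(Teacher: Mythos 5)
Your proposal is correct and follows exactly the paper's intended argument: the paper proves \Cref{corr:d failure dn} by remarking that it follows "with the same argument" as \Cref{corr:d failure d poly}, i.e., by invoking \Cref{d-fail conn reduc} with $n_1 = m^{1-\delta}$, $n_2 = m^{\delta}$ and noting that $u(n,d)=\oo(dn)=\oo(m)$ and $m^{1-\delta}\cdot\oo(d)=\oo(m)$ in the regime $n=\Theta(m^{1-\delta})$, $d=m^{\delta}$, contradicting \Cref{uMv hard}. Your accounting, including the check that the reduction graph has $\Theta(m^{1-\delta})=\Theta(n)$ vertices, matches the paper's reasoning.
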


\subsection{Lower Bounds for Non-graph Problems}\label{sec:lb_non_graph}
In this section, we mimic the reductions from \cite{Patrascu10} to show hardness of non-graph problems. 
However, our results imply amortized lower bounds while the results in \cite{Patrascu10} are for worst-case lower bounds.

\paragraph{Pagh's problem.}

\begin{lemma}
	\label{pagh reduc in short}Given an algorithm $\cA$ for Pagh's problem (cf. \Cref{table:problem definitions 3}),
	for any constant $\gamma>0$, one can solve $\guMv$ with parameters $k$ and $n$ by running the preprocessing step of $\cA$ on $k$ initial sets in the universe $[n]$
	using $\poly(k,n)$ preprocessing time, $k$ updates 
	(adding $k$ more sets into the family) and $n$ queries.\end{lemma}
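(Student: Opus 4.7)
My plan is to encode the Boolean matrix $M$ from the $\uMv$ instance into the initial family of Pagh's problem so that the quantity $u^\top M v$ can be read off using only $k$ intersection updates and $n$ membership queries.

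Because Pagh's updates produce \emph{intersections} rather than unions, while the quantity $u^\top M v=1$ is naturally an OR-of-ANDs over entries of $M$, the idea is to first complement $M$. Concretely, given the input matrix $M \in \{0,1\}^{k\times n}$ of the $\guMv$ instance, I define
\[
  X_i \;:=\; \{\, j \in [n] : M_{ij}=0 \,\}, \qquad i \in [k],
\]
and feed these $k$ sets to the preprocessing step of $\cA$. The total description size is $O(kn)$, so the preprocessing runs in $\poly(k,n)$ time, as required.

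When the pair $(u,v)$ arrives, let $S := \{\, i \in [k] : u_i = 1 \,\}=\{i_1,\dots,i_s\}$ and $V := \{\, j \in [n] : v_j = 1 \,\}$. The key identity I will use is
\[
  u^\top M v \;=\; 1 \iff V \not\subseteq \bigcap_{i \in S} X_i,
\]
since a witness $(i,j)$ for $u^\top M v = 1$ is precisely an index $i \in S$ together with some $j \in V$ such that $M_{ij}=1$, i.e.\ $j \notin X_i$. To materialize the set $Z := \bigcap_{i \in S} X_i$ inside Pagh's family, I use a chain of updates: take $Z_1 := X_{i_1}$ (already present) and, for $t = 2,\dots,s$, insert $Z_t := Z_{t-1} \cap X_{i_t}$ via one Pagh update. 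This adds $s-1 \le k-1$ new sets, so in total at most $k$ updates are used. Finally, for each $j \in V$ I issue the query ``is $j \in Z_s$?''; the answer to the $\uMv$ instance is $0$ if every response is ``yes'', and $1$ otherwise. This is at most $n$ queries. (The degenerate cases $u = 0$ or $v = 0$ are answered directly with $0$ and require no updates or queries.)

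The principal obstacle is entirely conceptual: recognizing the complement trick that rewrites the OR-of-ORs $u^\top M v$ as the negation of an AND-of-ANDs, so that Pagh's intersection-only update suffices to assemble the required witness set. Once this is in place, the bookkeeping for preprocessing time, update count and query count is immediate, and the argument is insensitive to the aspect ratio of $M$, so it works uniformly for any fixed $\gamma > 0$.
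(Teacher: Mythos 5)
Your proposal is correct and follows essentially the same route as the paper's proof: complement the rows of $M$ (your $X_i$ is exactly the paper's $\bar{M}_i$), build $\bigcap_{i:u_i=1}X_i$ with a chain of at most $k$ intersection updates, and test membership of each $j$ with $v_j=1$ using at most $n$ queries, relying on the identity $u^\top M v=1 \iff$ some such $j$ lies outside the intersection. The only cosmetic differences are your explicit handling of the degenerate cases $u=0$ or $v=0$ and the slightly more detailed bookkeeping of the update chain.
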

\begin{proof}
	Given a matrix $M$, let $\bar{M}$ be the matrix defined by $\bar{M}_{i,j}=1-M_{i,j}$ for all $ i $ and $ j $.
	Let $M_i$ be the $i$-th row of $M$ and treat it as a subset of $[n]$ i.e., $j\in M_i$ iff $M_{i,j}=1$.
	We similarly treat the $i$-th row $\bar{M}_i$ of $\bar{M}$ as a set.
	Note that $u^{\top}Me_{j}=1$
	iff $j\in\bigcup_{i:u_{i}=1}M_{i}$ iff $j\notin\bigcap_{i:u_{i}=1}\bar{M}_{i}$. 
	Thus, at the beginning, we compute $\bar{M}_{i}$ for each $i\le k$. 
	Once $u$ and $v$ arrive, we compute $\bigcap_{u_{i}=1}\bar{M}_{i}$ using
	$k$ updates. There is a $j$ with $v_{j}=1$ such that $j\notin\bigcap_{u_{i}=1}\bar{M}_{i}$
	iff $u^{\top}Mv=1$. We need $n$ queries to check if such a $ j $ exists.\end{proof}

\begin{corr}
	\label{corr:pagh}
	For any constant $\gamma>0$,
	\Cref{oMv hard} implies that there is no dynamic algorithm $\cA$
	for Pagh's problem maintaining $k$ sets over the universe $[n]$ where $k=n^\gamma$,
	with $\poly(k,n)$ preprocessing
	time, $u(k,n)=\oo(n)$ amortized update time, and $q(k,n)=\oo(k)$ query time
	that has an error probability of at most $1/3$.
\end{corr}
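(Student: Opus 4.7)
The plan is to instantiate \Cref{pagh reduc in short} with the hypothetical algorithm $\cA$ and derive a $\guMv$ algorithm whose parameters would violate \Cref{uMv hard}.

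More concretely, I would suppose for contradiction that such an $\cA$ exists, and then feed it into the reduction of \Cref{pagh reduc in short} to obtain an algorithm $\cB$ for $\guMv$ with parameters $n_{1}=k$ and $n_{2}=n$; since $k=n^{\gamma}$, the dimension promise $n_{1}=n_{2}^{\gamma}$ required by $\guMv$ is met. By that lemma, $\cB$ preprocesses the matrix $M$ by encoding its complement rows $\bar M_{i}$ as the initial family of $k$ sets and invoking $\cA$'s preprocessing in $\poly(k,n)$ time; upon arrival of $(u,v)$ it performs at most $k$ update operations of $\cA$ to build (intersections encoding) $\bigcap_{i:u_{i}=1}\bar M_{i}$; and finally it issues $n$ point-membership queries to decide whether some $j$ with $v_{j}=1$ lies outside that intersection, which is equivalent to $u^{\top}Mv=1$.

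The running-time accounting is then immediate. The $k$ updates cost at most $k\cdot u(k,n)=k\cdot\oo(n)=\oo(kn)$ in total, because an amortized per-update bound of $u(k,n)$ caps the total cost of any length-$k$ update sequence by $k\cdot u(k,n)$; similarly, the $n$ queries take $n\cdot q(k,n)=n\cdot\oo(k)=\oo(kn)$. Thus $\cB$ attains $\poly(k,n)$ preprocessing and $\oo(kn)=\oo(n_{1}n_{2})$ computation time with error probability at most $1/3$ (inherited from $\cA$), directly contradicting \Cref{uMv hard}.

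The argument is essentially a routine unpacking of \Cref{pagh reduc in short}, so no genuine obstacle is expected; the only small subtlety is to justify that amortization over $k$ updates legitimately yields a $k\cdot u(k,n)$ total bound, which is exactly the standard meaning of amortized cost. This is also the source of the paper's remark that the proof is ``almost trivial'' compared to the earlier reductions to Pagh's problem from combinatorial BMM or \threesum.
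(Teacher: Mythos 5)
Your proposal is correct and follows essentially the same route as the paper: plug the hypothetical algorithm into \Cref{pagh reduc in short}, obtain a $\guMv$ algorithm with parameters $k=n^\gamma$ and $n$, bound the total cost of the $k$ updates and $n$ queries by $\oo(kn)$, and contradict \Cref{uMv hard}. The paper is only slightly more pedantic about the amortization convention (it charges the total update time as $O(2k\cdot u(2k,n))$, amortizing over the $k$ initial sets plus the $k$ inserted sets), which differs from your $k\cdot u(k,n)$ accounting only by a constant factor and does not affect the conclusion.
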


\begin{proof}
	Recall that, by the notion of amortization, 
	if $\cA$ initially maintains $k$ sets, then the total update time of $\cA$ is $O((t+k)\cdot u(t+k,n))$.
	By \Cref{pagh reduc in short}, we can solve $\uMv$ with parameters $k$ and $n$ 
	by running $\cA$ and making $k$ updates and $n$ queries
	in time $O(2k \cdot u(2k,n) + n \cdot q(2k,n)) = \oo(nk)$.
	This contradicts \Cref{oMv hard} by \Cref{uMv hard}.
\end{proof}

\paragraph{Langerman's Zero Prefix Sum problem }
\begin{lemma}
	\label{langerman reduc}Given an algorithm $\cA$ for Langerman's problem (cf. \Cref{table:problem definitions 3}),
	one can solve 1-$\uMv$ with parameters $n_1$ and $n_2$ by
	running the preprocessing step of $\cA$ on an array of size $O(n)$, and then
	making $O(\sqrt{n})$ updates and $O(\sqrt{n})$ queries
	where $n$ is such that $n_1=n_2=\sqrt{n}$
	\end{lemma}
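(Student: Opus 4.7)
The plan is to reduce $1$-\uMv with parameters $n_1 = n_2 = \sqrt n$ to Langerman's Zero Prefix Sum on an array of length $\Theta(n)$, in the spirit of the reductions of P\v{a}tra\c{s}cu~\cite{Patrascu10}. In the preprocessing step, I would have $\cA$ preprocess an array $A$ of size $\Theta(n)$ that encodes the matrix $M$, organized as $n_1$ consecutive \emph{row-blocks} of length $\Theta(n_2)$ (one per row of $M$), interleaved with a small set of \emph{control cells}---a global sentinel, one row-selector per row-block, and one column-selector per column. The matrix cells store $M_{i,j}$ with carefully chosen integer weights, and the control cells are initialized to default values; all of this is done in $\poly(n)$ time by $\cA$.

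When $(u,v)$ arrives, I would perform $\Theta(n_2) = \Theta(\sqrt n)$ updates to the column-selector cells to encode $v$ and $\Theta(n_1)=\Theta(\sqrt n)$ updates to the row-selectors to encode $u$, for a total of $O(\sqrt n)$ updates. The design will ensure that, after these updates, within row-block $i$ the running prefix sum reaches zero at some position iff $u_i = 1$ and $(Mv)_i \ge 1$. To decide $u^\top M v$, I would then issue at most $n_1 = \Theta(\sqrt n)$ Langerman queries---each targeting a different candidate row-block $i^*$ by first applying a single additional update that shifts the global sentinel to the appropriate local baseline---and output $1$ iff some query returns yes. Outside the calls to $\cA$, the reduction's own bookkeeping is $O(\sqrt n)$.

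The main obstacle is designing the control cells so that only $O(\sqrt n)$ point-updates suffice to propagate the effect of $v$ (which logically modifies all $n$ matrix cells) through the preprocessed array. The key idea is an interblock cancellation scheme: a short reset tail appended to each row-block returns the running prefix sum to a common baseline regardless of which columns $v$ selected, reducing the global check to independent per-row tests. A single row-selector update then either activates its row-block (so that a zero crossing is possible exactly when row $i$ hits the support of $v$) or adds a large negative offset that locks the block out; a single column-selector update toggles whether column $j$ contributes to every row-block, with the reset tails absorbing the discrepancy. Verifying that these pieces compose correctly---so that $O(\sqrt n)$ point-updates faithfully encode all $n$ products $u_i M_{ij} v_j$---is the delicate combinatorial step, essentially analogous to the encoding used in P\v{a}tra\c{s}cu's reduction from the multiphase problem to Langerman's problem.
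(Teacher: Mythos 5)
There is a genuine gap: the heart of your reduction is the claim that one point-update per column-selector cell can ``toggle whether column $j$ contributes to every row-block,'' so that after $O(\sqrt n)$ updates the prefix sum inside row-block $i$ reaches zero iff $u_i=1$ and $(Mv)_i\ge 1$. This is exactly the step you defer as ``the delicate combinatorial step,'' and it is not just unproven but problematic in principle for the mechanism you describe. A Langerman update changes a single array position, and that change shifts \emph{all} prefix sums at later positions by the same constant; the matrix cells themselves are frozen after preprocessing. So a per-column control cell cannot create a column-dependent effect \emph{inside} each row-block: whether the running sum dips to zero at the position encoding $M_{i,j}$ is determined by the fixed cells $M_{i,1},\dots,M_{i,j}$ plus a uniform offset, and a uniform offset cannot simultaneously arm all columns in the support of $v$ and disarm the others for every row. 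In other words, testing the full disjunction $(Mv)_i=\bigvee_{j:v_j=1}M_{i,j}$ with one query per row, after only $O(\sqrt n)$ point-updates, is precisely what your ``interblock cancellation scheme'' would have to achieve, and no construction is given. (There is also a secondary issue: a Langerman query is global, so ``targeting'' a candidate row-block $i^\ast$ by shifting a sentinel requires arguing that no other block can produce a zero prefix under that sentinel value, which your sketch does not do.)

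The paper's proof avoids this obstacle by never encoding $v$ through per-column cells at all. It lays out one block of length $2n_2+2$ per row, encoding $M_{i,j}=1$ as the pair $(1,1)$ and $M_{i,j}=0$ as $(2,0)$, with $R_{i,1}=0$ and $R_{i,2n_2+2}=-2n_2$ so each row sums to zero and the rows are independent. The vector $u$ is handled by $O(n_1)$ updates that swap $R_{i,1}$ and $R_{i,2n_2+2}$ in the rows with $u_i=1$ (activating them). The vector $v$ is then processed \emph{one column at a time}: for each $j$ with $v_j=1$, a single update sets the one leading cell $R_0=2(n_2-j)+1$ and a single query tests whether $u^{\top}Me_j=1$; the odd offset together with the $(1,1)$ versus $(2,0)$ pair encoding guarantees the prefix sum hits zero exactly at column $j$ of an active row with $M_{i,j}=1$ and nowhere else. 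This costs one update and one query per column in the support of $v$, i.e.\ $O(n_2)=O(\sqrt n)$ in total, which is why the budget works without ever solving the propagation problem your scheme runs into. To repair your proposal you would essentially have to adopt this per-column testing (or exhibit a concrete gadget achieving your toggling claim, which the prefix-sum structure does not appear to permit).
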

\begin{proof}
	Given a matrix $M$, we construct an array $R$ of size $1+n_{1}\cdot(2n_{2}+2)=O(n)$.
	For convenience, we will imagine that $R$ is arranged as a two-dimensional array $\{R_{i,j}\}_{i\in[n_{1}],j\in[2n_{2}+2]}$
	($R_{i,2n_{2}+2}$ is before $R_{i+1,1}$) with one additional entry
	$R_{0}$ at the beginning. 
	
	For all $i \le n_1$, we set $R_{i,1}=0$ and $R_{i,2n_{2}+2}=-2n_{2}$. For each entry
	of $M_{i,j}$, if $M_{i,j}=1$, then we set $R_{i,2j}=1$ and $R_{i,2j+1}=1$. If
	$M_{i,j}=0$, then we set $R_{i,2j}=2$ and $R_{i,2j+1}=0$. Note that $\sum_{j}R_{i,j}=0$
	for all $i$, so the rows of $R$ are ``independent''.
	
	Once $u$ and $v$ arrive, we swap the values of $R_{i,1}$ and $R_{i,2n_{2}+2}$ 
	for all $i$ where $u_{i}=1$ by setting $R_{i,1}=-2n_{2}$ and $R_{i,2n_{2}+2}=0$. For each $j$ where $v_{j}=1$, we
	set $R_{0}=2(n_{2}-j)+1$ and query for a zero prefix sum. See \Cref{fig:langerman}.
	In total, we need to do $O(n_{1}+n_{2})=O(\sqrt{n})$ updates and $O(n_{2})=O(\sqrt{n})$ queries.
	
	To show correctness, we claim that a zero prefix sum exists iff $u^{\top}Me_{j}=1$ where $e_{j}$
	has $1$ at only the $j$-th entry. 
	First, the prefix sums
	cannot reach zero at row $i$ of $R$ if $u_{i}=0$, because in that
	row $i$, each number is positive except $R_{i,2n_{2}+2}=-2n_{2}$
	which just resets the sum within the row to zero. Second, for each row
	$i$ where $u_{i}=1$, the prefix sum from $R_{0}$ to $R_{i,1}$
	is $-2j+1$. Then each pair of entries in the row increments the sum
	by $2$. The prefix sum reaches zero iff $M_{i,j}=1$. If $M_{i,j}=0$,
	then $R_{i,2j}=2$ so the prefix sums to $R_{i,2j-1}$ and $R_{i,2j}$ are $-1$ and $1$, respectively.
	The prefix sum then stays
	positive until row $i$ finishes. If $M_{i,j}=1$, then $R_{i,2j}=1$ and the prefix sum from
	$R_{0}$ to $R_{i,2j}$ is exactly 0.
\end{proof}

\begin{figure}

	\centering
	\begin{tabular}{c|c|cc|cc|c|cc|c|}
		\hline 
		\multicolumn{1}{|c|}{$2(n_{2}-j)+1$} & $0$ & 1 & 1 & 2 & 0 & $\cdots$ & 2 & 0 & $-2n_{2}$\tabularnewline
		\hline 
		& $0$ & 1 & 1 & 2 & 0 & $\cdots$ & 1 & 1 & $-2n_{2}$\tabularnewline
		\cline{2-10} 
		& $-2n_{2}$ & 2 & 0 & 2 & 0 & $\cdots$ & 1 & 1 & 0\tabularnewline
		\cline{2-10} 
		& $-2n_{2}$ & 1 & 1 & 2 & 0 & $\cdots$ & 2 & 0 & 0\tabularnewline
		\cline{2-10} 
		& $0$ & 2 & 0 & 1 & 1 & $\cdots$ & 2 & 0 & $-2n_{2}$\tabularnewline
		\cline{2-10} 
	\end{tabular}		
	\protect\caption{The array in the reduction from $\uMv$ to Langerman's zero prefix
		sum problem}\label{fig:langerman}
\end{figure}

\begin{corr}
	\label{corr:langerman}
	\Cref{oMv hard} implies that there is no algorithm for Langerman's problem on an array of size $n$
	with preprocessing time $p(n)=poly(n)$, 
	amortized update time $u(n)=\oo(\sqrt{n})$, and query time $q(m)=\oo(\sqrt{n})$
	that has an error probability of at most $1/3$.
\end{corr}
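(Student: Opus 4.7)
The plan is to mirror the two-step pattern used in \Cref{cor:high query}: handle the worst-case update time case by a direct reduction to $\uMv$, and then handle the (harder) amortized case by packing many $\uMv$ instances into a single $\ouMv$ instance and undoing updates between rounds so that the amortized cost is charged across all rounds.

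For the worst-case case, I would assume such an $\cA$ with worst-case time bounds $u(n), q(n) = \oo(\sqrt{n})$ and polynomial preprocessing, and invoke \Cref{langerman reduc} with $n_1 = n_2 = \sqrt{n}$. The resulting $\uMv$ algorithm $\cB$ runs polynomial preprocessing followed by $O(\sqrt{n})$ updates and $O(\sqrt{n})$ queries on an array of size $O(n)$, giving total computation time $O(\sqrt{n}\cdot u(n) + \sqrt{n}\cdot q(n)) = \oo(n)$, which contradicts $\uMv$ hardness (\Cref{uMv hard}) with parameters $n_1 = n_2 = \sqrt{n}$.

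For the amortized case, I would build an algorithm $\cC$ for $\ouMv$ with parameters $n_1 = n_2 = \sqrt{n}$ and $n_3 = \sqrt{n}$. The key observation, used implicitly in the reduction of \Cref{langerman reduc}, is that the only array entries touched when processing a vector pair $(u^t, v^t)$ are (i)~the single slot $R_0$ (modified once per query) and (ii)~the $O(\sqrt{n})$ entries $R_{i,1}$ and $R_{i,2n_2+2}$ that were swapped for rows with $u_i^t = 1$. Since Langerman's problem allows arbitrary updates, after processing $(u^t, v^t)$ I can undo these $O(\sqrt{n})$ swaps with another $O(\sqrt{n})$ updates, returning the array to exactly the state it had right after preprocessing. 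This lets me reuse the same preprocessed instance of $\cA$ across all $n_3$ rounds.

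Accounting: the total number of updates over the $n_3 = \sqrt{n}$ rounds is $O(n_3 \sqrt{n}) = O(n)$, and the total number of queries is likewise $O(n)$. By the convention on amortized bounds (as used explicitly in \Cref{corr:pagh}), the total update cost is $O((n + n)\cdot u(n)) = \oo(n\sqrt{n})$, the total query cost is $O(n)\cdot \oo(\sqrt{n}) = \oo(n\sqrt{n})$, and the preprocessing is $\poly(n)$. Hence $\cC$ runs in preprocessing $\poly(n)$ and computation $\oo(n\sqrt{n}) = \oo(n_1 n_2 n_3)$, contradicting \Cref{ouMv hard}. The main (minor) obstacle is justifying that the amortized cost really is bounded as $O((\text{\#updates} + n)\cdot u(n))$ rather than blowing up with the number of operations, which follows from the standard amortization convention adopted throughout the paper and from the fact that the array length never exceeds $O(n)$ during the simulation.
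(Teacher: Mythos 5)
Your proposal is correct and matches the paper's proof: the paper likewise invokes \Cref{langerman reduc}, "resets" the array between vector pairs, packs the rounds into an $\ouMv$ instance with $n_1=n_2=n_3=\sqrt{n}$ so that the $\Theta(n)$ total updates justify using the amortized bound, and derives a contradiction with \Cref{ouMv hard} from the $\oo(n\sqrt{n})$ total time. Your separate worst-case warm-up via \Cref{uMv hard} is harmless but unnecessary, since the stated claim concerns amortized update time and is handled directly by the $\ouMv$ argument.
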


\begin{proof}
	Suppose there is such an algorithm $\cA$. By \Cref{langerman reduc} and by ``resetting'' the array when the new vector pair arrives, one can solve $\ouMv$ with parameters $n_1=\sqrt{n}$, $n_2=\sqrt{n}$, and $n_3=\sqrt{n}$ in time $O(n\cdot u(n) + n\cdot q(n)) = \oo(n\sqrt{n})$. 
	This contradicts \Cref{oMv hard} by \Cref{ouMv hard}. Note that by the choice of $ n_3 = \sqrt{n} $ as the third parameter of $\ouMv$ we perform $ \Theta (n) $ updates to $\cA$, which allows use to use the amortized update time $ u(n) $ of $\cA$ in this argument.
\end{proof}

\paragraph{Erickson's problem}

\begin{lemma}
	\label{lem:erikson reduc}
	Given an algorithm $\cA$ for Erickson's problem (cf. \Cref{table:problem definitions 3}), one can
	solve 1-$\ouMv$ with parameters $n_1$,$n_2$, and $n_3$ by running the preprocessing step of $\cA$
	on a matrix of size $n\times n$ and then making 
	$O(n\cdot n_{3})$ updates and $n_{3}$ queries,
	where $n$ is such that $n_1 = n_2 = n$.
	\end{lemma}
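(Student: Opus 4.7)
The plan is to encode the matrix $M$ directly as the integer matrix of the Erickson instance, and then simulate each round of $\ouMv$ using row and column increments, reading off $(u^t)^\top M v^t$ from a single max-query per round. The only subtlety is that Erickson's operation is monotone (increment-only), so I have to be careful about how the state of the matrix evolves across rounds.

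More precisely, I would preprocess by feeding $\cA$ the matrix $A$ of size $n\times n$ with $A_{i,j}=M_{i,j}\in\{0,1\}$. I will then maintain the invariant that \emph{at the start of round $t$, every row has been incremented exactly $t-1$ times and every column has been incremented exactly $t-1$ times}, so that $A_{i,j}=M_{i,j}+2(t-1)$ at that moment. In round $t$ I first do the ``active'' increments: for every $i$ with $u^t_i=1$, increment row $i$; for every $j$ with $v^t_j=1$, increment column $j$. Then $A_{i,j}=M_{i,j}+2(t-1)+u^t_i+v^t_j$, and I query the max. Since $M_{i,j}+u^t_i+v^t_j\in\{0,1,2,3\}$ with value $3$ attained iff $M_{i,j}=u^t_i=v^t_j=1$, the max equals $3+2(t-1)$ if $(u^t)^\top M v^t=1$ and is at most $2+2(t-1)$ otherwise; one comparison then yields the answer. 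Finally, I restore the invariant by performing the ``balancing'' increments: for every $i$ with $u^t_i=0$, increment row $i$, and for every $j$ with $v^t_j=0$, increment column $j$. After this, each row and each column has been incremented exactly once in round $t$, hence exactly $t$ times in total, so $A_{i,j}=M_{i,j}+2t$ and the invariant is re-established for round $t+1$.

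Counting operations, each round performs exactly $n$ row-increments, $n$ column-increments, and $1$ max-query, for a total of $O(n\cdot n_3)$ updates and $n_3$ queries over the whole sequence, matching the statement; preprocessing is just the initial $n\times n$ matrix, which $\cA$ handles in its own time. The main obstacle is exactly the one the balancing step addresses: since Erickson's updates are monotone, a naïve scheme would see the max drift arbitrarily with the history, drowning out the $\{2,3\}$-gap that encodes $(u^t)^\top M v^t$. By insisting on the ``one increment per row and per column per round'' discipline, past increments contribute a uniform additive shift $2(t-1)$ to every entry, so the current round's bit stands out cleanly. Everything else (integrality, correctness of the case analysis, and the final update/query counts) is routine.
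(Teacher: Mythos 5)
Your proposal is correct and is essentially identical to the paper's proof: both treat $M$ as the initial integer matrix, increment exactly the rows with $u^t_i=1$ and columns with $v^t_j=1$ before the single max-query (detecting $(u^t)^\top Mv^t=1$ via the value $2t+1$), and then increment the remaining rows and columns to keep the additive shift uniform before the next round. Your write-up just makes the invariant and the case analysis more explicit than the paper does.
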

\begin{proof}
	Given a Boolean matrix $M$, $\cA$ runs on the same matrix but treats it as an integer matrix.	
	Once $u^{t}$ and $v^{t}$ arrive, we increment the row $i$ iff $u_{i}^{t}=1$
	and increment the column~$j$ iff $v_{j}^{t}=1$. Before $u^{t+1}$
	and $v^{t+1}$ arrive, we increment the remaining rows $i$ where $u_{i}^{t}=0$
	and remaining column where $v_{j}^{t}=0$.
	Therefore, we have that $(u^{t})^{\top}Mv^{t}=1$ iff the maximum value in the matrix is $2t+1$.
\end{proof}

\begin{corr}
	\label{corr:erickson}
	Unless \Cref{oMv hard} fails, there is no algorithm
	for Erickson's problem on a matrix of size $n\times n$
	with preprocessing time $p(n)=poly(n)$, amortized
	update time $u(n)=\oo(n)$, and query time $q(n)=\oo(n^{2})$
	that has an error probability of at most $1/3$.
\end{corr}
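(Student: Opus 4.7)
The plan is to derive the corollary by combining Lemma~\ref{lem:erikson reduc} with Theorem~\ref{ouMv hard} instantiated at $\gamma=1$, following the same template used to prove Corollary~\ref{corr:langerman}. Suppose for contradiction that an algorithm $\cA$ for Erickson's problem exists with preprocessing $p(n)=\poly(n)$, amortized update time $u(n)=\oo(n)$, and query time $q(n)=\oo(n^2)$ with error probability at most $1/3$. I would construct an algorithm $\cB$ for $1\text{-}\ouMv$ with parameters $n_1=n_2=n$ and a third parameter $n_3$ chosen as a sufficiently large polynomial in $n$ (to be fixed later).

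The core step is to apply the reduction of Lemma~\ref{lem:erikson reduc}: $\cB$ feeds its input matrix to $\cA$'s preprocessing (interpreting the Boolean entries as integers), then for each arriving vector pair $(u^t,v^t)$ it performs $\Theta(n)$ row/column increments and one maximum-value query, using $\cA$'s output to decide whether $(u^t)^{\top}M v^t=1$ by comparing the returned maximum against $2t+1$. Over all $n_3$ rounds this amounts to $\Theta(n\cdot n_3)$ updates and $n_3$ queries, and all bookkeeping outside of $\cA$ costs $O(1)$ per operation so does not affect the asymptotics. Because the reduction is deterministic, the error probability of $\cB$ is inherited from $\cA$ and remains at most $1/3$.

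The calculation is then routine. Preprocessing costs $\poly(n)$. The $\Theta(n\cdot n_3)$ updates, charged at the amortized rate $u(n)=\oo(n)$, cost $O(n\cdot n_3\cdot u(n))=\oo(n^2\cdot n_3)$ in total. The $n_3$ queries cost $n_3\cdot q(n)=\oo(n^2\cdot n_3)$. By choosing $n_3=n^c$ for a large enough constant $c$, the $\poly(n)$ preprocessing term is itself $\oo(n^2 n_3)$. Summing, $\cB$ uses polynomial preprocessing and total computation time $\oo(n_1 n_2 n_3)$, contradicting Theorem~\ref{ouMv hard} at $\gamma=1$.

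The only subtle point — and not really an obstacle — is making sure that the amortized bound on $\cA$ may legitimately be invoked. Since our hardness results are stated to hold even when the number of updates is an arbitrarily large polynomial (cf.\ the remark on amortization in Section~\ref{sec:discussion}), it suffices to pick $n_3$ polynomial in $n$ so that $\Theta(n\cdot n_3)$ dominates both the preprocessing cost and any initialization charge that could otherwise be amortized away. With that choice, the rest is a straightforward bookkeeping argument, completely analogous to the proofs of Corollary~\ref{corr:langerman} and Corollary~\ref{corr:pagh}.
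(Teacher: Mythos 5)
Your proposal is correct and follows essentially the same route as the paper: the paper's proof is a one-line application of Lemma~\ref{lem:erikson reduc} together with Theorem~\ref{ouMv hard}, concluding that the resulting $1$-$\ouMv$ algorithm would run in $\oo(n_1 n_2 n_3)$ time with polynomial preprocessing, a contradiction. Your additional care in choosing $n_3$ polynomially large to legitimize the amortized bound is a reasonable (and slightly more explicit) elaboration of what the paper leaves implicit.
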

\begin{proof}
	Otherwise, we can solve 1-$\ouMv$ with parameters $n_1=n$, $n_2=n$, and $n_3$ using polynomial preprocessing time and $\oo(n_1 n_2 n_3)$ computation time which contradicts \Cref{oMv hard} by \Cref{ouMv hard}.
\end{proof}

\subsection{$(2-\epsilon)$ Approximate Diameter on Weighted Graphs}\label{sec:lb_diam}

We show that, for any $ \epsilon > 0 $, it is $\oMv$-hard to maintain a $ (2-\epsilon) $-approximation of the diameter in a weighted graph under both insertions and deletions with $\oo(\sqrt{n})$ update time and $\oo(n)$ query time, even if the edge weights are only $ 0 $ and $ 1 $.
This reduction is inspired by a lower bound in distributed computation~\cite{FrischknechtHW12}. 
It is different from previous reductions in that we can show the hardness for this problem only in the fully dynamic setting and not in the partially dynamic setting.

\begin{lemma}
	\label{(2-eps) diam reduc}
	For any  $\gamma>0$,
	Given a fully dynamic algorithm $\cA$ for
	$(2-\epsilon)$-approximate diameter on a $\{0,1\}$-weighted undirected graph, one can solve $\guMv$ with parameters $n_1$ and $n_2$
	by running the preprocessing step of $\cA$ on a graph with $O(n_{1}\sqrt{n_{2}})$	vertices,
	and then making $n_{2}+O(n_{1}\sqrt{n_{2}})$ updates and $n_{1}$ queries to $\cA$.
\end{lemma}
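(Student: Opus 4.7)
The plan is to reduce a single $\uMv$ query to a sequence of diameter queries in a $\{0,1\}$-weighted graph via a ``per-row'' loop, exploiting the ability of weight-$0$ edges to insert and remove near-zero-length shortcuts on demand. I partition the columns $[n_{2}]$ into $k:=\lceil\sqrt{n_{2}}\rceil$ blocks $B_{1},\dots,B_{k}$ of size~$k$ each and build a graph $G_{M}$ whose vertex set consists of column vertices $c_{j}$ for $j\in[n_{2}]$, row-block vertices $x_{i,b}$ for $(i,b)\in[n_{1}]\times[k]$, two distinguished poles $s$ and $t$, and a small constant-size scaffolding (a weight-$1$ backbone from $s$ to $t$ together with weight-$1$ ``anchor'' edges joining every $x_{i,b}$ and every $c_{j}$ to the backbone). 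The matrix itself is encoded by a weight-$1$ edge $(x_{i,b},c_{j})$ whenever $j\in B_{b}$ and $M_{ij}=1$. Counting vertices gives $O(n_{1}\sqrt{n_{2}})$ in the relevant regime (using a symmetric block decomposition of the rows when $n_{1}<\sqrt{n_{2}}$).

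After preprocessing, I will process the vector $v$ by issuing $n_{2}$ updates: for each $j\in[n_{2}]$ with $v_{j}=1$ insert the weight-$0$ edge $(c_{j},t)$, and otherwise do nothing. After this step the sink $t$ is weight-$0$-adjacent to exactly the ``active'' columns. Next I iterate over rows $i=1,\dots,n_{1}$ in a three-step cycle: if $u_{i}=1$, insert the $k$ weight-$0$ edges $(s,x_{i,b})$ for all $b\in[k]$; query the diameter of the current graph; then delete those $k$ inserted edges. If $u_{i}=0$ I skip the cycle. This uses in total $n_{2}+O(n_{1}k)=n_{2}+O(n_{1}\sqrt{n_{2}})$ updates and at most $n_{1}$ queries, as required. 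Finally, I output $u^{\top}Mv=1$ iff some query returned a ``short'' diameter; otherwise I output $0$.

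Correctness follows from the key structural property: during the iteration for a row $i$ with $u_{i}=1$, the path $s\to x_{i,b}\to c_{j}\to t$ has weight $0+1+0=1$ exactly when there exists $j\in B_{b}$ with $M_{ij}=1$ and $v_{j}=1$, which is precisely the statement $(Mv)_{i}=1$. In every other case no such length-$1$ path exists, so every $s$-$t$ path must either traverse the backbone or pass through a column with $v_{j}=0$, each of which incurs a second weight-$1$ edge; by calibrating the length of the backbone and the anchoring edges one forces the diameter to be at most some value $D$ in the ``short'' case and at least $2D$ in the ``long'' case. A $(2-\varepsilon)$-approximate diameter query then distinguishes the two cases, and taking the logical OR over the $n_{1}$ per-row answers recovers $u^{\top}Mv$.

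The main obstacle is engineering the scaffolding so that the diameter of the graph is dominated by $d(s,t)$ up to a factor strictly larger than $2-\varepsilon$. Concretely, I must (i) anchor every $x_{i,b}$ and every $c_{j}$ to the backbone densely enough that no pair of non-pole vertices is further than $D$ apart in the ``short'' case, while (ii) avoiding any path of weight less than $2D$ between $s$ and $t$ that bypasses the matrix, which would collapse the factor-$2$ gap. This calibration is delicate but is essentially the one used in the Frischknecht--Holzer--Wattenhofer lower bound for distributed diameter computation, carried over to the fully dynamic $\{0,1\}$-weighted setting; the rest of the argument is straightforward bookkeeping of updates and queries.
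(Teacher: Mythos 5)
There is a genuine gap: the step you defer as ``calibration'' is in fact the entire content of the proof, and your specific scaffolding cannot deliver it. Since the diameter is a maximum over \emph{all} pairs, your ``short'' case needs every pair of vertices within distance $D$, not just $s$ and $t$. With weights in $\{0,1\}$ and your intended gap $d(s,t)=1$ versus $d(s,t)\ge 2$ (i.e.\ $D=1$), this is already impossible in your graph: two column vertices $c_j,c_{j'}$ with $v_j=v_{j'}=0$ have no incident weight-$0$ edges at all (their only edges are weight-$1$ matrix edges and weight-$1$ anchors), so their distance is at least $2$ in the ``short'' case, and the diameter fails to drop below the ``long'' value. Pushing $D$ up does not help either: your backbone is a constant-length path of weight-$1$ edges between $s$ and $t$ and every vertex is anchored to it by a weight-$1$ edge, so $d(s,t)$ is capped by a small constant in the ``long'' case as well; worse, the matrix edges of rows $i'$ with $u_{i'}=0$ are never deactivated, so a path $s \to \text{backbone} \to x_{i',b'} \to c_j \to t$ with $M_{i'j}=1$, $v_j=1$ is short regardless of whether $(Mv)_i=1$. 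So neither requirement (i) nor (ii) of your own plan is met, and no choice of backbone length fixes both simultaneously.

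The construction you invoke from Frischknecht--Holzer--Wattenhofer (and which the paper actually uses) is structurally different from yours, and the difference is exactly what makes it work: the row $M_i$ and the query vector $v$ are each encoded as a ``vector graph'' on two cliques of size $\sqrt{n_2}$, with an edge $(b_x,c_y)$ present iff the corresponding entry is \emph{zero} (complement encoding); during stage $i$ the gadget $H^{M_i}$ is superimposed on $H^{v}$ by weight-$0$ matching edges, all inactive row gadgets are collapsed into a single hub via a weight-$0$ star, and a vertex $a$ is joined to everything by weight-$1$ edges. Then the diameter is exactly $1$ when $M_i^{\top}v=0$ (for every $(x,y)$ at least one of the two gadgets supplies the edge) and exactly $2$ when $M_i^{\top}v=1$ (some pair is missing the edge in both), which a $(2-\epsilon)$-approximation distinguishes. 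Note the logic is inverted relative to yours: the \emph{small}-diameter case is $M_i^{\top}v=0$, and no ``$s$--$t$ path through the matrix'' semantics is used, precisely because such path-based gadgets cannot keep all other pairs close while keeping one designated pair far under $\{0,1\}$ weights. As written, your reduction does not establish the factor-$2$ gap, so the lemma is not proved. (A minor additional point: your vertex count $n_2+n_1\sqrt{n_2}$ is $O(n_1\sqrt{n_2})$ only when $n_2=O(n_1^2)$; the paper's gadget has no separate set of $n_2$ column vertices and meets the bound for every $\gamma$.)
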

\begin{proof}
	First, let us define an undirected graph $H^{v}$, called \emph{vector graph}, from a
	vector $v$ of size $n_{2}$. A vector graph $H^{v}=(B^{v}\cup C^{v})$
	has two halves, called \emph{upper} and \emph{lower halves} denoted by $B^v$ and $C^v$ respectively. 
	$B^{v}$ and $C^{v}$
	are both cliques of size $\sqrt{n_{2}}$. Let $\{b_{x}^{v}\}_{1 \leq x \leq \sqrt{n_{2}}}$
	and $\{c_{x}^{v}\}_{1 \leq x \leq \sqrt{n_{2}}}$ be the vertices of $B^{v}$ and $C^{v}$
	respectively. We include an edge $(b_{x}^{v},c_{y}^{v})$ iff $v_{(x-1)\sqrt{n_{2}}+y}=0$.
	The weight of all edges in $H^{v}$ is $1$.
	
	Given a matrix $M$ of size $n_{1}\times n_{2}$, let $M_{i}$
	be the $i$-th row of $M$. We construct a vector graph $H^{M_{i}}$
	for each $1 \leq i \leq n_1 $, and another vector graph $H^{v}$ where $v$ is the zero vector. 
	There are two special vertices $a$ and $z$. Connect $a$ to all vertices in $H^{v}$ with weight
	one. Connect $z$ to $a$ and all vertices in $H^{M_{i}}$ with weight
	zero, for every $1 \leq i \leq n_1 $.
	
	Once $(u,v)$ arrives, we update $H^{v}$ to be the vector graph of $v$
	in $n_{2}$ updates. Then we work in stages $i$, for each $i$ where
	$u_{i}=1$. Before going to the next stage, we undo all the updates. 
	In stage $i$, 1) disconnect $z$ from each vertex in $H^{M_{i}}$, 2) connect
	$a$ to each vertex in $H^{M_{i}}$ with weight one, and 3) add $0$-weight
	\emph{matching edges} $(b_{x}^{M_{i}},b_{x}^{v})$ and $(c_{y}^{M_{i}},c_{y}^{v})$
	for all $x,y\le\sqrt{n_{2}}$. 
	All three steps need $O(\sqrt{n_{2}})$
	updates. Let $G(M_i,v)$ denote the resulting graph (see \Cref{fig:diam} for example). We query the diameter of $G(M_i,v)$ and will use the result to solve $\guMv$.
	After finishing all stages, there are $n_{2}+O(n_{1}\sqrt{n_{2}})$ updates and $n_1$ queries in total.
	
	\begin{figure}
		\centering
		\includegraphics[width=0.7\textwidth]{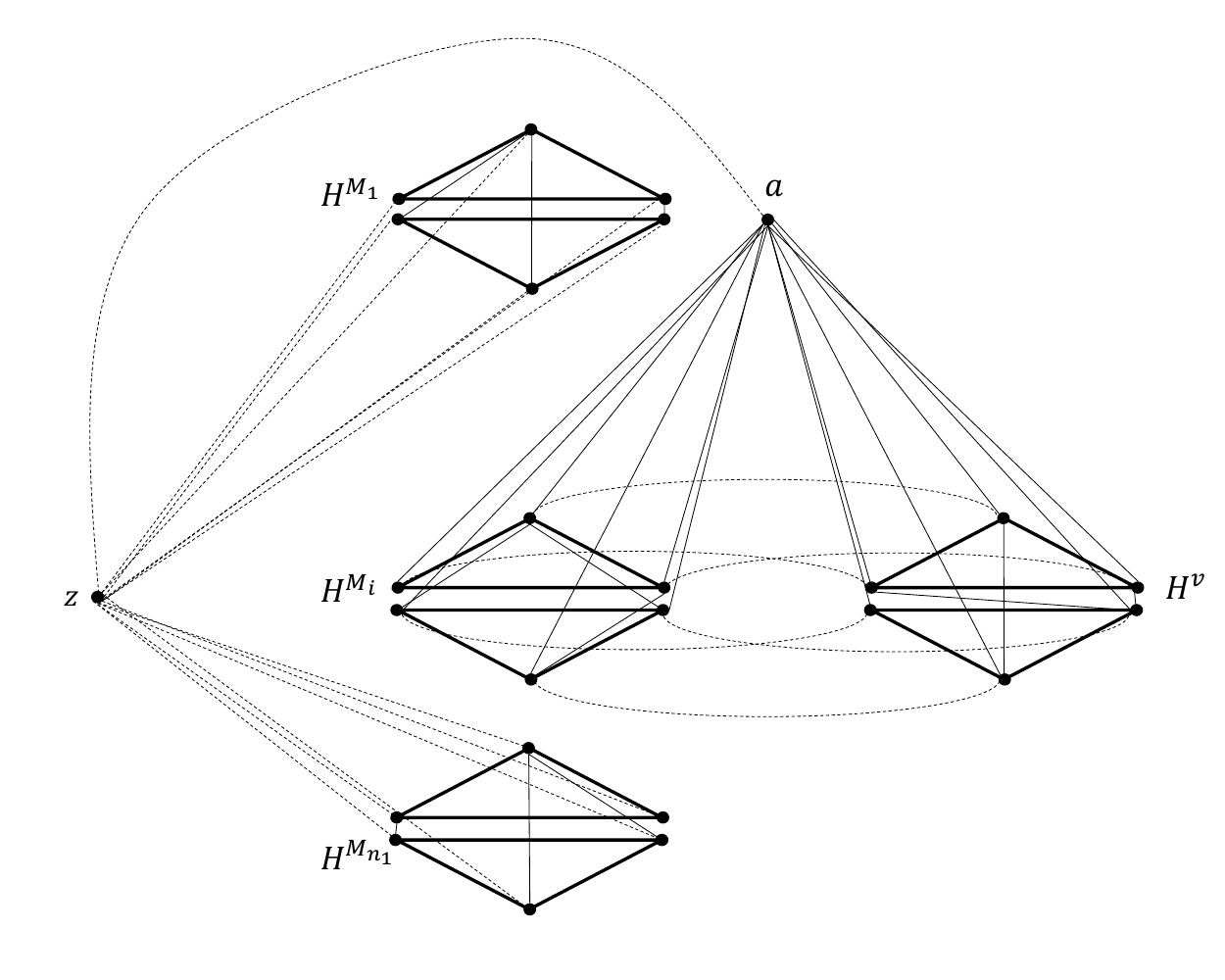}
		\caption{An example of $G(M_i,v)$ from the proof of \Cref{(2-eps) diam reduc}. Dashed lines are edges of weight zero. Other lines are edges of weight one.}				
		\label{fig:diam} 
	\end{figure}
	
	If there is some stage $i$ where the diameter of the graph  $G(M_i,v)$ is $2$, then report $u^{\top}Mv=1$.
	Otherwise report $ 0 $. The following claim justifies this answer.
	\begin{claim}
		The diameter of  $G(M_i,v)$ is $1$ if $M_{i}^{\top}v=0$. Otherwise,
		the diameter is $2$. \end{claim}
	\begin{proof}
		First, every edge incident to $z$ has weight $0$. So we can treat
		all the adjacent vertices of $z$, which are exactly $a$ and those
		in $H^{M_{i'}}$ where $i'\neq i$, as a single vertex. Therefore,
		we just have to analyze the distance among vertices in $H^{M_{i}}$,
		$H^{v}$ and the vertex $a$.
		
		Note that $a$ is connected to all vertices in $H^{M_{i}}$
		and $H^{v}$ by $1$-weight edges. The distance between vertices among the upper halves
		$B^{M_{i}}$ and $B^{v}$ is at most $1$, because $B^{M_{i}}$ and
		$B^{v}$ are cliques and there are matching edges of weight zero. Similarly,
		for the lower halves $C^{M_{i}}$ and $C^{v}$. Now, we are left with
		analyzing the distance between a vertex in the upper halves and another
		vertex in the lower ones. There are two cases. 
		
		If $M_{i}^{\top}v=0$, then for each $x,y\le\sqrt{n_{2}}$, there
		is either an edge $(b_{x}^{M_{i}},c_{y}^{M_{i}})$ or an edge $(b_{x}^{v},c_{y}^{v})$.
		So given any $b_{x}^{M_{i}}$ and $c_{y}^{M_{i}}$, there is either
		a path $(b_{x}^{M_{i}},b_{x}^{v},c_{y}^{v},c_{y}^{M_{i}})$ or a path
		$(b_{x}^{M_{i}},c_{y}^{M_{i}})$ both of weight 1. Since $d(b_{x}^{M_{i}},c_{y}^{M_{i}})=1$,
		the distance among $b_{x}^{M_{i}},b_{x}^{v},c_{y}^{M_{i}},c_{y}^{v}$
		is at most 1. Since this is true for any $x,y$, the diameter of the
		graph is $1$.
		
		If $M_{i}^{\top}v=1$, then there are some $x,y$ such that neither
		the edge $(b_{x}^{M_{i}},c_{y}^{M_{i}})$ nor the edge $(b_{x}^{v},c_{y}^{v})$ exists. To show
		that $d(b_{x}^{M_{i}},c_{y}^{M_{i}})\ge2$, it is enough to show that $d(b_{x}^{M_{i}},c_{y}^{M_{i}})>1$
		because every non-zero edge weight is $1$. The set of vertices with
		distance $1$ from $b_{x}^{M_{i}}$ includes exactly the neighbors
		of $b_{x}^{M_{i}}$ in $H^{M_{i}}$ and their ``matching'' neighbors in $ H^v $, the neighbors of $b_{x}^{v}$
		in $H^{v}$ and their ``matching'' neighbors in $ H^{M_i} $, and the vertex $a$, which combines $z$ and all vertices
		in $H^{M_{i'}}$ where $i'\neq i$. But this set does not include
		$c_{y}^{M_{i}}$. Therefore $d(b_{x}^{M_{i}},c_{y}^{M_{i}})\ge2$
		and we are done.
	\end{proof}
\end{proof}
\begin{corr}
	\label{corr:approx diam}
	Assuming \Cref{oMv hard}, there is no fully dynamic algorithm
	for $(2-\epsilon)$-approximate diameter on $\{0,1\}$-weighted graphs with $n$ vertices
	with preprocessing time $p(n)=poly(n)$, amortized update time $u(n)=\oo(\sqrt{n})$,
	and query time $q(n)=\oo(n)$ 
	that has an error probability of at most $1/3$.
\end{corr}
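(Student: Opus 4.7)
The plan is to derive a contradiction with the $\ouMv$ hardness from \Cref{ouMv hard} by using the given hypothetical fully dynamic diameter algorithm $\cA$ repeatedly, combined with the reduction from \Cref{(2-eps) diam reduc}. A single application of that lemma yields a $\uMv$ algorithm, but when $\cA$ has only an \emph{amortized} update bound, the $\Theta(n^{3/2})$ initial edges in the constructed graph would dominate the cost of one $\uMv$ solve and prevent a contradiction through \Cref{uMv hard} alone. The point of routing through $\ouMv$ is precisely that many query rounds amortize away the initial-graph cost; this is where the assumption that $\cA$ is fully dynamic (so that the batch of updates made for each vector pair can be undone before the next pair arrives) is essential.

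The concrete plan is as follows. Suppose for contradiction that such an algorithm $\cA$ exists, with preprocessing $\poly(n)$, amortized update time $u(n)=\oo(\sqrt{n})$, and query time $q(n)=\oo(n)$. Apply \Cref{(2-eps) diam reduc} with $\gamma=1/2$, so $n_1=\sqrt{n_2}$; the resulting graph has $n=\Theta(n_1\sqrt{n_2})=\Theta(n_2)$ vertices and $m_0=O(n_1 n_2)=O(n_2^{3/2})$ edges. Extend the $\uMv$ reduction to an $\ouMv$ reduction with third parameter $n_3=n_1=\sqrt{n_2}$ by, for each pair $(u^t,v^t)$, performing the $n_2+O(n_1\sqrt{n_2})=O(n_2)$ updates and $n_1=\sqrt{n_2}$ queries prescribed by \Cref{(2-eps) diam reduc}, then undoing those $O(n_2)$ updates (another $O(n_2)$ operations) before the next pair arrives.

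It remains to check the time budget. The total number of updates issued to $\cA$ is $T=O(n_3 n_2)=O(n_2^{3/2})$, which is of the same order as $m_0$, so the amortized update cost is $O((T+m_0)\cdot u(n))=O(n_2^{3/2})\cdot\oo(\sqrt{n_2})=\oo(n_2^2)$. The total query cost is $n_3 n_1\cdot q(n)=\sqrt{n_2}\cdot\sqrt{n_2}\cdot\oo(n_2)=\oo(n_2^2)$. Since $n_1 n_2 n_3=\sqrt{n_2}\cdot n_2\cdot\sqrt{n_2}=n_2^2$, the whole $\ouMv$ solver runs in $\oo(n_1 n_2 n_3)$ time with polynomial preprocessing and error probability $\le 1/3$, contradicting \Cref{ouMv hard} with $\gamma=1/2$.

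The main obstacle is exactly the interaction between amortization and the density of the constructed graph: one must choose $\gamma$ so that the vertex count $n\asymp n_1\sqrt{n_2}$ makes $\sqrt{n}$ per-update cost balance against both the per-round update count and the query count, and one must then choose $n_3$ large enough that the initial $m_0$ edges are dominated by updates issued during the reduction but small enough that the overall budget $\oo(n_1 n_2 n_3)$ is not exceeded. The choice $\gamma=1/2$ and $n_3=n_1$ simultaneously satisfies both balances.
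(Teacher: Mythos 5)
Your proposal is correct and follows essentially the same route as the paper: apply \Cref{(2-eps) diam reduc}, use the fully dynamic property to undo each round's updates and thereby solve $\ouMv$ rather than a single $\uMv$ instance, and choose the number of rounds $n_3$ large enough that the total number of updates dominates the $\Theta(n^{3/2})$ initial edges so the amortized bound applies, yielding a contradiction with \Cref{ouMv hard}. The only difference is the parameter choice — you take $n_3=\sqrt{n_2}$ (budget $n_1n_2n_3=n_2^2$) while the paper takes $n_3=n_2$ (budget $n^{5/2}$) — and both choices satisfy the same balancing constraints, so the arguments are interchangeable.
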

\begin{proof}
	Suppose such a fully dynamic algorithm $\cA$ exists. By \Cref{(2-eps) diam reduc} and by ``undoing'' the operations as in the proof of \Cref{cor:high query},
	we can solve $2$-$\ouMv$ with parameters $n_1=\sqrt{n}$, $n_2=n$, and $n_3=n$ by running $\cA$ on a graph with $\Theta(n_{1}\sqrt{n_{2}})=\Theta(n)$ vertices, and then making $O(n_{2}+n_{1}\sqrt{n_{2}})\times n_3 = O(n^2)$ updates and $n_1 n_3 = O(n\sqrt{n})$ queries in total. 
	The computation time is $O(n^2 u(n)+ n\sqrt{n} q(n)) = \oo(n^2\sqrt{n})$, contradicting \Cref{oMv hard} by \Cref{ouMv hard}.
	Note that we choose $n_3 = n\sqrt{n}$ to make sure that the number of updates is at least the number of edges in the graph, so that we can use the amortized time bound.
\end{proof}

\subsection{Densest Subgraph Problem}\label{sec:lb_densest}

In this section, we show a non-trivial reduction from 1-$\uMv$ to the densest subgraph problem and hence show the hardness of this problem.
\begin{theorem}
	\label{thm:densest reduc}
	Given a partially dynamic $\cA$ for maintaining the density of the densest subgraph,
	one can solve 1-$\uMv$ with parameters $n_1=n$ and $n_2=n$ by running the preprocessing step of $\cA$ on a graph with $\Theta(n^3)$ vertices, 
	and then making $\Theta(n)$ updates and 1 query.
\end{theorem}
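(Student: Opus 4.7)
The plan is to construct, from the matrix $M$, a graph $G$ on $\Theta(n^3)$ vertices whose densest-subgraph density crosses a predetermined threshold $d^*$ after $\Theta(n)$ edge updates iff $u^\top M v \geq 1$. Intuitively, one encodes each $(i,j)$ with $M_{ij}=1$ as a ``gadget'' on $\Theta(n)$ vertices which becomes dense (density $\geq d^*$) only when ``activated'' by two edge insertions tied to $u_i = 1$ and $v_j = 1$.

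Concretely, I would associate to each pair $(i,j)$ with $M_{ij}=1$ a gadget $H_{ij}$ with $\Theta(n)$ private vertices together with a constant number of ``row ports'' $L_i^1, L_i^2$ shared with every other gadget of row $i$ and ``column ports'' $R_j^1, R_j^2$ shared with every gadget of column $j$. Each $H_{ij}$ is preprocessed as an almost-complete clique $K_{|H_{ij}|}$ with two designated ``activation edges'' removed (one among its row ports, one among its column ports). The total vertex count is $\Theta(n^3)$: up to $n^2$ gadgets with $\Theta(n)$ private vertices each, plus an $O(n)$ overhead for shared ports. For the updates, for each $i$ with $u_i=1$ I would insert the row-activation edge $L_i^1 L_i^2$, and for each $j$ with $v_j=1$ the column-activation edge $R_j^1 R_j^2$; since ports are shared, each single insertion simultaneously activates the corresponding ``side'' of every gadget in that row (resp.\ column), giving $\Theta(n)$ updates in total.

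A gadget $H_{ij}$ becomes the complete clique $K_{|H_{ij}|}$ of density $d^* = (|H_{ij}|-1)/2$ iff both its row and column activation edges have been inserted, iff $u_i=v_j=1$. Consequently the densest subgraph reaches density $\geq d^*$ iff some $(i,j)$ has $M_{ij}=u_i=v_j=1$, iff $u^\top M v \geq 1$. A single density query then resolves the answer.

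The main obstacle is the converse: to verify that no other subgraph of $G$ reaches density $d^*$ in the non-match case. In particular, a union of $k$ unactivated gadgets that share a row port set could pool the edges among their private vertices while amortizing the constant-size overhead of shared ports; the natural fear is that as $k \to \infty$ such a union could match or exceed the density of a single activated gadget. The heart of the proof is a density calculation showing that, for \emph{every} combination of gadgets sharing row ports, column ports, or both, the combined density stays strictly below $d^*$ by a gap that is detectable by an exact density oracle. The choice of gadget size $\Theta(n)$ (rather than $O(1)$) is what makes this possible: it ensures that the ``per-gadget boost'' coming from the two inserted activation edges on a single activated gadget is large enough to beat the asymptotic density of any unactivated union, so a single density query distinguishes the two cases.
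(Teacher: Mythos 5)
There is a genuine gap, and it is exactly at the point you flag as ``the heart of the proof'': for your clique-based gadgets the required density calculation is false, already in the initial graph before any update. Take two gadgets $H_{ij}$ and $H_{ij'}$ in the same row, each an $N$-vertex near-clique ($N=\Theta(n)$) missing its two activation edges, glued along the two shared row ports. Their union has $2N-2$ vertices and $2\binom{N}{2}-4$ edges, hence density $\frac{N(N-1)-4}{2(N-1)}=\frac{N}{2}-\frac{2}{N-1}$, which already exceeds your threshold $d^*=\frac{N-1}{2}$ once $N\geq 6$ --- with \emph{zero} activation edges inserted. The underlying phenomenon is that gluing two cliques $K_N$ along $s\geq 1$ shared vertices gives density $\frac{2\binom{N}{2}-\binom{s}{2}}{2N-s}>\frac{N-1}{2}$; sharing vertices between dense pieces always raises the density above that of a single piece, and making the gadgets larger (your $\Theta(n)$ choice) makes this worse, not better, since the $O(1/N)$ density deficit from a constant number of missing activation edges can never offset the $\Omega(1)$ gain from amortizing two port vertices over two near-cliques. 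So the densest subgraph sits above $d^*$ regardless of $u$ and $v$, and a single density query cannot distinguish the witness from the no-witness case.

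The paper avoids this by going to the opposite extreme: all gadgets are \emph{sparse}, so that vertex sharing buys almost nothing. Each set bit $(i,j)$ is a path on $k=6n$ vertices (so $k-1$ edges, density just below $1$), each row/column is a $3$-vertex graph whose activation (when $u_i=1$ or $v_j=1$) turns it into a triangle, and the detection threshold is $\frac{k+7}{k+6}$, only barely above $1$: a witness gives the subgraph ``row triangle $+$ bit path $+$ column triangle'' with $k+6$ vertices and $k+7$ edges. The converse is then a counting argument rather than a clique computation: after normalizing the candidate subgraph (whole bit paths with both attachment edges, whole activated triangles, only special vertices of unactivated row/column gadgets), with $x$ bit paths, $y$ activated triangles and $z$ unactivated specials, the density is $\frac{3y+(k+1)x}{3y+z+kx}$, and in the no-witness case every included bit path must attach to some unactivated row or column, giving $x\leq nz$ and hence $6x\leq kz<(k+7)z+3y$, i.e.\ density strictly below the threshold. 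If you want to salvage your port-sharing architecture, you would have to redesign the gadgets so their density is close to $1$ and the threshold close to $1$ as well; with $d^*=\Theta(n)$ and near-clique gadgets the reduction cannot work.
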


\paragraph{Problem definition}

We are given an undirected input graph $G=(V,E)$ with vertices $ V $
and edges $ E $. For every subset of vertices $S\subseteq V$, let $G(S)=(S,E(S))$
denote the subgraph of~$G$ induced by the vertices in $S$, i.e., we
have $E(S)=\{(u,v)\in E \mid u,v\in S\}$. The \emph{density} of any subset of vertices $S\subseteq V$
is defined as $\rho(S)=|E(S)|/|S|$.
For the reduction described in the following let $M$ be a Boolean matrix of size
$n\times n$ and set $ k = 6n $.

\paragraph{Preprocessing.} 
We construct the graph $ G $ as follows:
\begin{itemize}
\item \emph{Bit graphs for $M$.} For each bit $m_{i,j}$ of $M$, construct
a graph $B_{i,j}$ consisting of $k$ vertices.
There are two special vertices in $B_{i,j}$,
called \emph{special vertex~1} and \emph{special vertex~2}.
If the bit $ m_{i,j} $ is set, connect the nodes in $B_{i,j}$ by a path of $k-1$ edges in $B_{i,j}$ from special vertex 1 to special vertex 2.
If the bit $ m_{i,j} $ is not set, insert no edges into $B_{i,j}$. 
\item \emph{Row graph for M.} For each row $i$ of $M$, construct a graph
$R_{i}$ consisting of $3$ vertices. One of these vertices is special. Add
an edge from the special vertex of $R_{i}$ to special vertex 1 of $B_{i,j}$
for all $1\le j\le n$.
\item \emph{Column graph for $M$.} For each column $j$ of $M$, construct
a graph $C_{j}$ consisting of $3$ vertices. One of these vertices is special.
Add an edge from the special vertex of $C_{j}$ to special vertex 2 of
$B_{i,j}$ for all $1\le i\le n$.
\end{itemize}
Observe that $ G $ has $ O (n^3) $ vertices.

\paragraph{Revealing $u$ and $v$}
We execute the following $O(n)$ edge operations and one query.
\begin{itemize}
\item For each $i$ where $u_{i}=1$, turn the row graph $R_{i}$ into a
triangle by inserting $ O(n) $ edges.
\item For each $j$ where $v_{j}=1$, turn the column graph $C_{i}$ into
a triangle by inserting $ O(n) $ edges.
\item Then ask for the size of the densest subgraph.
\end{itemize}

This describes the reduction of \Cref{thm:densest reduc}. Note that we only need a partially dynamic algorithm.
Before proving the correctness of this reduction below, we observe the following easy lemma.

\begin{lemma}
\label{lem:ratio}
For all numbers $a$, $b$, $c$, $d$, and $r$ we have:
\begin{enumerate}
\item If $ \frac{a}{b} \geq r $ and $ \frac{c}{d} \geq r $, then $ \frac{a + c}{b + d} \geq r $.
\item If $ \frac{a}{b} \geq r $ and $ \frac{c}{d} \leq r $, then $ \frac{a - c}{b - d} \geq r $.
\end{enumerate}
\end{lemma}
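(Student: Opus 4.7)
The plan is to reduce both parts to the mediant (superparticular) inequality by clearing denominators, with the implicit convention (natural in the intended application where $b$ and $d$ will be the sizes of vertex sets of subgraphs, with $b > d$ in part 2) that $b, d > 0$ for part~1 and $b > d > 0$ for part~2.

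For part~1, I would multiply the first hypothesis through by $b$ and the second by $d$ to obtain $a \geq rb$ and $c \geq rd$. Adding these two inequalities yields $a + c \geq r(b + d)$, and dividing by $b + d > 0$ gives the claim.

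For part~2, the same idea works in reverse: $a/b \geq r$ gives $a \geq rb$, while $c/d \leq r$ gives $c \leq rd$, i.e., $-c \geq -rd$. Adding produces $a - c \geq r(b - d)$, and dividing by $b - d > 0$ yields $(a-c)/(b-d) \geq r$.

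The only point worth care is the sign of the denominators; there is no real obstacle since in the application of this lemma to the densest subgraph reduction, both $b$ and $d$ count vertices of subgraphs and the second part is applied when one subgraph strictly contains another. A single sentence remarking on this positivity assumption suffices, after which the two parts are each one line of algebra.
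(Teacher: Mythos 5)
Your proof is correct and is exactly the intended argument: the paper states \Cref{lem:ratio} without proof (calling it an easy observation), and clearing denominators as you do is the standard one-line verification. Your explicit remark that one must assume $b,d>0$ (resp.\ $b>d>0$) is a genuine improvement in precision, since the lemma as literally stated ``for all numbers'' fails without it, and in the densest-subgraph application these positivity conditions indeed hold because $b$ and $d$ count vertices of nested subgraphs.
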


\begin{theorem}
\label{thm:density reduc correct}There exists a subset $ S \subseteq V $ with density $ \rho (S) \geq \frac{k+7}{k+6}$ if and only if $u^{\top}Mv=1$.
\end{theorem}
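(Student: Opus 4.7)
The forward direction is by construction: if $u^\top Mv = 1$, pick $(i,j)$ with $u_i = v_j = m_{i,j} = 1$ and take $S = R_i \cup B_{i,j} \cup C_j$. Then $|S| = 3+k+3 = k+6$ and $|E(S)| = 3 + (k-1) + 3 + 1 + 1 = k+7$ (the two triangles, the full path, and the two cross edges between $R_i, C_j$ and the path endpoints), giving $\rho(S) = (k+7)/(k+6)$.

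For the converse, set $r = (k+7)/(k+6)$ and assume $u^\top Mv = 0$; I will show that $r|S| - |E(S)| > 0$ for every non-empty $S$. Write $a_i = |R_i \cap S|$, $b_j = |C_j \cap S|$, $c_{i,j} = |B_{i,j} \cap S|$, let $s_{i,j} \in \{0,1,2\}$ denote the number of special vertices of $B_{i,j}$ inside $S$, and let $X_{i,j} \leq s_{i,j}$ be the number of cross edges in $E(S)$ incident to $B_{i,j}$. Since every edge of $G$ is either internal to one module or a cross edge,
\[
r|S| - |E(S)| = \sum_i f_R(a_i) + \sum_j f_C(b_j) + \sum_{i,j} \bigl(f_B(c_{i,j}) - X_{i,j}\bigr),
\]
where $f_R(a_i) = r a_i - \binom{a_i}{2}[u_i = 1]$, analogously for $f_C$, and $f_B(c_{i,j}) = r c_{i,j} - e_B(c_{i,j})$ counts vertex cost minus internal path edges. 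Using $\binom{a}{2} \leq a$ for $a \leq 3$, every $f_R$ and $f_C$ term is non-negative; moreover $f_R(a_i) \geq r$ whenever $u_i = 0$ and $a_i \geq 1$, and similarly for $f_C$.

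The technical heart is the case analysis for $f_B(c_{i,j}) - X_{i,j}$. A subset of a path on $c$ vertices spans $c - p$ edges, where $p$ is the number of components of the induced subgraph; since for $k \geq 3$ the two specials are the two non-adjacent endpoints of the path, any subset with $c < k$ containing $s$ specials must place them in distinct components, so $p \geq s$. This yields
\[
f_B(c_{i,j}) - X_{i,j} \;\geq\; r c_{i,j} - (c_{i,j} - s_{i,j}) - s_{i,j} \;=\; c_{i,j}/(k+6) \;>\; 0
\]
whenever $c_{i,j} < k$ or $m_{i,j} = 0$. The only negative contribution is the full-path configuration $m_{i,j} = 1$, $c_{i,j} = k$, $s_{i,j} = 2$ with both cross edges active, which equals exactly $-6/(k+6)$.

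The decisive step is absorbing these deficits: $u^\top Mv = 0$ together with $m_{i,j} = 1$ forces $u_i = 0$ or $v_j = 0$, so each full-path deficit can be charged to $R_i$ in the former case or $C_j$ in the latter (breaking ties arbitrarily). Since $R_i$ is adjacent to only $n$ bit-graphs, a single $R_i$ receives at most $n$ charges, for a total absorbed deficit of at most $6n/(k+6) = k/(k+6)$; the charging condition moreover forces $a_i \geq 1$ with $u_i = 0$, so $f_R(a_i) \geq r$. Because $k = 6n$, the residual per charged $R_i$ is $r - k/(k+6) = 7/(k+6) > 0$, and symmetrically for $C_j$. After this accounting every summand is non-negative and at least one is strictly positive whenever $S \neq \emptyset$, so $\rho(S) < r$. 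The main obstacle is precisely finding this charging scheme: the specific choice $k = 6n$ is what guarantees that the absorption of every full-path deficit succeeds with strictly positive slack.
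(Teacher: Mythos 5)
Your proof is correct, and it reaches the paper's conclusion by a noticeably different bookkeeping route. The paper first normalizes $S$ via exchange arguments (using \Cref{lem:ratio}) so that, without loss of generality, every gadget appears all-or-nothing (full bit paths with both outgoing edges, full triangles for set rows/columns, lone special vertices for unset ones), and then does a single global count, reducing $\rho(S) < \frac{k+7}{k+6}$ to $6x < (k+7)z + 3y$ and closing with the assignment $x \leq nz$ and $k = 6n$. You instead skip the normalization entirely: you decompose the excess $r|S| - |E(S)|$ additively over gadgets, show every local term is nonnegative except the full-path-with-both-cross-edges configuration (worth exactly $-6/(k+6)$), and absorb each such deficit by charging it to an adjacent unset row or column gadget, which exists precisely because $u^{\top}Mv = 0$ and can be hit at most $n$ times. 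This is the same combinatorial heart as the paper's $x \leq nz$ step with the same role for $k = 6n$, but your version buys a direct bound for arbitrary $S$ with no ``without loss of generality'' reductions, at the price of the careful induced-path analysis (the $p \geq s$ component count), which in the paper is hidden inside the justification of assumption (1); the paper's normalization, in turn, buys a cleaner closed-form density $\frac{3y+(k+1)x}{3y+z+kx}$ and a one-line final inequality. One cosmetic slip: your bound $f_B(c_{i,j}) - X_{i,j} \geq c_{i,j}/(k+6)$ is only nonnegative (not strictly positive) when $c_{i,j} = 0$, but since your final argument only needs all terms nonnegative plus one strictly positive term for nonempty $S$, this does not affect correctness.
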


\begin{proof}
Assume first that $u^{\top}Mv=1$. So there are indices $i$ and $j$ such that $u_{i}M_{i,j}v_{j}=1$.
Consider the subgraph consisting of the union
of $R_{i}$, $B_{i,j}$ and $C_{j}$. It consists of $ k + 6 $ vertices and
$ k + 7 $ edges, i.e., it has density $\frac{k+7}{k+6}$.

Now assume that $u^{\top}Mv=0$ and let $ S \subseteq V $.
To show that $ \rho (S) \leq \frac{k+7}{k+6} $ we will make the following assumptions:
\begin{enumerate}[label=(\arabic*)]
\item For every $ i $ and $ j $, either the full bit graph $ B_{i,j} $ together with two edges leaving the bit graph is contained in $ S $ or no node of $ B_{i,j} $ is contained in $ S $.
\item For every row $ i $ of a set bit (i.e., where $ u_i = 1 $), either the full row graph $ R_i $ is contained in $ G(S) $ or no node of $ R_i $ is contained in $ G(S) $
\item For every row $ i $ of an unset bit (i.e., where $ u_i = 0 $), either the special node of the row graph $ R_i $ is contained in $ S $ or no node of $ R_i $ is contained in $ G(S) $
\item For every column $ j $ of a set bit (i.e., where $ v_j = 1 $), either the full column graph $ R_j $ is contained in $ G(S) $ or no node of $ C_j $ is contained in $ G(S) $
\item For every column $ j $ of an unset bit (i.e., where $ v_j = 0 $), either the special node of the column graph $ C_j $ is contained in $ S $ or no node of $ C_j $ is contained in $ G(S) $
\end{enumerate}
These assumptions can be made without loss of generality as we argue in the following.

(1) Suppose $ S $ contains some subset $ U $ of nodes of $ B_{i,j} $ and either $ U $ does not contain all nodes of $ B_{i,j} $ or one of the special nodes of $ B_{i,j} $ is not contained in $ S $.
Then by removing $ U $ from $ S $ we remove some $ q \leq k $ nodes and at most $ q $ edges from $ G(S) $.
Thus, we are removing a piece of density at most $ 1 $.
If $ \rho (S) \geq \frac{k+7}{k+6} \geq 1 $, then removing $ U $ from $ S $ will not decrease $ \rho (S) $ by Part~2 of \Cref{lem:ratio} (using $ r = \frac{k+7}{k+6} $, $ \rho(S) = \frac{a}{b} $, $ c \leq q $, and $ d = q $).
Therefore we may assume without loss of generality that $ S $ does not contain $ U $.

(2) If only one of the nodes of $ R_i $ is contained in $ S $, then by adding the two other nodes we add $ 2 $ nodes and $ 3 $ edges to $ G(S) $.
As $ \frac{3}{2} > \frac{k+7}{k+6} $, doing so will not decrease the density of $ S $ to below $ \frac{k+7}{k+6} $ by Part~1 of \Cref{lem:ratio} (using $ r = \frac{k+7}{k+6} $, $ \rho(S) = \frac{a}{b} $, $ c = 3 $, and $ d = 2 $) and thus we may assume without loss of generality that all the nodes of $ R_i $ are contained in $ S $.
Similarly, if only two of the nodes of $ R_i $ are contained in $ S $, then by adding the third node we add $ 1 $ node and $ 2 $ edges to $ G(S) $.
Again, by \Cref{lem:ratio}, we may assume without loss of generality that all three nodes of $ R_i $ are contained in $ S $.

(3) There are no edges incident to the non-special nodes of $ R_i $.
By removing the non-special nodes of $ R_i $ from $ S $ we only increase $ \rho (S) $.
Thus, we may assume without loss of generality, that only the special node is contained in $ S $.

(4) and (5) follow the same arguments as (2) and (3).

Using these assumptions we conclude that $ G(S) $ has the following structure: it contains some full bit graphs (i.e., paths) of set bits, each with two outgoing edges, some full row or column graphs (i.e., triangles) of set bits, and some special nodes of row or column graphs of unset bits.
In the rest of this proof we will use the following notation:
$ x $ denotes the number of bit graphs contained in $ S $, $ y $ denotes the number of row or column graphs of set bits contained in $ S $, and $ z $ denotes the number of row or column graphs of unset bits contained in $ C $.
Thus, $ G(S) $ has the density
\begin{equation*}
\rho (S) = \frac{3 y + (k+1) x}{3 y + z + k x} \, .
\end{equation*}
The inequality $ \rho (S) < \frac{k+7}{k+6} $, which we want to prove, is now equivalent to
\begin{equation*}
6 x < (k+7) z + 3 y \, .
\end{equation*}
Consider some bit graph contained in $ G(S) $.
As argued above, this graph has one edge going to a row graph and one edge going to a column graph.
As $u^{\top}Mv=0$ at least one of those edges must go to a row or column graph of an unset bit.
In this way we assign at most $ n $ bit graphs to every unset row or column bit and it follows that $ x \leq n z $.
As we have defined $ k = 6n $ we obtain
\begin{equation*}
6 x \leq 6 n z = k z < (k+7) z + 3 y
\end{equation*}
as desired.
\end{proof}

This complete the proof of \Cref{thm:densest reduc}.

\begin{corollary}
\label{corr:densest}
Unless \Cref{oMv hard} fails, there is no partially dynamic algorithm
$\cA$ for maintaining the density of the densest subgraph on a graph with $n$ vertices
with polynomial preprocessing time, worst-case update
time $u(n)=\oo(n^{1/3})$, and query time $q(n)=\oo(n^{2/3})$
that has an error probability of at most $1/3$.
Moreover, this is true also for fully dynamic algorithms with amortized
update time.
\end{corollary}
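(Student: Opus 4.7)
The plan is to feed the hypothetical densest-subgraph algorithm into the reduction of \Cref{thm:densest reduc} and derive a contradiction with the hardness of $\uMv$ and $\ouMv$. Throughout, I will exploit that the reduction creates a graph with $N = \Theta(n^3)$ vertices, so $u(N) = \oo(N^{1/3}) = \oo(n)$ and $q(N) = \oo(N^{2/3}) = \oo(n^2)$, while only $\Theta(n)$ edge operations and a single density query are needed to read off one $\uMv$ answer.

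For the partially dynamic (worst-case) statement, suppose such an algorithm $\cA$ exists. Given an instance of $1$-$\uMv$ with $n_1 = n_2 = n$, I would run the preprocessing step of $\cA$ on the $\Theta(n^3)$-vertex graph produced by \Cref{thm:densest reduc} in polynomial time; then, upon seeing $(u,v)$, perform the $\Theta(n)$ prescribed edge insertions (cost $\Theta(n)\cdot \oo(n) = \oo(n^2)$) and answer the single density query (cost $\oo(n^2)$). By \Cref{thm:density reduc correct}, comparing the returned density to $\tfrac{k+7}{k+6}$ determines $u^\top M v$ correctly. Total computation time is $\oo(n^2) = \oo(n_1 n_2)$, contradicting \Cref{uMv hard}.

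For the fully dynamic (amortized) statement, I would reduce from $1$-$\ouMv$ with $n_1 = n_2 = n$ and $n_3 = n^2$. After preprocessing the same $\Theta(n^3)$-vertex graph, for each incoming pair $(u^t, v^t)$ I would perform the $\Theta(n)$ insertions of \Cref{thm:densest reduc}, issue one query, and then use the fully dynamic capability of $\cA$ to \emph{undo} those insertions with another $\Theta(n)$ deletions, restoring the post-preprocessing state before the next pair arrives. Over the whole sequence this generates $t = \Theta(n \cdot n_3) = \Theta(n^3)$ updates and $n_3 = n^2$ queries, while the number of edges ever present is $m_0 = \Theta(n^3)$. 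Applying the standard amortized accounting (as in the proof of \Cref{cor:high query}), the total computation time is
\begin{equation*}
O\bigl((t + m_0)\cdot u(N)\bigr) + n_3\cdot q(N) \;=\; O(n^3)\cdot \oo(n) + n^2\cdot \oo(n^2) \;=\; \oo(n^4) \;=\; \oo(n_1 n_2 n_3),
\end{equation*}
contradicting \Cref{ouMv hard}.

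The only nontrivial point is calibrating $n_3$ in the fully dynamic case: because the amortized bound is meaningful only once the number of performed updates matches (up to constants) the number of edges $m_0 = \Theta(n^3)$ in the preprocessed graph, one must take $n_3 \geq n^2$ so that $t = \Theta(n\cdot n_3) \geq m_0$. With $n_3 = n^2$ the two terms in the total computation time both sit strictly below the trivial $n_1 n_2 n_3 = n^4$ barrier, yielding a clean contradiction. Everything else is a mechanical substitution of the reduction of \Cref{thm:densest reduc} into the hardness of the intermediate problems.
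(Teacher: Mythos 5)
Your proof is correct and takes essentially the same route as the paper: plug the hypothetical algorithm into the reduction of \Cref{thm:densest reduc} and contradict \Cref{uMv hard} (respectively \Cref{ouMv hard}); the only difference is cosmetic, since you keep the $\uMv$ parameters at $n\times n$ and run the algorithm on a $\Theta(n^3)$-vertex graph, whereas the paper scales the parameters down to $n^{1/3}\times n^{1/3}$ so that the graph has $\Theta(n)$ vertices. Your explicit amortized argument (undoing the $\Theta(n)$ insertions per pair and choosing $n_3=n^2$ so the number of updates dominates the $\Theta(n^3)$ edges) is precisely the argument the paper invokes by reference to the proof of \Cref{cor:high query}.
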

\begin{proof}
Suppose that such a partially dynamic algorithm $\cA$ exists.
By \Cref{thm:densest reduc} and by scaling down the parameter from $n$ to $n^{1/3}$,
we can solve 1-$\uMv$ with parameters $n^{1/3}$ and $n^{1/3}$, by running $\cA$ on a graph with $\Theta(n)$ vertices,
in time $O(n^{1/3} u(n) + q(n)) = \oo(n^{2/3})$ contradicting \Cref{oMv hard} by \Cref{uMv hard}.

If $\cA$ is fully dynamic, the argument is similar as in the proof of \Cref{cor:high query}.
\end{proof}

\section{Hardness for Total Update Time of Partially Dynamic Problems}\label{sec:full:partially hardness}

Our lower bounds, compared to previously known bounds, for the total update time of partially dynamic problems are summarized in \Cref{table: summary partially}. Tight results are summarized in \Cref{table: tight results partially}

\begin{table}
\footnotesize
\begin{tabular}{|>{\centering}p{0.3\textwidth}|>{\centering}p{0.08\textwidth}|>{\centering}p{0.08\textwidth}|>{\centering}p{0.08\textwidth}|>{\centering}p{0.08\textwidth}|>{\centering}p{0.11\textwidth}|>{\centering}p{0.22\textwidth}|}
	\hline 
	Problems & $p(m,n)$ & $u(m,n)$ & $q(m,n)$ & Conj.& Reference & Remark\tabularnewline
\hline 
\hline 
\multirow{2}{0.3\textwidth}{Bipartite Max Matching } & $m^{4/3-\epsilon}$ & $m^{4/3-\epsilon}$ & $m^{1/3-\epsilon}$ & 3SUM  & \cite{KopelowitzPP14} & $m=\Theta(n)$; only for incremental case.\tablefootnote{Kopelowitz, Pettie, and Porat~\cite{KopelowitzPP14} show a higher lower bound for the amortized update time of incremental algorithms. But they allow reverting the insertion which is not allowed in our setting.}\tabularnewline
\cline{2-7} 
 & $\mathbf{poly}$  & $\mathbf{m^{3/2-\epsilon}}$  & $\mathbf{m}^{\mathbf{1-\epsilon}}$  & $\oMv$ & \Cref{corr:bipartitie partial} & $m=\Theta(n)$ \tabularnewline
\hline 
unweighted st-SP  & $\mathbf{poly}$  & $\mathbf{m^{3/2-\epsilon}}$  & $\mathbf{m}^{\mathbf{1-\epsilon}}$  & $\oMv$ & 
\Cref{cor:st-SP hard partial} & $m=O(n^2)$ \tabularnewline
\hline 
\multirow{3}{0.3\textwidth}{unweighted ss-SP } & $\mathbf{poly}$  & $\mathbf{m^{3/2-\epsilon}}$  & $\mathbf{m}^{\mathbf{1-\epsilon}}$  & $\oMv$ & \Cref{cor:st-SP hard partial} & $m=O(n^2)$ \tabularnewline
\cline{2-7} 
 & $(mn)^{1-\epsilon}$ & $(mn)^{1-\epsilon}$ & $m^{\delta'-\epsilon}$ & BMM  & \cite{RodittyZESA04} &
 \multirow{6}{0.25\textwidth}{Choose any $\delta' \in (0,1/2]$, $m=\Theta(n^{1/(1-\delta')})$ } 
 \tabularnewline
\cline{2-6} 
 & $\mathbf{poly}$  & $\mathbf{(mn)^{1-\epsilon}}$  & $\mathbf{ m^{\delta'-\epsilon} }$ & $\oMv$ &\Cref{cor:trade-off hardness partial mn} &\tabularnewline
\cline{1-6} 
\multirow{2}{0.3\textwidth}{unweighted $(\alpha,\beta)$ap-SP, $2\alpha+\beta<4$} \\ & $(mn)^{1-\epsilon}$ & $(mn)^{1-\epsilon}$ & $m^{\delta'-\epsilon}$ & BMM & \cite{DorHZ00}  & \tabularnewline
\cline{2-6} 
 & $\mathbf{poly}$  & $\mathbf{(mn)^{1-\epsilon}}$   & $\mathbf{ m^{\delta'-\epsilon} }$ & $\oMv$ & \Cref{cor:trade-off hardness partial mn} &\tabularnewline
\cline{1-6} 
\multirow{2}{0.3\textwidth}{Transitive Closure } & $(mn)^{1-\epsilon}$ & $(mn)^{1-\epsilon}$ & $m^{\delta'-\epsilon}$ & BMM  & \cite{DorHZ00} &\tabularnewline
\cline{2-6} 
 & $\mathbf{poly}$  & $\mathbf{(mn)^{1-\epsilon}}$   & $\mathbf{ m^{\delta'-\epsilon} }$ & $\oMv$ & \Cref{cor:trade-off hardness partial mn} & \tabularnewline
\hline 
\end{tabular}

\caption{Lower bounds for total update time of partially dynamic problems.
Bounds which are not subsumed are highlighted. Each row states that
there is no algorithm achieving stated preprocessing time, total update
time, and query time \emph{simultaneously}, unless the conjecture
fails. Lower bounds based on BMM apply to only combinatorial algorithms.
}

\label{table: summary partially}
\end{table}

\begin{table}
\footnotesize
\begin{tabular}{|>{\centering}p{0.15\textwidth}|>{\centering}p{0.07\textwidth}|>{\centering}p{0.07\textwidth}|>{\centering}p{0.07\textwidth}|>{\centering}p{0.07\textwidth}|>{\centering}p{0.07\textwidth}|>{\centering}p{0.07\textwidth}|>{\centering}p{0.15\textwidth}|>{\centering}p{0.15\textwidth}|}
\hline 
\multirow{2}{0.15\textwidth}{Problem} & \multicolumn{3}{c|}{Upper Bounds} & \multicolumn{3}{c|}{Lower Bounds} & \multirow{2}{0.15\textwidth}{Problem} & \multirow{2}{0.15\textwidth}{Remark}\tabularnewline
\cline{2-7} 
 & $p(m,n)$ & $u(m,n)$ & $q(m,n)$  & $p(m,n)$ & $u(m,n)$ & $q(m,n)$ &  & \tabularnewline
\hline 
\hline 
dec. unweighted Exact ss-SP & $m$ & $mn$ & 1 & \multirow{2}{0.07\textwidth}{$poly$ } & \multirow{2}{0.07\textwidth}{$mn^{1-\epsilon}$ } & \multirow{2}{0.07\textwidth}{$m^{\delta'-\epsilon}$ } & \multirow{2}{0.15\textwidth}{dec. unweighted Exact ss-SP} & Upper: \cite{EvenS81} \tabularnewline
\cline{1-4} \cline{9-9} 
dec. unweighted $(1+\epsilon)$ ss-SP & $m^{1+o(1)}$ & $m^{1+o(1)}$  & 1  &  &  &  &  & Upper: \cite{HenzingerKNFOCS14} \tabularnewline
\hline

dec. unweighted $(1+\epsilon,0)$ ap-SP & $mn$ & $mn$ & 1 & \multirow{3}{0.07\textwidth}{$poly$ } & \multirow{3}{0.07\textwidth}{$mn^{1-\epsilon}$} & \multirow{3}{0.07\textwidth}{$m^{\delta'-\epsilon}$ } & \multirow{3}{0.15\textwidth}{dec. $(\alpha,\beta)$ ap-SP, $2\alpha+\beta<4$} & Upper: \cite{HenzingerKN13,RodittyZ04,Bernstein13} \tabularnewline
\cline{1-4} \cline{9-9} 
dec. unweighted $(1+\epsilon,2)$ ap-SP & $n^{5/2}$ & $n^{5/2}$ & 1 &  &  &  &  & \multirow{2}{0.15\textwidth}{\centering Upper: \cite{HenzingerKN13}}\tabularnewline
\cline{1-4} 
dec. unweighted $(2+\epsilon,0)$ ap-SP & $n^{5/2}$ & $n^{5/2}$ & 1 &  &  &  &  & \tabularnewline
\hline 

dec. Transitive Closure & $mn$ & $mn$ & $1$ & $poly$  & $mn^{1-\epsilon}$ & $m^{\delta'-\epsilon}$  & dec. transitive closure & Upper: \cite{Lacki13}
\tabularnewline
\hline 
\end{tabular}{\footnotesize \par}

\caption{Our tight results along with the matching upper bounds (or better
upper bounds when worse approximation ratio is allowed). The polylogarithmic
factors are omitted. The lower bounds state that there is no algorithm
achieving stated preprocessing time, total update time, and query time
\emph{simultaneously}, unless the conjecture fails. The table shows
that one cannot improve the approximation factor of decremental $(1+\epsilon)$
ss-SP/$(1+\epsilon,2)$ ap-SP/$(2+\epsilon,0)$ ap-SP without scarifying
fast running time. All lower bounds hold, for any $\delta'\in (0,1/2]$, when $m = \Theta(n^{1/(1-\delta')})$. $\epsilon>0$ is any constant.
 }

\label{table: tight results partially}
\end{table}

Given a matrix $M\in\{0,1\}^{n_{1}\times n_{2}}$, we denote a bipartite
graph $G_{M}=((L,R),E)$ where $L=\{l_{1},\dots,l_{n_{1}}\}$, $R=\{r_{1},\dots,r_{n_{2}}\}$,
and $E=\{(r_{j},l_{i})\mid M_{ij}=1\}$.

In this section, our proofs again follow two simple steps as in \Cref{sec:full:fully hardness}.
First, we show the reductions in lemmas that 
given a partially dynamic algorithm $\cA$ for some problem, 
one can solve $\ouMv$ by running the preprocessing step of $\cA$ on some graph 
and then making some number of updates and queries.
Then, we conclude in corollaries that if 
$\cA$ has low total update update time and query time
then this contradicts \Cref{oMv hard}. 

In the proofs of the lemmas of this section, we only usually show the reduction from $\ouMv$ 
to the decremental algorithm, because it is symmetric in the incremental setting.

\paragraph{$s$-$t$ Shortest Path (st-SP)}
\begin{lemma}
\label{lem:st-SP reduc partial}
Given an incremental (respectively decremental) dynamic algorithm $\cA$ for st-SP, 
one can solve 1-$\ouMv$ with parameters $n_1$, $n_2$, and $n_3$ 
by running the preprocessing step of $\cA$ on a graph with $\Theta(\sqrt{m})$ vertices
and $O(m)$ edges
which is initially empty (respectively initially has $\Theta(m)$ edges), 
and then making $\sqrt{m}$ queries,
where $m$ is such that $n_1=n_2=n_3=\sqrt{m}$.
\end{lemma}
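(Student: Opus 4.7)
The plan is to mimic the standard $s$-$t$ reduction from the previous section but stretch the construction across $n_3$ independent source/target pairs, so that a single query per vector pair suffices and the dynamic operations are monotone (insertions only in the incremental case, deletions only in the decremental case). I focus on the decremental case; the incremental one is symmetric.

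I would build the following graph. Take the bipartite graph $G_M = ((L,R), E_M)$ with $L = \{l_1,\dots,l_{n_1}\}$, $R = \{r_1,\dots,r_{n_2}\}$, and edge $(l_i, r_j)$ whenever $M_{ij} = 1$. Then add $n_3$ auxiliary sources $s_1,\dots,s_{n_3}$ and $n_3$ auxiliary sinks $t_1,\dots,t_{n_3}$, and for every $\tau$ include all edges $(s_\tau, l_i)$ and $(r_j, t_\tau)$. The vertex count is $n_1 + n_2 + 2n_3 = \Theta(\sqrt{m})$ and the edge count is at most $n_1 n_2 + n_3 n_1 + n_3 n_2 = O(m)$, as required. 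I feed this graph to $\cA$'s preprocessing step. When the pair $(u^\tau, v^\tau)$ arrives I delete every edge $(s_\tau, l_i)$ with $u^\tau_i = 0$ and every edge $(r_j, t_\tau)$ with $v^\tau_j = 0$, then issue the single query $d(s_\tau, t_\tau)$ and output $1$ if and only if the answer is $3$.

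For correctness, any $s_\tau$-$t_\tau$ path of length exactly $3$ must use one $(s_\tau, l_i)$ edge, one $G_M$ edge $(l_i, r_j)$, and one $(r_j, t_\tau)$ edge, which forces $u^\tau_i = M_{ij} = v^\tau_j = 1$ and thus witnesses $(u^\tau)^\top M v^\tau = 1$. Conversely any such witness gives a surviving length-$3$ path, since the only deletions performed so far touch edges incident to some $s_{\tau'}$ or $t_{\tau'}$ with $\tau' < \tau$, and these are disjoint from the edges used by the path through $s_\tau$ and $t_\tau$. Therefore $d(s_\tau, t_\tau) = 3$ exactly captures the desired bit, so across all $n_3 = \sqrt{m}$ rounds we make exactly $\sqrt{m}$ queries and solve $1$-$\ouMv$. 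The incremental variant starts from the empty graph, inserts the edges of $G_M$ through the update interface, and on round $\tau$ inserts only the edges $(s_\tau, l_i)$ and $(r_j, t_\tau)$ selected by $u^\tau$ and $v^\tau$; the same length-$3$ analysis applies verbatim.

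The main conceptual step, and the only subtle point, is the decision to introduce separate source/sink pairs $(s_\tau, t_\tau)$ per round. Without this, handling $n_3$ distinct queries would require reverting earlier updates, which the partially dynamic interface forbids; with it, all updates are monotone and queries in different rounds are mutually oblivious. Verifying the $\Theta(\sqrt{m})$ vertex and $O(m)$ edge bounds, as well as the fact that deletions in round $\tau' < \tau$ cannot spuriously shorten the $s_\tau$-$t_\tau$ distance, is then routine.
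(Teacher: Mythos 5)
There is a genuine gap, and it is precisely the step you flag as your ``main conceptual step.'' In this paper st-SP is the \emph{fixed-pair} problem (see \Cref{table:problem definitions 1}): the instance has one designated pair $s,t$ and every query returns $d(s,t)$; only ss-SP and ap-SP allow the query to name vertices. Your reduction must ask for $d(s_\tau,t_\tau)$ for a \emph{different} pair in every round, so the oracle you are using is really an ap-SP-type oracle, not an st-SP algorithm $\cA$. Consequently the reduction as written does not prove \Cref{lem:st-SP reduc partial}, and it cannot feed \Cref{cor:st-SP hard partial}. Trying to repair it by running a fresh st-SP instance of $\cA$ for each pair $(s_\tau,t_\tau)$ does not help either: you would invoke $n_3=\sqrt m$ preprocessing/update phases on graphs with $\Theta(m)$ edges, so an algorithm with total update time $\oo(m^{3/2})$ would only yield $\oo(m^{2})$ total work, which is not below the $n_1n_2n_3=m^{3/2}$ threshold needed to contradict \Cref{ouMv hard}; moreover the lemma explicitly requires a single preprocessing call on one graph.

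The paper's construction solves exactly this difficulty while keeping the query pair fixed. It attaches to $G_M$ two paths $P=(p_1,\dots,p_{n_3})$ and $Q=(q_1,\dots,q_{n_3})$ with fixed endpoints $s=p_1$ and $s'=q_1$, and puts in all edges $(p_t,l_i)$ and $(q_t,r_j)$. In round $t$ one deletes the edges at $p_t$ and $q_t$ dictated by $u^t$ and $v^t$, queries the \emph{same} pair $(s,s')$, and distinguishes $d(s,s')=2t+1$ (when $(u^t)^\top Mv^t=1$) from $d(s,s')\ge 2t+2$; before round $t+1$ all remaining edges at $p_t$ and $q_t$ are deleted, so the threshold shifts with $t$ and all updates stay monotone. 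Your length-$3$ analysis inside a single round is fine, and the insight that one must avoid undoing updates in a partially dynamic reduction is correct; what is missing is the path gadget (or some equivalent device) that lets a single fixed $s$-$t$ pair carry all $n_3$ rounds.
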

\begin{proof}
Given an input matrix $M$ of 1-$\ouMv$, we construct a bipartite graph $G_{M}$, and also two
paths $P$ and $Q$ with $n_{3}$ vertices each. Let $P=(p_{1},p_{2},\dots,p_{n_{3}})$
where $p_{1}=s$, and $Q=(q_{1},q_{2},\dots,q_{n_{3}})$ where $q_{1}=s'$.
Add the edge $(p_{t},l_{i})$ and $(q_{t},r_{j})$ for all $i\le n_1, j\le n_2$ and $t\le n_3$.
There are $\Theta(n_1+n_2+n_3) = \Theta(\sqrt{m})$ vertices and $O(n_1n_2+n_3)=O(m)$ edges.

Once $u^{t}$ and $v^{t}$ arrive, we disconnect $p_{t}$ from $l_{i}$
iff $u_{i}^{t}=0$, and disconnect $q_{t}$ from $r_{j}$ iff $v_{j}^{t}=0$.
We have that if $(u^{t})^{\top}Mv^{t}=1$, then $d(s,s')=2t+1$, otherwise
$d(s,s')\ge2t+2$. This is because, before $u^{t+1}$ and $v^{t+1}$ arrive, we disconnect
$p_{t}$ from $l_{i}$ for all $i\le n_1$ and disconnect $q_{t}$ from $r_{j}$ for all $j\le n_2$. 
So for each $t$, we need 1 query, and hence $n_3=\sqrt{m}$ queries in total.
\end{proof}

\begin{corollary}
	\label{cor:st-SP hard partial}
	For any $n$ and $m = O(n^2)$, \Cref{oMv hard} implies that there is no partially dynamic st-SP algorithm $\cA$ on a graph with $n$ vertices and at most $m$ edges 
	with polynomial preprocessing time, total update time $\oo(m^{3/2})$, and query time $\oo(m)$ that has an error probability of at most $1/3$.
\end{corollary}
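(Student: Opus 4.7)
The plan is to derive a contradiction with \Cref{ouMv hard} by using the hypothesized algorithm $\cA$ to solve an $\ouMv$ instance faster than the conjecture permits, via the reduction already established in \Cref{lem:st-SP reduc partial}. First I would assume for contradiction that such a partially dynamic st-SP algorithm $\cA$ exists with polynomial preprocessing, total update time $u(m) = \oo(m^{3/2})$, and query time $q(m) = \oo(m)$, whose error probability is at most $1/3$. The natural choice of parameters is $n_1 = n_2 = n_3 = \sqrt{m}$: this matches the graph size $\Theta(\sqrt{m})$ of the reduction, makes $n_1 n_2 n_3 = m^{3/2}$ line up with the allotted total update time, and corresponds to $\gouMv$ with $\gamma = 1$. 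The condition $m = O(n^2)$ in the corollary guarantees $\Theta(\sqrt{m}) \le n$, so the reduction's graph is a valid input for $\cA$.

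Next I would instantiate \Cref{lem:st-SP reduc partial} and carefully account for every cost. There are three pieces: (i) the $\poly(m)$ preprocessing of $\cA$ on the initial graph of $O(m)$ edges; (ii) the edge operations triggered when revealing each vector pair $(u^t, v^t)$---while the lemma statement explicitly names only the $\sqrt{m}$ queries, the construction in its proof performs at most $n_1 + n_2 = O(\sqrt{m})$ edge deletions (or insertions in the incremental case) per round, for a grand total of $n_3(n_1 + n_2) = O(m)$ edge operations, all paid for within $\cA$'s total update time of $\oo(m^{3/2})$; and (iii) the $n_3 = \sqrt{m}$ distance queries, costing $\sqrt{m} \cdot \oo(m) = \oo(m^{3/2})$ in aggregate. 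Summing yields a $\poly(m)$-preprocessing, $\oo(m^{3/2}) = \oo(n_1 n_2 n_3)$-computation-time algorithm for $\ouMv$ with $\gamma = 1$ and error probability $\le 1/3$, directly contradicting \Cref{ouMv hard}.

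The main (and rather mild) obstacle is the bookkeeping of implicit update operations: one must verify that the total number of edge operations performed by the reduction remains $O(m)$, and in particular does not scale multiplicatively with $n_3$ beyond what the total update time of a partially dynamic algorithm can absorb. Because each vertex $p_t$ (resp.\ $q_t$) is touched by deletions only during round $t$ and never again, the union over all rounds is a set of $O(m)$ edge modifications, which is precisely the setting in which the ``total update time'' bound of $\cA$ applies. The remainder of the argument is a routine repetition of the pattern used in \Cref{cor:high query} and \Cref{corr:trade-off}.
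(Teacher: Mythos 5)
Your proposal is correct and follows essentially the same route as the paper: invoke \Cref{lem:st-SP reduc partial} with $n_1=n_2=n_3=\sqrt{m}$, charge the $O(m)$ edge deletions to the total update time $\oo(m^{3/2})$ and the $\sqrt{m}$ queries to $\sqrt{m}\cdot\oo(m)=\oo(m^{3/2})$, and contradict \Cref{ouMv hard}. Your extra bookkeeping (checking $m=O(n^2)$ gives a valid input graph and that the per-round deletions sum to $O(m)$) is exactly what the paper leaves implicit, so there is nothing to add.
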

\begin{proof}
	Suppose there is such an algorithm $\cA$.
	By \Cref{lem:st-SP reduc partial}, we construct an algorithm $\cB$ for 1-$\ouMv$ with parameters 
	$n_1=\sqrt{m}$, $n_2=\sqrt{m}$, and $n_3=\sqrt{m}$
	by running $\cA$ on a graph with $n_0=\Theta(\sqrt{m})$ vertices and $m_0 = O(m)$ edges.
	Note that $m_0 = O(n_0^2)$.
	Since $\cA$ uses polynomial preprocessing time,
	total update time $\oo(m^{3/2})$
	and total query time $O(\sqrt{m} q(m) )=\oo(m^{3/2})$
	$\cB$ has polynomial preprocessing time and $\oo(m^{3/2})$ computation time contradicting \Cref{oMv hard} by \Cref{ouMv hard}.
\end{proof}

Note that, when $m=\Theta(n^2)$, \Cref{cor:st-SP hard partial} implies that there is no algorithm with $\oo(n^3)$ total update time and $\oo(n^2)$ query time. There is a matching upper bound of total update time $O(m n)=O(n^3)$ due to \cite{EvenS81}.

\paragraph{Bipartite Maximum Matching}
\begin{lemma}
Given an incremental (respectively decremental) dynamic algorithm for bipartite maximum matching, 
one can solve 1-$\ouMv$ with parameters $n_1$, $n_2$, and $n_3$
by running $\cA$ on a graph with $\Theta(m)$ vertices and $\Theta(m)$ edges
which is initially empty (respectively initially has $\Theta(m)$ edges), 
and then making $\sqrt{m}$ queries,
where $m$ is such that $n_1=n_2=n_3=\sqrt{m}$.
\end{lemma}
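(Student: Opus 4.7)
The plan is to extend the construction of \Cref{lem:st-SP reduc partial} from distances to matchings, using the standard gadget that turns an $s$-$t$ reachability question into a question about the size of a bipartite matching. I describe the decremental case; the incremental case is symmetric (edges are inserted instead of deleted and the ``destruction'' step is simply omitted).

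Construction: I take two copies $L^X, L^Y$ of $L$ on the $X$- and $Y$-sides, and likewise two copies $R^X, R^Y$ of $R$. The edge set consists of (i) the $G_M$ edges $(l_i^X, r_j^Y)$ for $M_{ij}=1$, (ii) the identity edges $(l_i^X, l_i^Y)$ and $(r_j^X, r_j^Y)$, which form a baseline matching of size $2\sqrt m$, and (iii) for each round $t$ a fresh source $a^t \in X$ and sink $b^t \in Y$ together with all edges $(a^t, l_i^Y)$ and $(r_j^X, b^t)$. If $|E_M| < \sqrt m$ I pad $M$ with extra all-zero rows/columns and $\sqrt m$ ``dummy'' $G_M$-edges on those new coordinates, extending each $u^t, v^t$ by zeros there so that $(u^t)^{\top} M v^t$ is unchanged; and I include isolated vertices so the total vertex count is $\Theta(m)$. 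The total edge count is $\Theta(m)$.

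Per-round protocol: in round $t$ I (a) delete $(a^t, l_i^Y)$ for every $i$ with $u^t_i = 0$ and $(r_j^X, b^t)$ for every $j$ with $v^t_j = 0$, (b) query the maximum matching, and (c) ``destroy'' the round-$t$ gadget by deleting every remaining edge incident to $a^t$ or $b^t$. This is $O(\sqrt m)$ deletions and one query per round; summed over the $n_3 = \sqrt m$ rounds this gives $\Theta(m)$ deletions and $\sqrt m$ queries in total, as required.

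The correctness claim is that, relative to the identity baseline matching, the only way for $a^{t'}$ and $b^{t'}$ to both become matched is via a length-$5$ augmenting path $a^{t'} \to l_i^Y \to l_i^X \to r_j^Y \to r_j^X \to b^{t'}$, which is available iff the still-present activation edges of round $t'$ witness some $i, j$ with $M_{ij}=1$. At the query of round $t$, the gadgets of rounds $t' < t$ are inert (their activation edges were removed in step (c)), rounds $t' > t$ can augment along \emph{any} $(l_i, r_j)\in E_M$, and round $t$ itself can augment iff $(u^t)^{\top} M v^t = 1$. A Hall-type argument, leaning on the invariant $|E_M|\geq\sqrt m$, shows that the maximum number of vertex-disjoint simultaneous augmentations is exactly $(n_3-t) + \mathbf{1}[(u^t)^{\top} M v^t = 1]$, so the queried matching size equals $2\sqrt m + (n_3-t) + \mathbf{1}[(u^t)^{\top} M v^t = 1]$ and the round-$t$ answer is recovered by a single subtraction. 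The main obstacle is this clean isolation of round $t$'s contribution from that of the still-active future rounds, which is precisely what the destruction step (c) and the $|E_M|\geq\sqrt m$ invariant are arranged to enable.
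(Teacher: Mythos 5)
There is a genuine flaw in the correctness step. Because each round's gadget is just a single pair $(a^{t'},b^{t'})$ joined to \emph{all} of $L^Y$ (resp.\ all of $R^X$), the vertex-disjoint augmenting structures that realize a large matching do not have to pair $a^{t'}$ with the $b^{t'}$ of the \emph{same} round: a chain may start at $a^t$, go $a^t\!-\!l_i^Y$, $l_i^X\!-\!r_{j'}^Y$ (any $M$-edge in row $i$), and end at $r_{j'}^X\!-\!b^{t''}$ for a \emph{future} round $t''>t$, while a second, disjoint chain absorbs $b^t$ from a future $a^{t''}$ through any $M$-edge in column $j$. Concretely, take $n_1=n_2=n_3=2$, $M=I_2$, and at round $t=1$ let $u^1=(1,0)^\top$, $v^1=(0,1)^\top$, so $(u^1)^\top Mv^1=0$. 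After your step (a), the six edges $(a^1,l_1^Y)$, $(l_1^X,r_1^Y)$, $(r_1^X,b^2)$, $(a^2,l_2^Y)$, $(l_2^X,r_2^Y)$, $(r_2^X,b^1)$ are all present and pairwise vertex-disjoint, so the queried maximum matching has size at least $(n_1+n_2)+2=(n_1+n_2)+(n_3-t)+1$ and your test reports a false positive. Hence the claimed identity ``number of disjoint augmentations $=(n_3-t)+\mathbf{1}[(u^t)^\top Mv^t=1]$'' is simply false, and no Hall-type argument can rescue it, because the $a$'s and $b$'s of different rounds are interchangeable. (A secondary issue: even the $(n_3-t)$ term requires $\nu(G_M)\ge n_3$, i.e.\ a large \emph{matching} in $G_M$, which does not follow from $|E_M|\ge\sqrt m$; your dummy edges would at least have to form a matching on fresh coordinates.)

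The paper's proof avoids exactly this cross-round interference by spending the $\Theta(m)$ vertex budget on per-round, per-index gadget vertices: for every round $t$ and every index $i$ (resp.\ $j$) there is a private matched pair $x_{t,i},x'_{t,i}$ (resp.\ $y_{t,j},y'_{t,j}$), with $x_{t,i}$ attached only to the gadget of index $i$. The pair $(u^t,v^t)$ is encoded by \emph{deleting the private matched edges} of round $t$ with $u_i=1$ or $v_j=1$ (say $d_t$ of them), and the query only asks whether the maximum matching dropped by $d_t$ or by at most $d_t-1$, i.e.\ whether a \emph{single} augmenting path exists --- no counting of disjoint augmentations is needed. Past rounds' gadgets are disconnected before the next round and future rounds' gadget vertices are matched to degree-one private partners, so every augmenting path must start at some free $x_{t,i}$ with $u_i=1$, cross one $M$-edge $(l_i,r_j)$, and end at a free $y_{t,j}$ with $v_j=1$; mixing with other rounds is structurally impossible. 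If you want to keep a two-vertex-per-round gadget, you must somehow prevent $a^t$ and $b^t$ from being rescued through other rounds' partners, which your construction does not do.
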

\begin{proof}
Given an input matrix $M$ of 1-$\ouMv$, we perform the following preprocessing.
First, we construct a bipartite graph $G_{M}$ which has $O(n_1n_2)=O(m)$ edges.
There are also additional sets of vertices 1) $L'=\{l'_1,\dots,l'_{n_1} \}$ and $R'=\{r'_1,\dots,r'_{n_2}\}$, 
2) $X_t= \{ x_{t,1},\dots,x_{t,n_1} \}$ and $X'_t= \{ x'_{t,1},\dots,x'_{t,n_1} \}$ for all $t\le n_3$, and 
3) $Y_t= \{ y_{t,1},\dots,y_{t,n_1} \}$ and $Y'_t= \{ y'_{t,1},\dots,y'_{t,n_1} \}$ for all $t\le n_3$.
These are all vertices in the graph, so there are $\Theta(n_1+n_2+n_1n_3+n_2n_3) = \Theta(m)$ vertices in total.
Next, we add edges $(l_{i},l'_{i})$, $(r_{j},r'_{j})$, $(x_{t,i},x'_{t,i})$ and $(y_{t,j},y'_{t,j})$
for each $i\le n_1,j \le n_2, t\le n_3$. These edges form a perfect matching of size $\Theta(m)$.
Finally, we add edges $(x_{t,i},l_i)$ and $(y_{t,j},r_j)$ for each $i\le n_1,j \le n_2, t\le n_3$. 
These $\Theta(m)$-many edges do not change the size of matching. In total, there are $\Theta(m)$ edges.

Once $(u^{t},v^{t})$ arrives, we delete the edge $(x_{t,i},x'_{t,i})$ for
each $u_{i}=1$ and the edge $(y_{t,j},y'_{t,j})$ for each $v_{j}=1$.
Let $d_t$ be the number of edges we delete in this way.
Observe that $(u^{t})^{\top}Mv^{t}=1$ iff there is an edge $(l_{i},r_{j})$ for
some $i,j$ where $u_{i}=1,v_{j}=1$ iff there is an augmenting path
from $x_{t,i}$ to $y_{t,j}$ for some $i,j$. 
So if $(u^{t})^{\top}Mv^{t}=1$, then the size of maximum matching is decreased by at most $d_t-1$
Otherwise, the size of maximum matching is decreased by $d_t$.
Before $(u^{t+1},v^{t+1})$ arrives, we delete all edges $(x_{t,i},l_i)$ and $(y_{t,j},l_j)$ for each $i\le n_1,j\le n_2$.
Therefore, the graph now has a perfect matching again. 
So for each $t$, we need 1 query, and hence $n_3=\sqrt{m}$ queries in total.
\end{proof}
\begin{corr}
\label{corr:bipartitie partial}
For any $n$ and $m = O(n)$,
\Cref{oMv hard} implies that there is no partially dynamic algorithm
for bipartite maximum matching on a graph with $n$ vertices and at most $m$ edges with preprocessing time $p(n)=poly(n)$,
total update time $u(m,n)=\oo(m^{3/2})$, and query time $q(m)=\oo(m)$
that has an error probability of at most $1/3$.
 \end{corr}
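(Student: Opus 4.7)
The plan is to mirror exactly the proof of \Cref{cor:st-SP hard partial}, applying the reduction lemma just proved for bipartite maximum matching and then appealing to \Cref{ouMv hard}. I would argue by contradiction: suppose a partially dynamic algorithm $\cA$ for bipartite maximum matching exists with polynomial preprocessing time, total update time $u(m_0, n_0) = \oo(m_0^{3/2})$, and query time $q(m_0) = \oo(m_0)$ on graphs with $n_0$ vertices and $m_0 = O(n_0)$ edges. I will use $\cA$ as a subroutine to build an algorithm $\cB$ for $1$-$\ouMv$ with parameters $n_1 = n_2 = n_3 = \sqrt{m}$ whose total running time is $\oo(n_1 n_2 n_3) = \oo(m^{3/2})$, contradicting \Cref{ouMv hard}.

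Concretely, given the matrix $M$ for $\ouMv$, I instantiate the reduction in the preceding lemma, which feeds $\cA$ an initial graph with $n_0 = \Theta(m)$ vertices and $m_0 = \Theta(m)$ edges. Observe that this is a sparse instance with $m_0 = \Theta(n_0)$, so the hypothesized bounds on $\cA$ apply with $m_0 = \Theta(m)$. The preprocessing cost is $\poly(m_0) = \poly(m)$. Over the $n_3 = \sqrt{m}$ rounds, the lemma performs $O(\sqrt{m})$ edge deletions per round (both the ``matching'' deletions indexed by $u^t,v^t$ and the $O(\sqrt{m})$ cleanup deletions at the end of each round), giving a total of $O(m)$ updates; these are all absorbed into the total update time bound $u(m_0, n_0) = \oo(m^{3/2})$ of $\cA$. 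The online phase issues $n_3 = \sqrt{m}$ queries, each costing $q(m_0) = \oo(m)$, for a total query time of $\sqrt{m} \cdot \oo(m) = \oo(m^{3/2})$.

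Summing, $\cB$ solves $1$-$\ouMv$ with parameters $n_1 = n_2 = n_3 = \sqrt{m}$ after polynomial preprocessing in computation time $\oo(m^{3/2}) = \oo(n_1 n_2 n_3)$, contradicting \Cref{ouMv hard} (which in turn follows from \Cref{oMv hard}). The fully dynamic/undoing argument used in \Cref{cor:high query} is not needed here since the matching lemma already produces a genuinely partially dynamic instance (only deletions in the decremental case, only insertions in the incremental case).

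There is no real obstacle: the only thing to verify is the bookkeeping of parameters, namely that $m_0 = \Theta(m) = \Theta(n_0)$ so the $m = O(n)$ restriction in the statement is tight with the reduction's graph, and that $n_1 n_2 n_3 = m^{3/2}$ matches the claimed update-time exponent. Both are immediate.
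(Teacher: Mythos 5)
Your proposal is correct and takes essentially the same route as the paper, whose entire proof is ``same argument as \Cref{cor:st-SP hard partial}'': apply the bipartite-matching reduction lemma to get a $1$-\ouMv instance with $n_1=n_2=n_3=\sqrt{m}$ on a sparse graph with $\Theta(m)$ vertices and edges, charge the $O(m)$ updates to the total update time $\oo(m^{3/2})$ and the $\sqrt{m}$ queries to $\sqrt{m}\cdot\oo(m)=\oo(m^{3/2})$, and contradict \Cref{ouMv hard}. Your parameter bookkeeping ($m_0=\Theta(n_0)$ matching the $m=O(n)$ restriction, and $n_1n_2n_3=m^{3/2}$) is exactly the verification the paper leaves implicit.
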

\begin{proof}
	Same argument as \Cref{cor:st-SP hard partial}. 
\end{proof}
Note that the hardness proof in \Cref{corr:bipartitie partial} applies only to sparse graphs.

\paragraph{Single Source Shortest Path (ss-SP)}
\begin{lemma}
\label{lem:ss-SP reduc partial}
Given an incremental (respectively decremental) dynamic algorithm $\cA$ for ss-SP, one can solve
$(\frac{\delta}{1-\delta})$-$\ouMv$ with parameters $n_1$, $n_2$, and $n_3$
by running $\cA$ on a graph with $\Theta(m^\delta+m^{1-\delta})$ vertices
and $\Theta(m)$ edges
which is initially empty (respectively initially has $\Theta(m)$ edges), 
and then making $m^{2(1-\delta)}$ queries,
where $m$ is such that $n_1=m^{1-\delta},n_2=m^{\delta}$ and $n_3=m^{1-\delta}$.
\end{lemma}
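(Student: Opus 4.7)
The plan is to extend the $s$-$t$ shortest path reduction of \Cref{lem:st-SP reduc partial} to the single-source setting by replacing the two paths emanating from $s$ and $s'$ by a single ``phase path'' attached to $s$ and by issuing one distance query per active row per round. I would build the bipartite graph $G_M = ((L,R), E)$ with $|L| = n_1 = m^{1-\delta}$ and $|R| = n_2 = m^{\delta}$, add a source $s$ and a path $P = (p_0 = s, p_1, \dots, p_{n_3})$ with $n_3 = m^{1-\delta}$, and, in the decremental variant, include every edge $(p_t, r_j)$ for $1 \leq t \leq n_3$ and $1 \leq j \leq n_2$. The resulting graph has $\Theta(m^{\delta} + m^{1-\delta})$ vertices and $\Theta(m)$ edges, using $n_1 n_2 = n_2 n_3 = m$ and the fact that the phase path itself contributes only $n_3 = m^{1-\delta}$ further edges.

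In the decremental reduction, once $(u^t, v^t)$ arrives I would first delete the edges $(p_t, r_j)$ for every $j$ with $v_j^t = 0$, then query $d(s, l_i)$ for every $i$ with $u_i^t = 1$, and finally delete every remaining edge of the form $(p_t, r_j)$ to ``close'' phase $t$. Correctness rests on three invariants at query time: any path through a $p_{t'}$ with $t' < t$ has been destroyed by the closing step of phase $t'$; a path $s \to p_t \to r_j \to l_i$ has length $t + 2$ and exists iff some $j$ satisfies $v_j^t = M_{ij} = 1$; and a path through a future $p_{t'}$ with $t' > t$ has length at least $t' + 2 \geq t + 3$. Hence $d(s, l_i) = t + 2$ precisely when row $i$ witnesses $(u^t)^\top M v^t$, so the product is $1$ iff the minimum distance queried in phase $t$ equals $t + 2$. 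Summed over all phases, the total number of queries is $\sum_t |\{i : u_i^t = 1\}| \leq n_1 n_3 = m^{2(1-\delta)}$.

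The incremental case is handled by a symmetric construction: start from the empty graph, insert $G_M$ together with the \emph{reversed} phase path $s, p_{n_3}, p_{n_3-1}, \dots, p_1$ as initial updates so that $d(s, p_t) = n_3 - t + 1$, and at phase $t$ insert only the edges $(p_t, r_j)$ with $v_j^t = 1$ before running the same batch of queries. An analogous analysis shows that at query time $d(s, l_i) = n_3 - t + 3$ iff row $i$ is a witness in the current phase; any path through an already-opened $p_{t'}$ with $t' < t$ is strictly longer because $n_3 - t' + 1 > n_3 - t + 1$, and the not-yet-opened future gates provide no path at all.

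The main difficulty is arranging the single source $s$ to produce a distinct ``signature'' distance for each of the $n_3$ rounds without inflating the construction beyond the $\Theta(m^{\delta} + m^{1-\delta})$ vertex budget and the $\Theta(m)$ edge budget. The phase path supplies exactly such a signature because each gate $p_t$ sits at a unique distance from $s$; combined with the closing step (decremental) or the delayed opening (incremental), this guarantees that only the current phase's gate can realize the minimum-length path into $L$, so the ss-SP oracle's answers transparently encode the per-round values of $(u^t)^\top M v^t$.
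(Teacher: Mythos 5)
Your proposal is correct and follows essentially the same construction as the paper: a bipartite gadget $G_M$ plus a path of ``phase gates'' attached to $s$, each gate joined to all of $R$, with per-phase deletions encoding $v^t$, one distance query per active row of $u^t$, and a closing step that removes the current gate's edges so later phases see strictly larger signature distances (the paper's path starts at $q_1=s$, giving thresholds $t+1$ versus $t+2$ rather than your $t+2$ versus $t+3$, an immaterial shift). The paper dismisses the incremental case as symmetric, whereas you spell it out via a reversed path with delayed edge insertions, which is a fine instantiation of the same idea.
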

\begin{proof}
Given an input matrix $M$ of $(\frac{\delta}{1-\delta})$-$\ouMv$, 
we construct the bipartite graph $G_{M}$, and also a path
$Q=(q_{1},q_{2},\dots,q_{n_3})$ where $q_{1}=s$. 
Add the edge $(q_{t},r_{j})$ for all $j\le n_2, t\le n_3$.
There are $\Theta(n_1+n_2+n_3) = \Theta(m^\delta+m^{1-\delta})$ vertices and
$\Theta(n_1n_2 + n_2n_3) = \Theta(m)$ edges.

Once $u^{t}$ and $v^{t}$ arrive, we disconnect $q_{t}$
from $r_{j}$ iff $v_{j}^{t}=0$. We have that if $(u^{t})^{\top}Mv^{t}=1$,
then $d(s,l_{i})=t+1$ for some $i$ where $u_{i}=1$, otherwise $d(s,l_{i})\ge t+2$
for all $i$ where $u_{i}=1$. So $n_1$ queries are enough
to distinguish these two cases. Before $u^{t+1}$ and $v^{t+1}$ arrive, we disconnect
$q_{t}$ from $r_{j}$ for all $j\le n_2$.
So for each $t$, we need $n_1$ queries, and hence $n_1n_3=m^{2(1-\delta)}$ queries in total.
\end{proof}

\paragraph{Distinguishing Distance among Vertices between 2 and 4 (ap-SP (2 vs.\ 4))}
\begin{lemma}
\label{lem:ap-SP reduc partial}
Given an incremental (respectively decremental) dynamic algorithm $\cA$ for $(\alpha,\beta)$-approximate ap-SP with $2\alpha+\beta<4$, 
one can solve $(\frac{\delta}{1-\delta})$-$\ouMv$ with parameters $n_1$, $n_2$, and $n_3$
by running $\cA$ on a graph with $\Theta(m^\delta+m^{1-\delta})$ vertices
and $\Theta(m)$ edges
which is initially empty (respectively initially has $\Theta(m)$ edges), 
and then making $m^{2(1-\delta)}$ queries,
where $m$ is such that $n_1=m^{1-\delta},n_2=m^{\delta}$ and $n_3=m^{1-\delta}$.
\end{lemma}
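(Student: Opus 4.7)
The plan is to adapt the single-source construction of \Cref{lem:ss-SP reduc partial} to exploit the all-pairs query model, replacing its growing distance gap by a constant $2$-vs-$4$ gap in the spirit of \Cref{ss-SP 2vs4 reduc}. Starting from the bipartite graph $G_M$, I introduce $n_3$ independent ``source'' vertices $s_1,\dots,s_{n_3}$, each earmarked for exactly one round. In the decremental case the initial graph contains all edges of $G_M$ together with every edge $(s_t,r_j)$ for $t\le n_3$ and $j\le n_2$; in the incremental case the graph starts empty and all of these edges are inserted over the course of the algorithm. In either case the resulting graph has $n_1+n_2+n_3=\Theta(m^{\delta}+m^{1-\delta})$ vertices and at most $n_1n_2+n_2n_3=\Theta(m)$ edges, and the total number of edge insertions/deletions is $\Theta(m)$.

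At round $t$ I would touch only the edges incident to $s_t$ so that, at query time, $(s_t,r_j)$ is present iff $v_j^t=1$: decrementally, delete $(s_t,r_j)$ for each $j$ with $v_j^t=0$; incrementally, insert $(s_t,r_j)$ for each $j$ with $v_j^t=1$. Then for every $i$ with $u_i^t=1$ I query the approximate distance $d(s_t,l_i)$, using at most $n_1n_3=m^{2(1-\delta)}$ queries in total. The correctness claim I want to establish is that some $i$ with $u_i^t=1$ satisfies $d(s_t,l_i)=2$ iff $(u^t)^{\top}Mv^t=1$, and otherwise $d(s_t,l_i)\ge 4$ for every such $i$; combined with the hypothesis $2\alpha+\beta<4$, this turns an $(\alpha,\beta)$-approximate answer into an unambiguous distinguisher between the two cases.

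The main obstacle is verifying this $2$-vs-$4$ gap despite the presence of the other source vertices $s_{t'}$ ($t'\neq t$), which in the decremental setting carry residual connections to $R$ from earlier rounds and, for $t'>t$, are still fully connected to $R$. The clean resolution is a parity argument: the graph is bipartite with sides $R$ and $L\cup\{s_1,\dots,s_{n_3}\}$, so every path from $s_t$ to $l_i$ has even length, and the only length-$2$ paths have the form $s_t-r_j-l_i$ with $(s_t,r_j)$ currently present and $M_{ij}=1$, i.e., with $v_j^t=1$ and $M_{ij}=1$. Any alternate path that detours through another source $s_{t'}$ or through another left vertex $l_{i'}$ uses at least four edges. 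In particular, there is no need to ``reset'' $s_{t-1}$'s edges before round $t$ as the single-source path construction of \Cref{lem:ss-SP reduc partial} does, because ap-SP queries let us pick a fresh source vertex for each round. Plugging the resulting reduction into \Cref{ouMv hard} in a corollary analogous to the one following \Cref{lem:ss-SP reduc partial} will then yield the desired hardness bound.
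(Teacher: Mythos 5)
Your construction is exactly the paper's: a fresh query vertex per round (the paper's set $Q=\{q_1,\dots,q_{n_3}\}$ plays the role of your $s_1,\dots,s_{n_3}$), edges from each such vertex to all of $R$, deletions/insertions encoding $v^t$, and at most $n_1$ distance queries $d(s_t,l_i)$ per round giving the $2$-vs-$4$ gap, which $2\alpha+\beta<4$ preserves. The bipartite-parity justification you add is left implicit in the paper but is the same underlying argument, so this is correct and essentially identical to the paper's proof.
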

\begin{proof}
Given an input matrix $M$ of $(\frac{\delta}{1-\delta})$-$\ouMv$, 
we construct a bipartite graph $G_{M}$, and another set of
vertices $Q=\{q_{1},\dots,q_{n_3}\}$.
Add the edge $(q_{t},r_{j})$ for all $j\le n_2, t\le n_3$.
There are $\Theta(n_1+n_2+n_3) = \Theta(m^\delta+m^{1-\delta})$ vertices and
$\Theta(n_1n_2 + n_2n_3) = \Theta(m)$ edges.

Once $u^{t}$ and $v^{t}$ arrive, we disconnect $q_{t}$
from $r_{j}$ iff $v_{j}^{t}=0$. We have that if $(u^{t})^{\top}Mv^{t}=1$,
then $d(q_{t},l_{i})=2$ for some $i$ where $u_{i}=1$, otherwise
$d(s,l_{i})\ge4$ for all $i$ where $u_{i}=1$.
So for each $t$, we need $n_1$ queries, and hence $n_1n_3=m^{2(1-\delta)}$ queries in total.
\end{proof}

\paragraph{Transitive Closure}
\begin{lemma}
\label{lem:TC reduc partial}
Given an incremental (respectively decremental) dynamic algorithm $\cA$ for transitive closure, 
one can solve $(\frac{\delta}{1-\delta})$-$\ouMv$ with parameters  $n_1$, $n_2$, and $n_3$
by running $\cA$ on a graph with $\Theta(m^\delta+m^{1-\delta})$ vertices
and $\Theta(m)$ edges
which is initially empty (respectively initially has $\Theta(m)$ edges), 
and then making $m^{2(1-\delta)}$ queries,
where $m$ is such that $n_1=m^{1-\delta},n_2=m^{\delta}$ and $n_3=m^{1-\delta}$.
\end{lemma}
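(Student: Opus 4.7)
The plan is to follow the template of the reductions for ss-SP and ap-SP (Lemmas~\ref{lem:ss-SP reduc partial} and~\ref{lem:ap-SP reduc partial}), adapted to the reachability setting. I will only describe the decremental case since the incremental one is symmetric (run the reduction in reverse).

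Given the input matrix $M$ of $(\frac{\delta}{1-\delta})$-$\ouMv$, I build the directed bipartite graph $G_M$ whose edges $(r_j, l_i)$ are oriented from $R$ to $L$ whenever $M_{ij} = 1$. To handle the $n_3$ online vector pairs, I introduce a set of ``source'' vertices $Q = \{q_1, \dots, q_{n_3}\}$ and add directed edges $(q_t, r_j)$ for every $t \le n_3$ and every $j \le n_2$. The resulting graph has $\Theta(n_1+n_2+n_3) = \Theta(m^{\delta}+m^{1-\delta})$ vertices and $\Theta(n_1 n_2 + n_2 n_3) = \Theta(m)$ edges, and we feed it to $\cA$'s preprocessing.

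When the pair $(u^t, v^t)$ arrives, I first delete the edge $(q_t, r_j)$ for every $j$ with $v^t_j = 0$; then for each $i$ with $u^t_i = 1$, I query whether $l_i$ is reachable from $q_t$. The key correctness observation is that $l_i$ is reachable from $q_t$ in the current graph if and only if there exists some $j$ with $v^t_j = 1$ and $M_{ij} = 1$, i.e., if and only if $(Mv^t)_i = 1$. Thus $(u^t)^\top M v^t = 1$ iff at least one of these $n_1$ queries returns ``reachable''. Before processing $(u^{t+1}, v^{t+1})$, I delete the remaining edges $(q_t, r_j)$ with $v^t_j = 1$, restoring the invariant that no $q_{t'}$ for $t' \le t$ is connected to any $r_j$. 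This keeps the algorithm decremental throughout.

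The total number of queries is $n_1 \cdot n_3 = m^{1-\delta} \cdot m^{1-\delta} = m^{2(1-\delta)}$, matching the bound in the statement. There is no real obstacle here: unlike the diameter or densest-subgraph reductions, the construction is a routine adaptation of the earlier template, and the key insight---replacing ``distance $\le 2$'' with ``reachable via a length-$2$ directed path''---comes for free once the edges of $G_M$ are oriented from $R$ to $L$. The corresponding corollary (ruling out $\oo((mn)^{1-\epsilon})$ total update time with $\oo(m^{\delta' - \epsilon})$ query time for decremental transitive closure) then follows by plugging into Theorem~\ref{ouMv hard} exactly as in the proof of Corollary~\ref{cor:st-SP hard partial}.
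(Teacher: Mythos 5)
Your reduction is correct and matches the paper's proof essentially verbatim: the same directed graph $G_M$ with edges oriented from $R$ to $L$, the same source set $Q$ with edges $(q_t,r_j)$, the same per-round deletions and the same $n_1 n_3 = m^{2(1-\delta)}$ reachability queries. The only difference is your extra cleanup of the edges $(q_t,r_j)$ with $v^t_j=1$ after round $t$, which is harmless but unnecessary here since each $q_t$ has no incoming edges and is never queried again.
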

\begin{proof}
Given an input matrix $M$ of $(\frac{\delta}{1-\delta})$-$\ouMv$,  
we construct a directed graph $G_{M}$ where the edges are directed from $R$ to $L$, 
and another set of vertices $Q=\{q_{1},\dots,q_{n_3}\}$.
Add the directed edge $(q_{t},r_{j})$ for all $j\le n_2, t\le n_3$.
There are $\Theta(n_1+n_2+n_3) = \Theta(m^\delta+m^{1-\delta})$ vertices and
$\Theta(n_1n_2 + n_2n_3) = \Theta(m)$ edges.

Once $u^{t}$ and $v^{t}$ arrive, we disconnect $q_{t}$
from $r_{j}$ iff $v_{j}^{t}=0$. We have that $(u^{t})^{\top}Mv^{t}=1$
iff $q_{t}$ can reach $l_{i}$ for some $i$ where $u_{i}=1$. 
So for each $t$, we need $n_1$ queries, and hence $n_1n_3=m^{2(1-\delta)}$ queries in total. 
\end{proof}

\begin{corr}
\label{cor:trade-off hardness partial mn}
For any $n$, $m = \Theta(n^{1/(1-\delta)}\})$ and constant $\delta\in(0,1/2]$,
\Cref{oMv hard} implies that there is no partially dynamic algorithm
for the problems listed below for a graph with $n$ vertices and at most $m$ edges
with preprocessing time $p(m)=poly(m)$,
total update time $u(m)=\oo(mn)$, and query time $q(m)=\oo(m^{\delta})$ per query
that has an error probability of at most $1/3$.
The problems are:
\begin{itemize}[noitemsep,nolistsep]
\item ss-SP
\item ap-SP (2 vs.\ 4)
\item Transitive Closure
\end{itemize}
\end{corr}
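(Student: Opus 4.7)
The plan is to apply the three reduction lemmas (Lemmas 4.4, 4.5, and 4.6) in a unified fashion, since each one reduces the same variant of $\ouMv$ to its respective partially dynamic problem using the same parameter choices. I will assume toward contradiction that a partially dynamic algorithm $\cA$ for one of the three problems has polynomial preprocessing, total update time $u(m,n) = \oo(mn)$, and per-query time $q(m) = \oo(m^\delta)$ on graphs with $n$ vertices and at most $m = \Theta(n^{1/(1-\delta)})$ edges.

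Given an $\ouMv$ instance with parameters $n_1 = m^{1-\delta}$, $n_2 = m^\delta$, and $n_3 = m^{1-\delta}$, I will invoke the relevant lemma to build a graph with $n_0 = \Theta(m^{1-\delta})$ vertices (using $\delta \leq 1/2$ so that $m^{1-\delta}$ dominates $m^\delta$) and $m_0 = \Theta(m)$ edges, run $\cA$'s polynomial preprocessing, and then perform the updates and $m^{2(1-\delta)}$ queries prescribed by the lemma to recover each $(u^t)^\top M v^t$. The hypothesis $m = \Theta(n^{1/(1-\delta)})$ is precisely what makes the reduction's graph size consistent, since $m_0 = \Theta(n_0^{1/(1-\delta)})$.

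The final step is a running-time tally: the total update cost of $\cA$ across the execution is $\oo(m_0 n_0) = \oo(m \cdot m^{1-\delta}) = \oo(m^{2-\delta})$, and the total query cost is $m^{2(1-\delta)} \cdot \oo(m^\delta) = \oo(m^{2-\delta})$. Combined with the polynomial preprocessing, this gives an $\ouMv$ algorithm with polynomial preprocessing and $\oo(m^{2-\delta}) = \oo(n_1 n_2 n_3)$ computation time, contradicting \Cref{ouMv hard}. There is no real obstacle here; the argument is a routine balancing exercise, and the only thing to verify carefully is that the update and query costs both hit the critical threshold $\oo(n_1 n_2 n_3)$, which is exactly what dictates the parameter choice $n_1 = n_3 = m^{1-\delta}$, $n_2 = m^\delta$.
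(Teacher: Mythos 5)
Your proposal is correct and follows essentially the same route as the paper's own proof: invoke the three partially dynamic reduction lemmas with parameters $n_1=n_3=m^{1-\delta}$, $n_2=m^{\delta}$, note $m_0=\Theta(n_0^{1/(1-\delta)})$, and tally total update time $\oo(m^{2-\delta})$ plus query time $m^{2(1-\delta)}\cdot\oo(m^{\delta})=\oo(m^{2-\delta})=\oo(n_1n_2n_3)$ to contradict the hardness of $\ouMv$. No gaps to report.
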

\begin{proof}
Suppose there is such an algorithm $\cA$ for any problem in the list. 
By \Cref{lem:ss-SP reduc partial,lem:ap-SP reduc partial,lem:TC reduc partial}, we construct an algorithm $\cB$ for $(\frac{\delta}{1-\delta})$-$\ouMv$ with parameters $n_1=m^{1-\delta}$, $n_2=m^{\delta}$, and $n_3=m^{1-\delta}$ 
by running $\cA$ on a graph with $n_0=\Theta(m^\delta+m^{1-\delta}) = \Theta(m^{1-\delta})$ vertices and $m_0 = \Theta(m)$ edges.
Note that $m_0 = \Theta(n_0^{1/(1-\delta)})$.
Since $\cA$ uses polynomial preprocessing time,
total update time $\oo(mn)=\oo(m^{2-\delta})$
and total query time $O(m^{2(1-\delta)} q(m) )=\oo(m^{2-\delta})$,
$\cB$ has polynomial preprocessing time and $\oo(m^{2-\delta})$ computation time contradicting \Cref{oMv hard} by \Cref{ouMv hard}.
\end{proof}
	
\section{Further Discussions}\label{sec:futher discussions}

\subsection{Multiphase Problem}\label{sec:multiphase}

The multiphase problem is introduced in \cite{Patrascu10} as a problem
which can easily provide hardness for various dynamic problems. Though,
the lower bounds obtained are always worst-case time lower bounds.

In this section, we first prove that the $\oMv$ conjecture implies a tight lower bound for this problem. Then, we show a general approach for getting amortized lower bounds using the known reductions from the multiphase problem.

\begin{definition}[Multiphase Problem]
	The multiphase problem with parameters $n_1,n_2$ is a problem with 3 phases. 
	Phase 1: Preprocess a Boolean matrix $M$ of size $n_{1}\times n_{2}$ in time $O(n_{1}n_{2}\tau)$. 
	Phase 2: Get an $n_{2}$-dimensional vector $v$ and spend time $O(n_{2}\tau)$.
	Phase 3: Given an index $i$, answer if $(Mv)_{i}=1$ in time $O(\tau)$. We call $\tau$ the update time.
\end{definition}
We note that the equivalent problem definition described in \cite{Patrascu10}
uses a family of sets and a set instead of a matrix and a vector.
Assuming that there is no truly subquadratic 3SUM algorithm,  \patrascu \cite{Patrascu10}
showed that one cannot solve multiphase with parameters $n_1,n_2$ when $n_{1}=n_{2}^{2.5}$ and $\tau=\oo(\sqrt{n_{2}})$.
Based on $\oMv$, we can easily prove a better a lower bound for $\tau$. 
\begin{lemma}
	\label{lem:multiphase reduc}
	For any $\gamma>0$,
	given an algorithm for multiphase problem with parameters $n_1,n_2$ and $\tau$ update time, 
	one can solve $\goMv$ with parameters $n_1,n_2,n_3$ in
	time $O((n_{1}n_{2}+n_{2}n_{3}+n_{1}n_{3})\tau)$.\end{lemma}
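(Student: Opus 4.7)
The plan is to view the three phases of the multiphase algorithm as a preprocessing routine plus a per-vector subroutine, and apply them in the obvious way to the online stream of vectors. Concretely, since Phase~1 builds a data structure for the (static) matrix $M$ that is then used repeatedly by the query phases, I would invoke Phase~1 once on the $\goMv$ input matrix $M$, paying $O(n_1 n_2 \tau)$ time upfront. This gives a single preprocessing cost, independent of $n_3$.

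Next, as each vector $v^t$ arrives online, I would run Phase~2 on $v^t$ to spend $O(n_2 \tau)$ time building whatever auxiliary structure the multiphase algorithm needs in order to support entry queries to $M v^t$. Then, to recover the entire vector $M v^t$ (which is what $\goMv$ requires as output before $v^{t+1}$ arrives), I would invoke Phase~3 once for each index $i \in [n_1]$, taking $O(\tau)$ time per index and $O(n_1 \tau)$ time in total. Summing over all $n_3$ vectors, the online work is $O(n_3(n_2\tau + n_1 \tau)) = O((n_1 n_3 + n_2 n_3)\tau)$, and combining with the preprocessing gives the claimed bound $O((n_1 n_2 + n_2 n_3 + n_1 n_3)\tau)$.

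The only real subtlety is justifying that Phases~2 and~3 can be re-invoked $n_3$ times against the single Phase~1 output: since the multiphase algorithm is defined so that Phase~1 preprocesses $M$ and the later phases merely query $Mv$ for a chosen $v$, the Phase~1 structure is effectively static and can be reused verbatim; only the per-vector state built in Phase~2 needs to be reinitialized between rounds, which is accounted for by the $O(n_2 \tau)$ charge per round. This is the same convention under which all known reductions from the multiphase problem are stated, so no separate argument is required beyond noting it explicitly. I do not anticipate a hard step here; the whole proof is a routine bookkeeping of the three phase costs against the online format of $\goMv$.
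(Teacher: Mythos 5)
Your proposal is correct and matches the paper's proof essentially verbatim: run Phase~1 once on $M$, then for each of the $n_3$ arriving vectors run Phase~2 and invoke Phase~3 for all $n_1$ indices to recover $Mv^t$, giving $O((n_1n_2+n_2n_3+n_1n_3)\tau)$. The reuse of the Phase~1 structure across rounds that you flag is exactly the convention the paper relies on when it speaks of running many instances of Phases~2 and~3.
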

\begin{proof}
	Given a matrix $M$ of $n_{1}\times n_{2}$, we run Phase 1 of the multiphase
	algorithm in time $O(n_{1}n_{2}\tau)$. For every $v^{t}$, $1\le t\le n_{3}$,
	we run $n_{3}$ many instances of Phase 2 in time $n_{3}\times O(n_{2}\tau)$.
	To compute $Mv^{t}$, we run $n_{1}n_{3}$ many instances of Phase
	3 in time $n_{1}n_{3}\times O(\tau)$.\end{proof}
\begin{corollary}
	\label{corr:multiphase}
	For any $n_1,n_2$, \Cref{oMv hard} implies that there is no algorithm $\cA$ for the multiphase problem with parameters $n_1,n_2$ such that  $\tau=\oo(\min\{n_{1},n_{2}\})$.
\end{corollary}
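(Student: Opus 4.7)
The plan is to compose \Cref{lem:multiphase reduc} with the $\goMv$ hardness of \Cref{general oMv hard}. Suppose for contradiction that there is a multiphase algorithm $\cA$ with parameters $n_1,n_2$ and update time $\tau=\oo(\min\{n_1,n_2\})$. By transposing the input matrix $M$, which swaps the roles of $n_1$ and $n_2$ in the protocol but leaves the condition $\tau=\oo(\min\{n_1,n_2\})$ invariant, it suffices to handle the case $n_1\le n_2$; then $\tau=\oo(n_1)$, i.e., $\tau=O(n_1^{1-\epsilon})$ for some constant $\epsilon>0$.

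Fix the constant $\gamma>0$ determined by $n_1=n_2^{\gamma}$, so that we are in a $\goMv$ regime. I would apply \Cref{lem:multiphase reduc} with third parameter $n_3:=n_2$ to obtain an algorithm for $\goMv$ with parameters $n_1,n_2,n_3$. The Phase~1 cost $O(n_1 n_2 \tau)=\poly(n_1,n_2)$ serves as preprocessing, and the remaining computation time is
\begin{equation*}
O((n_1 n_2 + n_2 n_3 + n_1 n_3)\tau) \;=\; O(n_2^2 \tau) \;=\; O(n_1^{1-\epsilon} n_2^2) \;=\; \oo(n_1 n_2 n_3),
\end{equation*}
where the second equality uses $n_1\le n_2=n_3$ and the final step is immediate from the definition of $\oo$ (reducing the exponent of $n_1$ by $\epsilon$). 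This contradicts \Cref{general oMv hard}.

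The argument is essentially a two-line calculation on top of the reduction; the only bookkeeping point worth noting is the choice $n_3=\max\{n_1,n_2\}$, which ensures that all three terms $n_1 n_2$, $n_2 n_3$, $n_1 n_3$ in the reduction's running time are dominated by $n_1 n_2 n_3$ so that the $\oo$ absorption succeeds. The same choice works symmetrically if one instead approaches the case $n_2<n_1$ directly (with $\gamma>1$), without invoking transposition. No serious obstacle beyond this is anticipated.
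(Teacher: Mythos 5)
Your proposal is correct and follows essentially the same route as the paper: compose \Cref{lem:multiphase reduc} with \Cref{general oMv hard}, the only difference being that you take $n_3=\max\{n_1,n_2\}$ where the paper takes $n_3=\min\{n_1,n_2\}$ (both choices make the $\oo$ absorption go through, and treating the Phase~1 cost as preprocessing rather than computation is equally fine). One small caution: the ``transpose $M$'' reduction to the case $n_1\le n_2$ is not actually justified, since the multiphase protocol only multiplies $M$ by a vector on the right and an algorithm for parameters $(n_1,n_2)$ does not obviously yield one for $(n_2,n_1)$ --- but this is harmless because, as you note at the end, the same calculation works directly and symmetrically when $n_2<n_1$.
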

\begin{proof}
	Suppose there is such an algorithm $\cA$. Then, by \Cref{lem:multiphase reduc} and setting $n_3 = \min\{n_1,n_2\}$, one can solve
	$\goMv$ with parameters $n_1,n_2,n_3$ in time $O(n_{1}n_{2}+n_{2}n_{3}+n_{1}n_{3})\times \oo(\min\{n_{1},n_{2}\})$. 
	Assume w.l.o.g. that $n_1 \le n_2$ then we get the expression $O(n_1 n_2 + n_1^2) \times \oo(n_1) = \oo(n_1^2 n_2) = \oo(n_1 n_2 n_3)$, which contradicts \Cref{oMv hard} by \Cref{general oMv hard}.
\end{proof}

\subsubsection{Converting Worst-case Bounds to Amortized Bounds}
In the following, let $\cA$ be a fully dynamic algorithm that maintains some object $G$, e.g., a graph, a matrix, an array etc.
Similar to \Cref{lem:multiphase reduc}, by running many instances of phase 2 and 3 of the multiphase algorithm, we have the following.
\begin{lemma}\label{lem:multiphase reduc2}
	For any constant $\gamma>0$, suppose that one can solve the multiphase problem with parameters $n_1,n_2$ by running $\cA$ on $G$ of size $s(n_1,n_2)$ using 
	$p(n_1,n_2)$ preprocessing time, and then making $k_i$ updates/queries on phase $i$, for $i \in {2,3}$. 
	Then one can solve $\goMv$ problems with parameters $n_1,n_2,n_3$ by running $\cA$ on $G$ using $p(n_1,n_2)$ preprocessing time, and then using $O(k_2 n_3 + k_3 n_1 n_3)$ updates/queries. 
\end{lemma}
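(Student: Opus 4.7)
The plan is to lift the reduction in \Cref{lem:multiphase reduc} by replacing each of its three phases with the corresponding sequence of operations on $\cA$. Given the input matrix $M$ of the $\goMv$ instance, I would first simulate Phase~1 by invoking the preprocessing of $\cA$ on $G$; this costs $p(n_1,n_2)$ time and produces a maintained object of size $s(n_1,n_2)$. This step is executed exactly once and serves all of the subsequently arriving vectors.

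Next, when $v^t$ arrives (for $t=1,\dots,n_3$), I would simulate Phase~2 by applying the $k_2$ updates/queries on $\cA$ that correspond to feeding $v^t$ in the assumed reduction. Then, to recover the entire vector $Mv^t$, I would simulate Phase~3 separately for each index $i\in\{1,\dots,n_1\}$: each such simulation uses $k_3$ updates/queries on $\cA$ and returns $(Mv^t)_i$. Because the assumed reduction is only guaranteed to work on a fresh state (i.e., one obtained by executing Phase~1 followed by a single Phase~2 for the current vector), I would undo all Phase~3 operations before incrementing $i$, and undo all Phase~2 operations before incrementing $t$. This undoing is available because $\cA$ is fully dynamic: each update can be reversed by at most a constant number of inverse updates, so the asymptotic operation count is unchanged.

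Counting operations: the preprocessing on $G$ uses $p(n_1,n_2)$ time as required; then for each of the $n_3$ vectors we perform $O(k_2)$ Phase~2 operations (including their reversal), and for each of the $n_1 n_3$ pairs $(i,t)$ we perform $O(k_3)$ Phase~3 operations. The total is $O(k_2 n_3 + k_3 n_1 n_3)$ updates/queries to $\cA$, matching the bound claimed by the lemma. Correctness is immediate from the hypothesis: each triple (Phase~1 on $M$, Phase~2 on $v^t$, Phase~3 on $i$) correctly returns $(Mv^t)_i$ via $\cA$.

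The only subtle point is ensuring that the state of $G$ is properly restored between simulated phases, which is why the hypothesis that $\cA$ is fully dynamic is essential; no routine calculation is involved beyond the bookkeeping above, so I do not expect a genuine obstacle in executing this plan.
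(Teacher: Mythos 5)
Your proposal is correct and matches the argument the paper intends (the paper only sketches it as ``run many instances of Phase~2 and Phase~3''): preprocess once via $\cA$, simulate Phase~2 for each of the $n_3$ vectors and Phase~3 for each of the $n_1 n_3$ index--vector pairs, and restore the state between instances by undoing updates, which is exactly the rollback trick the paper uses elsewhere (e.g., in \Cref{cor:high query}) and relies on $\cA$ being fully dynamic. The operation count $O(k_2 n_3 + k_3 n_1 n_3)$ is accounted for in the same way, so no gap remains.
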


\begin{corollary}
	\label{amortized from multiphase}
	Suppose that one can solve the multiphase problem with parameters $n_1,n_2$, where $n_1=n_2^\gamma$, by running $\cA$ on $G$ of size $s(n_1,n_2)$ using $\poly(n_1,n_2)$ preprocessing time, and then using $k_i$ updates/queries on phase $i$, for $i \in {2,3}$.
	Then \Cref{oMv hard} implies that $\cA$ cannot maintain an object $G$ of size $s(n_1,n_2)$ with polynomial preprocessing time and $\oo(\min\{ \frac{n_1 n_2}{k_2}  , \frac{n_2}{k_3}  \})$ amortized update and query time.
\end{corollary}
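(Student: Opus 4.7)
The plan is to derive a contradiction with \Cref{general oMv hard}, which (together with \Cref{oMv hard}) rules out $\oo(n_1 n_2 n_3)$-time algorithms for $\goMv$ with polynomial preprocessing. Assume for contradiction that $\cA$ maintains $G$ of size $s(n_1,n_2)$ with polynomial preprocessing time and amortized update/query time $u$ satisfying both $k_2 \cdot u = \oo(n_1 n_2)$ and $k_3 \cdot u = \oo(n_2)$ (this is the content of $u = \oo(\min\{n_1 n_2/k_2,\, n_2/k_3\})$). The goal is to use $\cA$ to build an algorithm $\cB$ for $\goMv$ with parameters $n_1, n_2, n_3$ that runs in polynomial preprocessing and $\oo(n_1 n_2 n_3)$ computation time.

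The construction is direct: the hypothesis of the corollary supplies a reduction from multiphase to $\cA$ using polynomial preprocessing and $k_2$ (resp.\ $k_3$) operations in phase~2 (resp.\ phase~3). Feeding this reduction into \Cref{lem:multiphase reduc2} yields a $\goMv$ algorithm $\cB$ that preprocesses $\cA$ in polynomial time and then performs a total of $T = O(k_2 n_3 + k_3 n_1 n_3)$ update/query operations on~$\cA$. Choose $n_3$ to be a sufficiently large polynomial in $n_1,n_2$ so that $T \geq s(n_1,n_2)$; then the amortized guarantee bounds the total operation cost by $O((T + s(n_1,n_2)) \cdot u) = O(T \cdot u)$. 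Using the two bounds on $u$, we get $k_2 n_3 \cdot u = n_3 \cdot (k_2 u) = \oo(n_1 n_2 n_3)$ and $k_3 n_1 n_3 \cdot u = n_1 n_3 \cdot (k_3 u) = \oo(n_1 n_2 n_3)$, so the overall running time is $\oo(n_1 n_2 n_3)$. Since $\cB$'s preprocessing is polynomial and the promise $n_1 = n_2^\gamma$ is inherited from the hypothesis, this contradicts \Cref{general oMv hard}.

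The one subtle point is the handling of amortization: the amortized bound of $\cA$ is only effective once the number of operations is at least the initial object size $s(n_1,n_2)$, so we exploit the fact that \Cref{general oMv hard} forbids $\oo(n_1 n_2 n_3)$ for \emph{every} polynomially large $n_3$, and simply pick $n_3$ large enough that $T$ dominates $s(n_1,n_2)$. Beyond this, the argument is a mechanical composition of \Cref{lem:multiphase reduc2} with the amortized-time assumption, so no additional combinatorial ideas are required.
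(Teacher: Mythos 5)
Your proposal is correct and takes essentially the same route as the paper: compose the hypothesized multiphase reduction with \Cref{lem:multiphase reduc2} to obtain a $\goMv$ algorithm performing $O(k_2 n_3 + k_3 n_1 n_3)$ operations on $\cA$, multiply by the assumed amortized bound, and contradict \Cref{general oMv hard}. Your additional care in choosing $n_3$ polynomially large so that the amortized guarantee genuinely bounds the total cost is a detail the paper's one-line proof glosses over (though it handles the same issue explicitly in analogous corollaries such as \Cref{cor:high query} and \Cref{corr:approx diam}), and it only strengthens the argument.
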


\begin{proof}
	Suppose that $\cA$ has such an amortized update/query time. Then, by \Cref{lem:multiphase reduc2}, one can solve $\goMv$ with parameters $n_1,n_2,n_3$ using $\poly(n_1,n_2)$ preprocessing time, and computation time 
	$\oo(\min\{ \frac{n_1 n_2}{k_2}  , \frac{n_2}{k_3}  \} \times O( k_2 n_3 + k_3 n_1 n_3) = \oo(n_1 n_2 n_3)$. This contradicts \Cref{oMv hard} by \Cref{general oMv hard}.
\end{proof}

\paragraph{Example.}
It is shown in \cite{Patrascu10} that, given a fully dynamic algorithm
$\cA$ for subgraph connectivity that runs on a graph with $O(n_{1}+n_{2})$ vertices,
the multiphase problem with parameters $n_1,n_2$ can be solved using $k_{2}=n_{2}$ operations in
phase 2 and $k_{3}=1$ operations in phase 3. 
By \Cref{amortized from multiphase} and by setting $n_{1}=n_{2}=n$, the amortized cost of $\cA$ on a graph with $\Theta(n)$ vertices
cannot be $\oo(n)$ unless the $\oMv$ conjecture fails.
Note however that this lower bound is subsumed by the one we give in \Cref{cor:high query}.

\subsection{Open Problems}\label{sec:open}

Of course it is very interesting to settle the \oMv conjecture. 
Besides this, there are still many problems for which this work does not provide tight lower bounds, and it is interesting to prove such lower bounds based on the \oMv or other reasonable conjectures.

\paragraph{Minimum Cut.} 
Thorup and Karger \cite{ThorupK00} presented a $(2 + o(1))$-approximation algorithm with polylogarithmic amortized update time. Thorup \cite{Thorup07mincut} showed that in $\tilde O(\sqrt{n})$ worst-case update time the minimum cut can be maintained exactly when the minimum cut size is small and $(1+\epsilon)$-approximately otherwise. Improving this result using amortization is mentioned as a major open problem in \cite{Thorup07mincut}. 
Very recently Fakcharoenphol~et~al. \cite{FakcharoenpholKNSS14} showed some related hardness results (e.g., for some subroutine used in Thorup's algorithm). 
However, currently there is no evidence that the minimum cut cannot be maintained in polylogarithmic update time. 
In fact, it is not even known if polylogarithmic update time is possible or impossible for a key subroutine in Thorup's algorithm called {\em min-tree cut}, where we are given edge updates on a graph and its spanning tree and have to maintain the minimum cut among the cuts obtained by removing one edge from the spanning tree. We believe that understanding this subroutine is an important step in understanding the dynamic minimum cut problem.

\paragraph{Approximation Algorithms for Non-Distance Problems.} In this paper we provide hardness results for several approximation algorithms for distance-related problems. However, we could not extend the techniques to non-distance graph problems in undirected graphs such as approximating maximum matching, minimum cut, maximum flow, and maximum densest subgraph.

\paragraph{Total Update Time for Partially Dynamic Algorithms.} While there are many hardness results in the partially dynamic setting in this and previous work, quite a few problems are still open. 
Of particular interest are problems that are known to be easy when one type of updates is allowed but become challenging when the other type of updates is allowed. 
For example, the single-source reachability problem can be solved in $O(m)$ total update time in the incremental setting \cite{Italiano86} but the best time in the decremental setting is still larger than $mn^{0.9}$ (e.g., \cite{HenzingerKNstoc14,HenzingerKN-ICALP15}). This is also the case for the minimum cut problem where the incremental setting can be $(1+\epsilon)$-approximated in $\tilde O(m)$ total update time \cite{Henzinger97} while the current best decremental $(1+\epsilon)$-approximation algorithm requires $\tilde O(m+n\sqrt{n})$ total update time \cite{Thorup07mincut}, and the topological ordering problem which is trivial in the decremental setting but challenging otherwise (e.g., \cite{HaeuplerKMST12,BenderFG09}).
Since known hardness techniques -- including those presented in this paper -- usually work for both the incremental and the decremental setting, proving non-trivial hardness results for the above problems seems to be challenging.

\paragraph{Worst-Case Update Time.} While there are fully dynamic algorithms with polylogarithmic {\em amortized} update time for many problems, not much is known for {\em worst-case} update time. The only exception that we are aware of is the connectivity problem (due to the recent breakthrough of Kapron~et~al.~\cite{KapronKM13}). Other basic graph problems, such as minimum spanning tree, 2-edge connectivity, biconnectivity, maximum matching, and densest subgraph are not known to have polylogarithmic worst-case update time. A polynomial hardness result for worst-case update time for these problems based on a natural assumption will be very interesting. 
The challenge in obtaining this is that such a result must hold only for the worst-case update time and not for the amortized one. Such results were published previously (e.g., those in \cite{AbboudW14,Patrascu10,KopelowitzPP14}), but most of these results are now known to hold for amortized update time as well assuming \oMv and SETH (some exceptions are those for partially dynamic problems).

\paragraph{Deterministic Algorithms.} Derandomizing the current best randomized algorithms is an important question for many problems, e.g., approximate decremental single-source and all-pairs shortest paths \cite{Bernstein13,HenzingerKNFOCS14} and worst-case connectivity and spanning tree \cite{KapronKM13}.
This is important since deterministic algorithms do not have to limit the power of the adversary generating the sequence of updates and queries.
Proving that derandomization is impossible for some problems will be very interesting. The challenge is that such hardness results must hold only for deterministic algorithm and not for randomized algorithms.

\paragraph{Trade-off between Query and Update Time} In this paper we present hardness results with a trade-off between query and update time for several problems. Are these hardness results tight? This seems to be a non-trivial question since not much is known about the upper bounds for these problems. A problem for which it is particularly interesting to study this question is the subgraph connectivity problem, since it is the starting point of many reductions that lead to hardness results with a trade-off.
In this paper, we show that for any $ 0 < \alpha < 1 $ getting an $O(m^\alpha)$ update time requires a query time of $\Omega(m^{1-\alpha})$. This matches two known upper bounds in \cite{Duan10,ChanPR11} when $\alpha=4/5$ and $\alpha=2/3$. It is reasonable to conjecture that there is a matching upper bound for all $ 0 < \alpha < 1 $; however, it is not clear if this is true or not.

\section{Acknowledgement}
D.~Nanongkai would like to thank Parinya Chalermsoomsook, Bundit Laekhanukit, and Virginia Vassilevska Williams for comments. T.~Saranurak would like to thank Markus Bl\"aser and Gorav Jindal for discussions. 

\printbibliography[heading=bibintoc] 

\part*{Appendix}
\appendix
\section{Conjectures (from \cite{AbboudW14})}\label{sec:conjectures}

\begin{conjecture}
[No truly subquadratic 3SUM (3SUM)]In the Word RAM model with words of $O(\log n)$
bits, any algorithm requires $n^{2-o(1)}$ time in expectation to
determine whether a set $S\subset\{-n^{3},\dots,n^{3}\}$ of $|S|=n$
integers contains three distinct elements $a,b,c\in S$ with $a+b=c$.
\end{conjecture}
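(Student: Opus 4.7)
The statement is a conjecture rather than a theorem, and indeed unconditionally lower-bounding the running time of any Word-RAM algorithm by $n^{2-o(1)}$ is well beyond current techniques. So my ``plan'' is really a description of what a serious attack would look like and why it would be very hard to complete. I would not try to prove it outright; instead I would try to prove it in a restricted model and then try to lift that bound.

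The first step of the plan is to establish the lower bound in a weaker model where unconditional results are possible. The most natural target is the $k$-linear decision tree model: the algorithm can only branch on the sign of an integer linear combination of at most $k$ input elements. For $k = 3$ one already knows an $\Omega(n^{3/2})$ lower bound (Grønlund--Pettie, using Fredman-style topological arguments), but the current best upper bound in that model matches it, so even there one is stuck. I would attempt to get a stronger bound in the $2$-linear decision tree model, and to do so I would build a family of $3$SUM input distributions whose YES/NO decision boundary has extremely high complexity in the arrangement of hyperplanes $a_i + a_j = a_k$, then use a counting argument on the number of cells separated by any low-depth tree.

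The second step would be to lift the decision-tree bound to the full Word RAM model with $O(\log n)$-bit words. This is the hard part, because Word RAM algorithms can exploit word-level parallelism in ways that linear decision trees cannot: the current best upper bound is already $n^2 / \mathrm{polylog}(n)$, using tricks like hashing and bit-packed table lookups. I would try to argue, perhaps via a cell-probe or communication-complexity formulation, that any short algorithm must perform $\Omega(n^{2-o(1)})$ ``essentially linear'' tests and that word-level parallelism cannot reduce this count by more than polylogarithmic factors. A realistic intermediate milestone would be to rule out $O(n^2/2^{(\log n)^{1/2+\varepsilon}})$ algorithms for $3$SUM.

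The main obstacle is that there is no known technique that gives unconditional polynomial lower bounds for any problem in $\mathsf{P}$ in the Word RAM model; the best we have are essentially information-theoretic arguments (comparison sorts, algebraic decision trees) that break down once the algorithm is allowed arbitrary word operations. Any proof of this conjecture would therefore constitute a fundamental breakthrough in computational complexity, which is precisely why it is taken on faith in the fine-grained complexity literature (and why the rest of this paper reduces \emph{to} this conjecture rather than attempting to prove it).
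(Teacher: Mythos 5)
You are right that this statement is not a theorem of the paper at all: it is the standard 3SUM hardness conjecture, reproduced in the appendix (following Abboud--Vassilevska Williams) purely as a hypothesis that other works reduce from, and the paper offers no proof of it -- so your assessment matches the paper's treatment exactly. (The only quibble, immaterial here, is your attribution in the decision-tree discussion: the known $\Omega(n^2)$ lower bound is Erickson's for $3$-linear decision trees, while Gr\o nlund--Pettie gave subquadratic \emph{upper} bounds using $4$-linear queries.)
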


\begin{conjecture}
[No truly subcubic APSP (APSP)] There is a constant $c$, such that in the
Word RAM model with words of $O(\log n)$ bits, any algorithm requires
$n^{3-o(1)}$ time in expectation to compute the distances between
every pair of vertices in an $n$ node graph with edge weights in
$\{1,...,n^{c}\}$.
\end{conjecture}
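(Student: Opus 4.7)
The APSP conjecture, as stated, is a long-standing open problem in fine-grained complexity rather than a theorem awaiting proof; I therefore do not expect to produce an unconditional proof, and my plan is explicitly a meta-plan for where one could in principle make progress. The natural first reduction is to min-plus matrix multiplication: by the classical Fischer--Meyer equivalence, APSP on a graph with weights in $\{1,\dots,n^c\}$ admits a truly-subcubic algorithm if and only if the min-plus product of $n\times n$ matrices with entries in $\{1,\dots,n^c\}$ does, so it suffices to establish an $n^{3-o(1)}$ lower bound for min-plus product in the Word RAM model.

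Given this reduction, the two credible lines of attack are algebraic and structural. On the algebraic side, I would attempt to lift the known $\Omega(n^3)$ bounds for min-plus product in restricted tropical circuit models (where additive cancellation is unavailable) into a model closer to the Word RAM, in the spirit of the Bl\"aser-style algebraic lower bound for \oMv that the excerpt cites. On the structural side, one would appeal to the web of subcubic-equivalent problems assembled by Vassilevska Williams and Williams---negative triangle, radius, second shortest path, metricity, and others---to argue that any refutation of APSP would simultaneously refute a large family of independently studied hardness assumptions, which constitutes strong indirect evidence. One could also try to relate APSP downward to \oMv itself, in the hope that a reduction in the reverse direction would let the $\oMv$ machinery developed in this paper transfer lower bounds back to APSP.

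The principal obstacle is the same one that has kept APSP a conjecture for decades: no existing technique yields an $n^{3-o(1)}$ lower bound on the Word RAM for any natural problem. Tropical circuit bounds do not transfer, because the Word RAM permits subtraction and random access; cell-probe bounds top out far below cubic; and circuit-complexity barriers (rigidity, natural proofs) block direct combinatorial attacks. Consequently, the honest proposal is to treat APSP axiomatically---exactly as this paper treats the $\oMv$ conjecture---and to accumulate restricted-model lower bounds and subcubic equivalences as indirect evidence, rather than to attempt a direct proof beyond the current reach of complexity theory.
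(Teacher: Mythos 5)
Your proposal is correct in the only sense available here: the statement is one of the background conjectures the paper lists (in the appendix, reproduced from the Abboud--Vassilevska Williams framework) purely for reference, and the paper itself offers no proof of it, treating it axiomatically exactly as you propose to. Your meta-discussion of min-plus equivalence and restricted-model evidence is reasonable context, but no gap analysis applies since neither you nor the paper attempts an actual proof.
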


\begin{conjecture}
[Strong Exponential Time Hypothesis (SETH)] For every $\epsilon>0$,
there exists a $k$, such that SAT on $k$-CNF formulas on $n$ variables
cannot be solved in $O^{*}(2^{(1-\epsilon)n})$ time, where $O^{*}(\cdot)$
hides polynomial factor.
\end{conjecture}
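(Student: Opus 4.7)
The statement in question is the Strong Exponential Time Hypothesis, which appears in the paper's appendix listing the standard conjectures invoked (alongside 3SUM and APSP) to ground its hardness results. It is not a claim the paper undertakes to prove; it is quoted verbatim as an \emph{assumption}. So in honesty any "proof plan" I sketch has to acknowledge that SETH is an open problem in complexity theory, and a proof would be a breakthrough: it immediately implies $\mathrm{P}\neq\mathrm{NP}$ (since SAT would require exponential time), and in fact gives a quantitative, algorithm-independent lower bound on a concrete NP-complete problem, which is strictly stronger.

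That said, here is how I would organize a genuine attack. First, I would reformulate the statement in the cleanest way: let $s_k = \inf\{\delta : k\text{-SAT admits an }O^{*}(2^{\delta n})\text{-time algorithm}\}$ and let $s_\infty = \lim_{k\to\infty} s_k$. The existential quantifier in the statement is equivalent to $s_\infty = 1$, so the goal is to rule out any $\epsilon>0$ and any uniform algorithmic speedup that persists as the clause width grows. Second, I would catalogue the known upper bounds (Schöning's random walk, PPSZ, covering codes, width reduction, the polynomial method) and observe the common shape $s_k \le 1 - c/k$ for some constant $c$; this is consistent with SETH, and a successful proof would need to argue that \emph{no} improvement on this shape is possible even by future methods.

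The main obstacle, and the reason this is (and likely will remain) listed as a conjecture, is that unconditional lower bounds of the form "no $O^{*}(2^{(1-\epsilon)n})$ algorithm exists" are completely beyond current techniques. Diagonalization gives us time hierarchy theorems but only separations at the level of $o$ versus $O$; circuit lower bounds in the natural proofs regime do not yield constant-factor-in-the-exponent savings; communication and algebraic barriers preserve at most polynomial factors. A proof of SETH would need a qualitatively new lower-bound method capable of ruling out arbitrary algorithms, not just restricted models.

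Consequently my proposal is not to prove SETH but to treat it as the axiom the paper uses it as. Operationally, the only productive ``proof direction'' I can point at is contrapositive: one would monitor algorithmic progress on $k$-SAT and show that pushing $s_k$ below $1 - \epsilon$ for all sufficiently large $k$ would either require Strassen-like algebraic breakthroughs or violate other widely believed assumptions (e.g., refute the $\oMv$ conjecture via known SETH-to-$\oMv$ style reductions, although such reductions currently run the other way). In short: I would state the conjecture, recall the evidence, and defer the proof to a future century.
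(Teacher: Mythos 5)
You are right that this statement is not proved in the paper: it is the Strong Exponential Time Hypothesis, reproduced verbatim in the appendix (following Abboud and Vassilevska Williams) purely as an assumption underlying some hardness results, and the paper offers no proof of it. Your treatment — recognizing it as an open conjecture, recalling the evidence, and declining to prove it — matches the paper's own handling, so there is nothing further to compare.
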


\begin{conjecture}
[No almost linear time triangle (Triangle)] There is a constant $\delta$ > 0,
such that in the Word RAM model with words of $O(\log n)$ bits, any
algorithm requires $m^{1+\delta-o(1)}$ time in expectation to detect
whether an $m$ edge graph contains a triangle.
\end{conjecture}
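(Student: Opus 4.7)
The statement is itself a standing conjecture in fine-grained complexity, so my ``proof proposal'' is really a plan to derive it conditionally from an even more fundamental hypothesis; I would not expect to settle it outright.

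The first route I would pursue is a reduction from the combinatorial BMM conjecture. Given $n \times n$ Boolean matrices $A$ and $B$, build a tripartite graph on vertex sets $U_1, U_2, U_3$ of size $n$ with $A$-edges placed between $U_1$ and $U_2$ and $B$-edges placed between $U_2$ and $U_3$; then $(AB)_{ik} = 1$ iff some vertex $v_j \in U_2$ closes a triangle with $u_i \in U_1$ and $w_k \in U_3$. Using a ``detect-and-peel'' loop---find a triangle, record the witness pair $(i,k)$, remove just enough edges to kill the witness, recurse---an $O(m^{1+\delta})$ triangle detector would list all $n^2$ entries of $AB$ in time $\tilde{O}(n^{2+2\delta})$. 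For $\delta < 1/2$ this contradicts the combinatorial BMM conjecture, and by the equivalence argument in \Cref{sec:discussion} also the \oMv conjecture under the Strassen-like interpretation. This already gives the Triangle conjecture with $\delta = 1/2 - o(1)$, but only against combinatorial algorithms.

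The main obstacle is upgrading this to a bound that holds against algebraic algorithms as well. Fast matrix multiplication solves triangle detection in $O(m^{2\omega/(\omega+1)})$ time, which equals $O(m^{4/3})$ if $\omega = 2$; since only $\omega \geq 2$ is unconditionally known, no reduction from BMM (itself merely conjectured hard) can force $\delta$ larger than the gap between $2$ and the true value of $\omega$. I would next try to route through SETH or \oMv directly, but neither route closes: \oMv is a \emph{dynamic} assumption, and even careful amortization across the online sequence of vectors yields only $\Omega(m)$ work per static triangle instance, which is insufficient because a single triangle query carries only one bit; SETH-based reductions to $k$-clique detection typically lose polynomial factors and do not sharpen in the sparse regime. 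A genuinely new idea, such as an algebraic/communication-complexity lower bound that separates triangle detection from matrix multiplication, would be required.

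In short, the plan---reduce from combinatorial BMM via the tripartite encoding---produces a clean conditional theorem for combinatorial algorithms, but the fundamental obstacle is the absence of any known model-independent method to force super-linear triangle detection time against algebraic algorithms. This is precisely why ``No almost linear time triangle'' is taken as an independent hypothesis in \Cref{sec:conjectures} rather than derived from the other assumptions catalogued there.
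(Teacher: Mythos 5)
You have correctly recognized that this statement is not proved anywhere in the paper: it is one of the conjectures reproduced from Abboud and Vassilevska Williams in \Cref{sec:conjectures} solely so that prior Triangle-based hardness results can be compared with the new \oMv-based ones, so there is no proof in the paper to measure your attempt against, and treating it as an independent hypothesis rather than a theorem is exactly the paper's stance.

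That said, one step of your conditional sketch would not go through as written. A detect-and-peel loop that invokes an $O(m^{1+\delta})$-time detector once per discovered witness may run the detector $\Theta(n^2)$ times on a graph with $m=\Theta(n^2)$ edges, costing $\tilde{O}(n^{4+2\delta})$ rather than the claimed $\tilde{O}(n^{2+2\delta})$; to convert subcubic triangle detection into subcubic combinatorial BMM one needs the block-decomposition reduction of Vassilevska Williams and Williams \cite{WilliamsW10}, which splits the tripartite graph into many small triangle instances instead of repeatedly peeling witnesses from the full instance. With that repair, your intended conclusion --- that the combinatorial BMM conjecture rules out combinatorial triangle detection in $m^{3/2-\epsilon}$ time on dense graphs, but says nothing against algebraic algorithms, which already achieve $O(m^{2\omega/(\omega+1)})$ --- matches the known state of affairs, and your closing observation that neither \oMv nor SETH is known to imply the Triangle conjecture is consistent with how the paper treats it: \Cref{fig:result summary1} and the discussion in \Cref{sec:discussion} deliberately keep certain Triangle-based bounds as not subsumed by the \oMv-based results.
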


\begin{conjecture}
[No truly subcubic combinatorial BMM (BMM)] In the Word RAM model with words
of $O(\log n)$ bits, any \emph{combinatorial} algorithm requires
$n^{3-o(1)}$ time in expectation to compute the Boolean product of
two $n\times n$ matrices. \end{conjecture}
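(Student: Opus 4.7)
This is a longstanding open conjecture rather than a theorem with a known proof, and a genuine resolution would constitute a major breakthrough; what follows is the approach I would attempt together with the obstacles that make it so hard. The first step would be to formalize what \emph{combinatorial algorithm} actually means, since the statement as written is not mathematically precise. Following Ballard et al., a natural choice is to restrict attention to Strassen-like algorithms: those that recursively decompose both matrices into constant-size blocks and apply a fixed base-case bilinear procedure. Within this restricted model, the hope would be to reduce the question to a rank (or border-rank) lower bound on the $n \times n \times n$ matrix-multiplication tensor, exploiting the fact that combinatorial computations cannot profit from algebraic cancellation.

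My concrete plan would be a potential-function / adversary argument. I would define a potential $\Phi$ that tracks how many of the $n^{3}$ witnessing triples $(i,j,k)$ with $A_{ik} \wedge B_{kj} = 1$ have been ``certified'' by the algorithm's state. A combinatorial algorithm has access only to bit-level $\wedge$, $\vee$, and word-parallel versions thereof, so one would hope to prove that each elementary step certifies at most $O(\polylog n)$ triples, summing to an $\Omega(n^{3-o(1)})$ total-work bound. A parallel line of attack would be to reduce from triangle detection, which is known to be sub-cubic equivalent to BMM, provided a matching conditional lower bound for triangle detection could be established in the same restricted model.

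The fundamental obstacle --- and the reason this has stood open for decades --- is that the boundary between ``combinatorial'' and ``algebraic'' is not sharp in any mathematically exploitable way. Word-parallel speedups (Four Russians, Bansal--Williams) already shave non-trivial logarithmic factors while remaining firmly combinatorial, so any putative lower-bound argument must be robust to such tricks without accidentally excluding them. An unconditional polynomial lower bound would simultaneously imply circuit lower bounds well beyond current reach. Consequently, any realistic ``proof'' would have to restrict to a carefully circumscribed model such as Strassen-like algorithms with specified word-parallel primitives, and even there the central technical step --- ruling out every clever block decomposition that achieves a subcubic base-case exponent --- appears wide open with present techniques. This is precisely why the authors (and the wider literature) treat BMM as a conjecture to start from rather than a target to be attacked directly.
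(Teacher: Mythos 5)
You are right that there is nothing to prove here: the statement is one of the standard hardness hypotheses listed in the appendix (taken from Abboud and Vassilevska Williams) and the paper offers no proof of it --- it is used only as an assumption, and indeed \Cref{sec:discussion} stresses that ``combinatorial algorithm'' is not even a well-defined term, exactly the issue you raise. Your discussion of the Strassen-like formalization of Ballard et al.\ matches the paper's own framing of how one could make the conjecture precise, so your assessment is consistent with how the paper treats this statement; no comparison of proofs is possible or needed.
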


\end{document}